\documentclass[11pt,a4paper]{article}
\usepackage[margin=2.5cm]{geometry} 

\usepackage{amsfonts,bm}
\usepackage{textcomp}
\usepackage{wrapfig}
\usepackage{algpseudocode}
\usepackage[linesnumbered,ruled,vlined]{algorithm2e}
\usepackage{blkarray}
\usepackage{listings}
\usepackage{inconsolata}
\usepackage{makecell}
\usepackage{pifont}
\usepackage{siunitx}
\usepackage{nicefrac}
\usepackage{xparse}
\usepackage[numbers,sort&compress]{natbib}

\usepackage{amsmath,amssymb,amsthm}
\usepackage{graphicx}
\usepackage{enumitem}
\usepackage{booktabs}
\usepackage{multirow}
\usepackage{hyperref}
\usepackage{caption}
\usepackage{subfig}
\usepackage{tabularx}
\usepackage{xcolor} 
\usepackage{authblk}

\usepackage{array} 

\newcolumntype{C}{>{\centering\arraybackslash}X}
\newcolumntype{L}{>{\raggedright\arraybackslash}X}
\newcolumntype{R}{>{\raggedleft\arraybackslash}X}

\SetCommentSty{mycommfont}

\newcommand{\cmark}{\ding{51}}
\newcommand{\xmark}{\ding{55}}
\newcommand{\pmark}{$\circ$}

\definecolor{codegreen}{rgb}{0,0.6,0}
\definecolor{codegray}{rgb}{0.5,0.5,0.5}
\definecolor{codepurple}{rgb}{0.58,0,0.82}
\definecolor{backcolour}{rgb}{0.95,0.95,0.92}

\lstdefinestyle{mystyle}{
  backgroundcolor=\color{backcolour},
  commentstyle=\color{codegreen},
  keywordstyle=\color{magenta},
  numberstyle=\tiny\color{codegray},
  stringstyle=\color{codepurple},
  basicstyle=\ttfamily\footnotesize,
  breaklines=true,
  numbers=left,
  numbersep=5pt,
  showstringspaces=false,
  captionpos=b
}
\newcommand{\hl}[1]{#1}

\newtheorem{theorem}{Theorem}
\newtheorem{lemma}{Lemma}

\NewDocumentCommand{\subcaptionbox}{m o o m}{%
  \IfNoValueTF{#2}
    {\subfloat[\centering #1]{#4}}
    {\subfloat[\centering #1]{\begin{minipage}{#2}\centering #4\end{minipage}}}%
}

\title{MineRobot: An Actuator-Centered Kinematic Modeling and Solving Framework for Underground Mining Robots}

\author[1]{Shengzhe Hou}
\author[1,2,*]{Xinming Lu}
\author[1]{Tianyu Zhang}
\author[3]{Changqing Yan}
\author[1]{Xingli Zhang}

\affil[1]{\small College of Computer Science and Engineering, Shandong University of Science and Technology, Qingdao 266590, China}
\affil[2]{\small Shandong Lionking Software Co., Ltd., Tai'an 271000, China}
\affil[3]{\small College of Intelligent Equipment, Shandong University of Science and Technology, Tai'an 271000, China}
\affil[*]{\small Correspondence: luxinming@sdust.edu.cn}

\date{} 
\begin{document}
\maketitle

\begin{abstract}
Underground mining robots are increasingly modeled for planning, operator training, and digital-twin workflows, where reliable actuator-level kinematics is needed to reduce hazardous in situ trials.
Unlike typical open-chain industrial manipulators, representative mining machines are often linear-actuator-driven closed-chain mechanisms with planar four-bar linkages, making reusable kinematic modeling and real-time FK/IK solving challenging.
We present \textit{\hl{MineRobot}}, an actuator-centered framework for modeling and solving the kinematics of this representative mechanism class.
MineRobot introduces the Mining Robot Description Format (MRDF), a domain-specific representation that parameterizes mining-robot kinematics with native semantics for actuators and loop closures.
It then contracts planar four-bar substructures into generalized joints and extracts, for each actuator, an Independent Topologically Equivalent Path (ITEP) classified into four canonical types.
Based on this decomposition, per-type solvers are composed into a sequential forward-kinematics (FK) pipeline, while inverse kinematics (IK) is formulated as a bound-constrained actuator-length optimization solved by a Gauss--Seidel-style update scheme.
By converting coupled closed-chain kinematics into small topology-aware solves, MineRobot reduces robot-specific hand derivations and supports efficient repeated FK/IK computation without treating each query as a full coupled constraint-solving problem.
Experiments on representative underground mining robots demonstrate real-time FK performance and robust IK convergence within the tested operating ranges, supporting the use of MineRobot as an actuator-centered kinematic layer for planning, training, and digital-twin workflows.
\end{abstract}
\textbf{Keywords:} actuator-centered kinematics; forward kinematics; inverse kinematics; closed-chain mechanisms; underground mining robots


\section{Introduction}
\label{sec:intro}

Underground mining robots are increasingly used for excavation, extraction, roof support, haulage, ventilation, and other operations in hazardous and confined environments~\cite{peng2019longwall,ralston2017longwall,deshmukh2020roadheader}.
Because these machines must generate large forces, maintain structural stability, and operate safely under harsh working conditions, many representative platforms rely on hydraulic or pneumatic linear actuators rather than only rotary motors.
Typical examples include roadheaders~\cite{deshmukh2020roadheader,Yan2025roadheader}, longwall hydraulic supports, and shearers~\cite{peng2019longwall,ralston2017longwall} as shown in Figure~\ref{fig:robotimgs}.

\begin{figure}[htbp]
    \centering
    \includegraphics[width=0.5\linewidth]{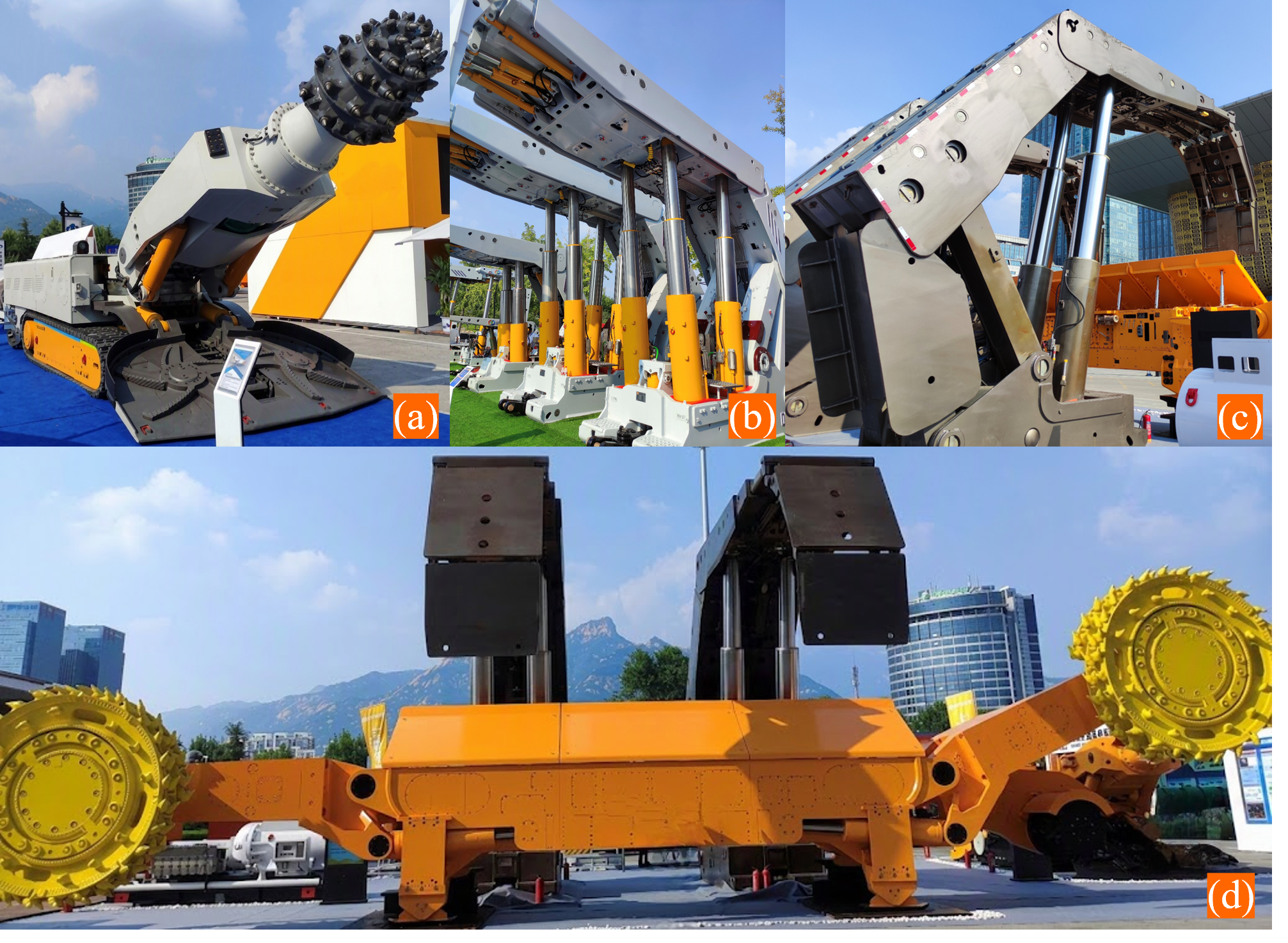}
    \caption{\hl{Representative} mining-robot platforms. (\textbf{a}) Roadheader for tunnel excavation; (\textbf{b},\textbf{c}) hydraulic supports for stabilizing the longwall face; (\textbf{d}) shearer for mineral cutting.}
    \label{fig:robotimgs}
\end{figure} 

Unlike typical open-chain industrial manipulators~\cite{perez2019industrial}, the underground mining machines targeted in this work are often linear-actuator-driven closed-chain mechanisms composed mainly of revolute, prismatic, and fixed joints, with planar four-bar linkages as recurring substructures~\cite{guan2019dynamic,guo2024adaptive,prebil2002synthesis}.
These structures introduce loop-closure constraints and actuator-coupled link motion, making actuator-level forward kinematics (FK) and inverse kinematics (IK) less straightforward than in tree-structured robots~\cite{merlet2006parallel,taghirad2013parallel}.
In FK, actuator lengths must be mapped to configurations that satisfy closed-chain constraints; in IK, target poses must be converted into feasible actuator lengths under stroke limits, including cases where the target lies outside the reachable workspace.
Manual loop-equation derivation is tedious and mechanism-specific, while repeatedly solving the full coupled nonlinear system can be brittle or expensive in planning, training, teleoperation, and digital-twin workflows~\cite{mei2025sensing,ge2020virtual,xie2022framework}.

A reusable actuator-level representation and solver are therefore needed.
Existing virtual-environment or simulation studies for mining robots are usually platform- or application-specific~\cite{ge2020virtual,hou2023interactive,xie2022virtual}, which hinders reuse across different mining robots.
General-purpose robot description formats and robotics frameworks, such as URDF, SDF, MJCF, Drake, and MuJoCo~\cite{quigley2009ros,sdformat,todorov2012mujoco,drake}, provide important foundations, but mining-robot-specific structures are often represented through generic joints, constraints, drives, or external logic.
This creates a gap between the actuator-length variables naturally used by mining equipment and the joint- or constraint-centered formulations commonly used for modeling and solving.

Within this scope, we present \textit{\hl{MineRobot}}, an actuator-centered framework for modeling and solving the kinematics of a representative class of underground mining robots (Figure~\ref{fig:framework}).
The target mechanism class consists of linear-actuator-driven mechanisms composed mainly of revolute, prismatic, and fixed joints, with planar four-bar linkages as recurring substructures.
MineRobot introduces the Mining Robot Description Format (MRDF) to represent links, joints, linear actuators, stroke bounds, redundancy, and loop-closure information in a form consumed directly by the solver pipeline.
It then contracts planar four-bar substructures into generalized joints and extracts an Independent Topologically Equivalent Path (ITEP) for each actuator.
The resulting actuator-level decomposition supports a sequential FK pipeline and an FK-aligned, bound-constrained IK solver over actuator lengths.
We also provide an open-source online interactive demo of MineRobot implemented as a simplified JavaScript \hl{version} 
 (\url{https://github.com/Housz/MineRobot}).

\begin{figure}[htbp]
\centering 
        \includegraphics[width=1.0\linewidth]{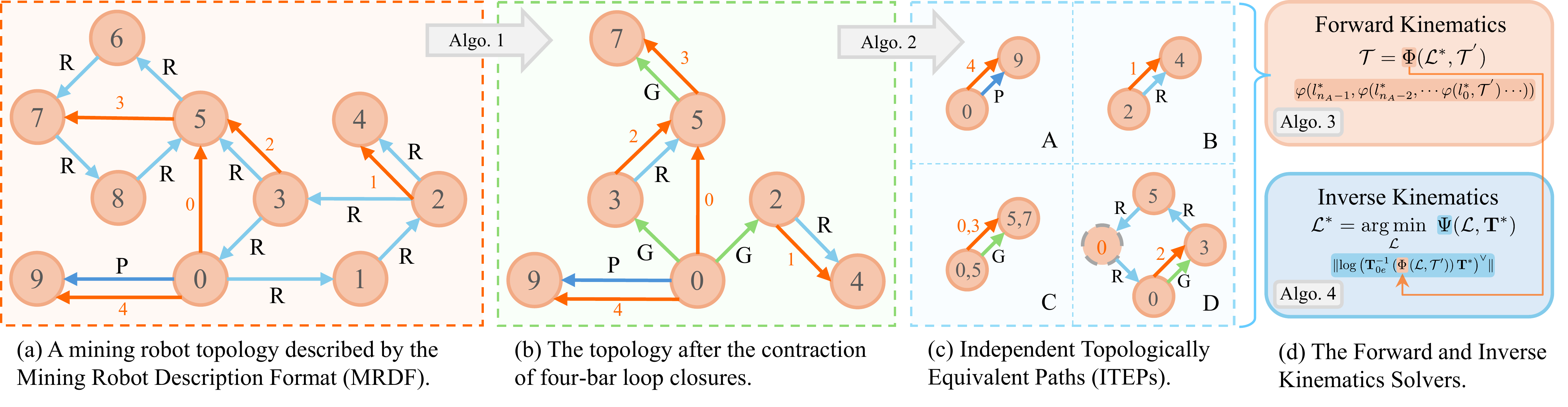}
    \caption{\textbf{\hl{MineRobot}
 } framework.
    (\textbf{a}) Robot topology modeled by the Mining Robot Description Format (MRDF): links (nodes), revolute/prismatic joints (light/dark blue arrows), and linear actuators (orange arrows).
    (\textbf{b}) Four-bar linkages are \emph{\hl{contracted}} into generalized joints (green), after which the joint topology excluding actuators is acyclic.
    (\textbf{c}) For each actuator, we extract its Independent Topologically Equivalent Path (ITEP) and categorize ITEPs into four types.
    (\textbf{d}) Sequentially solving actuator ITEPs yields a forward-kinematics (FK) pipeline. Inverse kinematics is formulated as an FK-aligned optimization problem.}
    \label{fig:framework}
\end{figure}

MineRobot is not intended as a general-purpose framework for arbitrary closed-chain robots.
Its topology construction assumes that planar four-bar linkages can be contracted into generalized joints and that the actuator-free topology is acyclic or satisfies the independent-path conditions required for ITEP extraction.
The present formulation is also limited to the kinematics layer; dynamics, contact, force transmission, hydraulic control, physical calibration, mobile-base motion planning, and field-level decision making should be handled by higher-level modules that use MineRobot's FK/IK routines as an actuator-centered kinematic sublayer.

This paper makes the following contributions:
\begin{itemize}[left=0pt]
    \item \emph{\hl{MRDF.}} A description format that natively models links, joints, and actuators with explicit loop-closure support, forming the basis for actuator-level kinematics computation (Section~\ref{sec:modeling}).
    \item \emph{\hl{Topology pipeline.}} A pipeline comprising four-bar contraction to generalized joints and automatic per-actuator ITEP extraction with a four-type classification (Section~\ref{sec:topo}). 
    \item \emph{\hl{FK solver.}} Four per-type ITEP solvers composed sequentially to realize an actuator-centered forward-kinematics solver for the full robot (Section~\ref{sec:solver}). 
    \item \emph{\hl{IK solver.}} An FK-aligned, bound-constrained inverse-kinematics formulation and a Gauss--Seidel-style optimizer that alternately updates actuator lengths for efficient solution (Section~\ref{sec:solver}).
\end{itemize}

\section{Related Work}\label{sec:related-work}

We review related work from four perspectives: (i) actuator-driven mining robots and closed-chain mechanisms, (ii) robot description formats and actuator semantics, (iii)~kinematics solvers for closed-chain and actuated mechanisms, and (iv) robotics frameworks and applications.

\subsection{Actuator-Driven Mining Robots and Closed-Chain Mechanisms}
Underground mining robots are designed for heavy-load operation in confined, hazardous, and uncertain environments, including longwall mining~\cite{peng2019longwall,ralston2017longwall} and roadheading~\cite{deshmukh2020roadheader}.
Compared with typical industrial manipulators, these robots more often use hydraulic or pneumatic linear actuators~\cite{guan2019dynamic,guo2024adaptive} to obtain high force density, structural stiffness, and safe operation in underground settings.
Hydraulic supports~\cite{guan2019dynamic}, roadheaders~\cite{deshmukh2020roadheader,tian2018kinematic}, shearers, and related equipment commonly contain closed-chain or parallel substructures~\cite{merlet2006parallel}, redundant actuator arrangements~\cite{gosselin2018redundancy}, and planar four-bar linkages~\cite{prebil2002synthesis} that transmit force while constraining motion.
Existing studies have addressed individual mining mechanisms from the perspectives of relative kinematics for coal-mine cutting robots~\cite{liu2024relative}, hydraulic-support attitude adjustment and pose estimation~\cite{ge2020virtual,mei2025sensing,wang2024method}, and mechanism-specific kinematic or structural modeling of hydraulic supports~\cite{guan2019dynamic}.
These works are valuable for specific platforms, but they typically rely on mechanism-specific models or hand-derived equations.
A reusable actuator-centered kinematics framework that captures the common closed-chain structures of multiple underground mining robots remains underexplored.

\subsection{Robot Description Formats and Actuator Semantics}
Robot description formats~\cite{nordmann2014survey,qiu2024describing,ivanou2021robot} act as domain-specific languages that organize geometry, topology, joints, and other information for downstream modeling and computation.
The Unified Robot Description Format (URDF)~\cite{quigley2009ros,tola2023understanding}, originating from the Robot Operating System (ROS)~\cite{macenski2022robot}, has become a widely used standard for robot modeling and has strong support for tree-structured, joint-centered mechanisms.
However, URDF does not natively encode closed-loop mechanisms, linear actuators as first-class kinematic entities, redundant actuator groups, or four-bar abstractions.
Alternative formats and stacks, such as SDFormat (SDF)~\cite{sdformat}, MuJoCo XML (MJCF)~\cite{todorov2012mujoco}, and Drake's modeling pipeline~\cite{drake}, provide broader support for constraints, simulation models, or multibody systems.
Recent URDF extensions, including URDF+~\cite{chignoli2024urdf+} and Extended URDF~\cite{batto2025extended}, also seek to represent parallel mechanisms and loop closures more explicitly.
Nevertheless, these formats are generally intended to be broad and software-stack compatible.
They do not directly expose the mining-robot-specific actuator semantics and topology information required to schedule actuator-centered FK/IK solvers.
MRDF is therefore designed as a kinematics-oriented representation rather than a replacement for general-purpose robot or scene formats: it records links, joints, linear actuators, redundant actuator groups, and loop-closure information in a form that can be consumed directly by the topology-processing and solver pipeline.

\subsection{Kinematics Solvers for Closed-Chain and Actuated Mechanisms}
Forward kinematics (FK) and inverse kinematics (IK) are foundational problems in robotics and articulated mechanism modeling.
For open-chain robots~\cite{siciliano2016robotics}, FK is obtained by propagating transformations along a tree, while IK can be addressed by analytic methods, Jacobian-based methods, nonlinear optimization, or data-driven approaches~\cite{aristidou2018inverse,le2019kinematics}.
Closed-chain and parallel mechanisms~\cite{taghirad2013parallel,merlet2006parallel,staicu2019dynamics} are more difficult because the solution must satisfy loop-closure constraints while handling multiple configurations, singularities, and actuator limits.
A common practice is to derive mechanism-specific closure equations and solve the resulting nonlinear system numerically~\cite{mei2025sensing,wang2024method}.
Recent work on linearly actuated parallel manipulators has similarly derived direct and inverse kinematics for a specific actuator arrangement~\cite{choi2021kinematic}, illustrating the continuing need for mechanism-specific kinematic formulations.
For a single well-defined mechanism, such hand-derived or closed-form analyses can be highly efficient and may provide the best solution for that specific robot.
However, they require expert derivation and implementation for each new mechanism, and the resulting models are usually difficult to reuse across different mining-robot topologies.
MineRobot addresses a different objective: it seeks a reusable actuator-centered formulation for a family of mining robots, rather than a manually optimized derivation for one mechanism.
For IK, feasible targets may be solved locally, whereas targets outside the reachable workspace are often formulated as constrained optimization problems that seek the closest feasible configuration~\cite{zhao1994inverse,erleben2019solving}.
These approaches are general in principle, but they usually treat all loop constraints as a coupled system.
In contrast, MineRobot uses the topology of mining robots to reorganize the problem: four-bar substructures are contracted, an Independent Topologically Equivalent Path (ITEP) is extracted for each actuator, and per-type local solvers are composed into a sequential FK pipeline and an FK-aligned IK optimization.

\subsection{Robotics Frameworks and Applications}
General-purpose robotics frameworks, including Gazebo~\cite{koenig2004gazebo}, MuJoCo~\cite{todorov2012mujoco}, Drake~\cite{drake}, PyBullet~\cite{coumans2021pybullet}, Isaac Sim~\cite{nvidia_isaac_sim}, and the Dynamic Animation and Robotics Toolkit (DART)~\cite{lee2018dart}, provide broad support for robot modeling, multibody computation, dynamics, control, visualization, and testing.
Commercial mechanism-analysis and CAD/multibody tools can also model and analyze complex mechanisms without requiring users to manually derive all governing equations.
These tools are powerful and widely applicable, and they are not the target of replacement by MineRobot.
However, for the mining-robot mechanisms considered here, such workflows are usually organized around mechanism-specific multibody models, mates, constraints, drivers, and simulation setups; repeated actuator-level FK/IK queries may still require additional constraint definitions, external logic, or full coupled-system solves.
This creates a workflow gap between general-purpose robotics frameworks, commercial mechanism-analysis tools, and underground mining robots, where actuator lengths are the natural input/output variables and repeated queries should exploit the structural regularities of the mechanism.
MineRobot addresses this gap by coupling an actuator-centered representation with a topology-aware kinematics pipeline tailored to the dominant closed-chain structures of underground mining robots.
\section{Parametric Modeling of Mining Robots} \label{sec:modeling}

In this section, we formalize a mining robot and its constituent elements—links, joints, and actuators—together with their attributes that collectively encode the robot’s geometric and topological information. We then introduce the Mining Robot Description Format (MRDF), which records this information in a structured form and serves as the primary information carrier of the MineRobot framework. Finally, we outline the matrix-form data structures obtained by parsing MRDF, which will be used to concisely specify the algorithms that follow.

\subsection{Robot Definition}

We consider a mining robot \(\mathcal{R}\) to be composed of links, joints, \hl{and actuators:} 
\begin{equation}
\mathcal{R} = ( \mathcal{L}, \mathcal{J}, \mathcal{A} ),
\end{equation}
where \(\mathcal{L} = \{L_i\}\), \(\mathcal{J} = \{J_i\}\), and \(\mathcal{A} = \{A_i\}\) denote the sets of links, joints, and actuators, respectively. Let \(n_L=|\mathcal{L}|\), \(n_J=|\mathcal{J}|\), and \(n_A=|\mathcal{A}|\). Figure~\ref{fig:joints-and-actuator} illustrates these three categories and their topological relations.

\begin{figure}[htbp]
  \centering
    \includegraphics[width=0.4\linewidth]{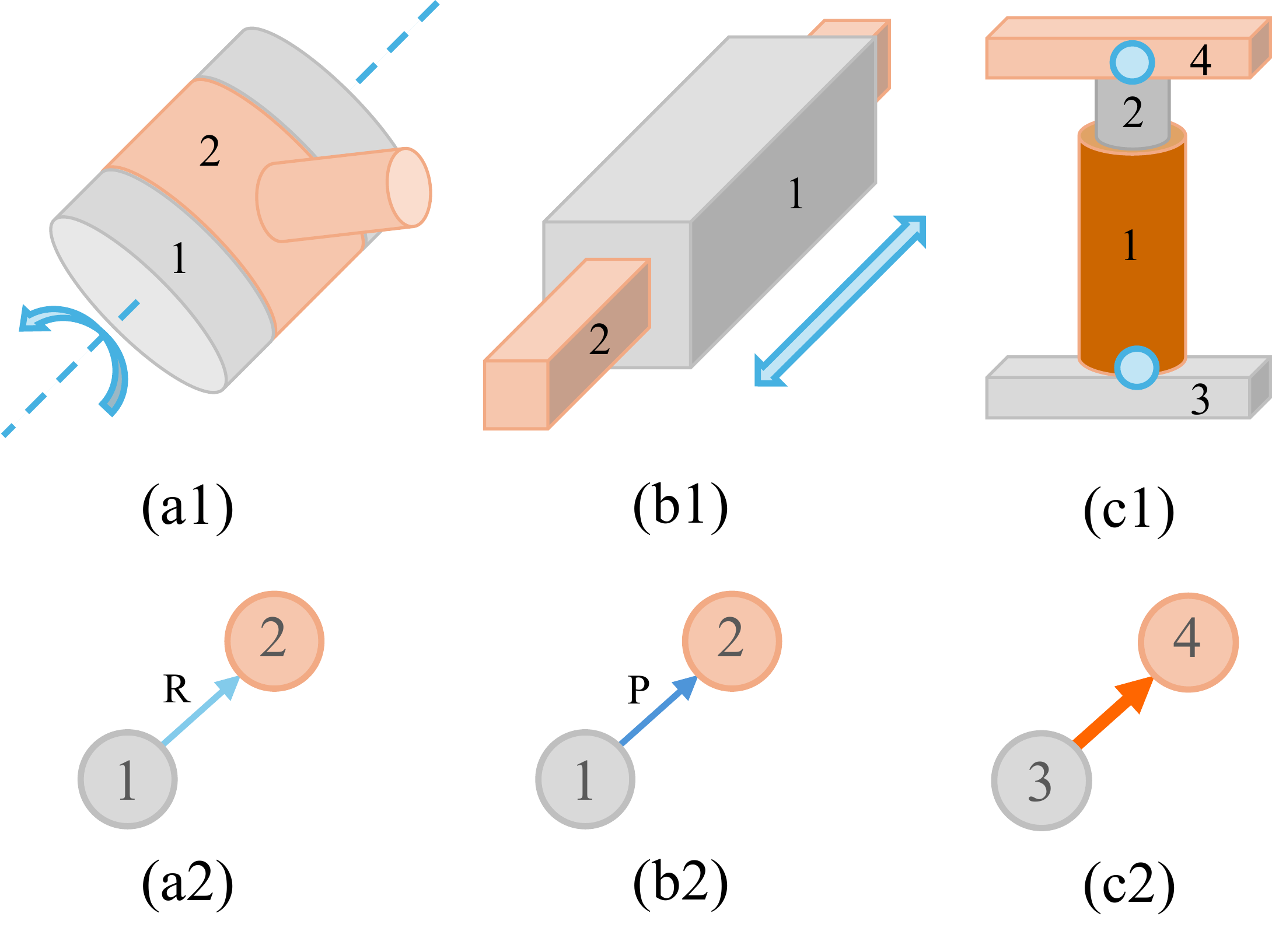}
    \caption{\hl{Illustrations} 
 of links, joints, actuators, and their directed topologies.
    (\textbf{a1},\textbf{a2}) show revolute and prismatic joint examples between links \(L_1\) and \(L_2\), and (\textbf{b1},\textbf{b2}) show their corresponding topological graphs.
    Links are represented as nodes, while revolute and prismatic joints are represented by light-blue and dark-blue directed arrows, respectively.
    (\hl{\textbf{c1},\textbf{c2}}
) shows a linear actuator whose tube \(L_1\) and rod \(L_2\) are mounted to parent links \(L_3\) and \(L_4\) through revolute joints; the actuator is represented by a thick orange directed arrow.}
    \label{fig:joints-and-actuator}
\end{figure}

Based on common underground platforms and industry practice, the mining robots considered in this paper satisfy the following assumptions:
\begin{enumerate}[label=(\roman*),align=parleft,leftmargin=*,labelsep=0.8mm]
  \item \label{asp:1} Actuation is provided by linear actuators, and each actuator redundancy group (defined in \textit{\hl{Actuators}} in this section) corresponds to one degree of freedom (DoF) of the~robot;
  \item \label{asp:2} The robot joints are restricted to revolute, prismatic, and fixed joints;
  \item \label{asp:3} The robot may contain planar four-bar linkages, and after contracting all such four-bars (defined in Section~\ref{sec:fourbar-generalized}), the joint topology becomes acyclic when all actuators are~ignored.
\end{enumerate}

This scope captures the dominant kinematic structures of major underground mining robots. Our contribution is a unified framework that enables automatic topology processing and kinematics computation within this scope, rather than a general-purpose solution intended to cover arbitrary robot mechanisms.

\textbf{\textit{\hl{Links}
}}.
$L_i=(i,\mathbf{T},V)$ denotes a link, modeled as an ideal rigid body with 6-DoF in three-dimensional (3D) space.
The index $i\in\{0,\dots,n_L-1\}$ is the unique identifier; by convention, the base link is $L_0$, whose parent is the world coordinate system. The links are connected in pairs by joints, directionally. 
Topologically, the closer a component is to $L_0$, the smaller its ID is.
We use $\mathbf{T}_i\in \mathrm{SE}(3)$ to denote the transformation of \emph{\hl{link}} $L_i$ relative to its parent.
Figure~\ref{fig:transformations} illustrates the local transformations associated with the directed joint \(J_{ij}\) between \(L_i\) and \(L_j\).
The component $V$ stores the visual model (geometry and its local offset), detailed in Section~\ref{sec:mrdf}.

\textbf{\textit{\hl{Joints}}}.
$J_i = (i, p, c, \tau, \mathbf{a}, \mathbf{T}^o )$ denotes a joint with unique identifier \(i\in\{0,\dots,n_J-1\}\).
A joint is a kinematic constraint that directionally connects a parent link \(L_p\) to a child link \(L_c\). 
For notational convenience we also use \(J_{pc}\) to refer to the joint from \(L_p\) to \(L_c\). 
The joint type \(\tau\in\{P,R,F\}\) specifies prismatic (\(P\)), revolute (\(R\)), or fixed (\(F\)); links connected by a fixed joint can be merged topologically. Figure~\ref{fig:joints-and-actuator}(a1,a2) show examples of \(R\) and \(P\) from \(L_1\) to \(L_2\); their directed topologies are indicated in light/dark blue.

\begin{figure}[htbp]
  \centering
        \includegraphics[width=0.4\linewidth]{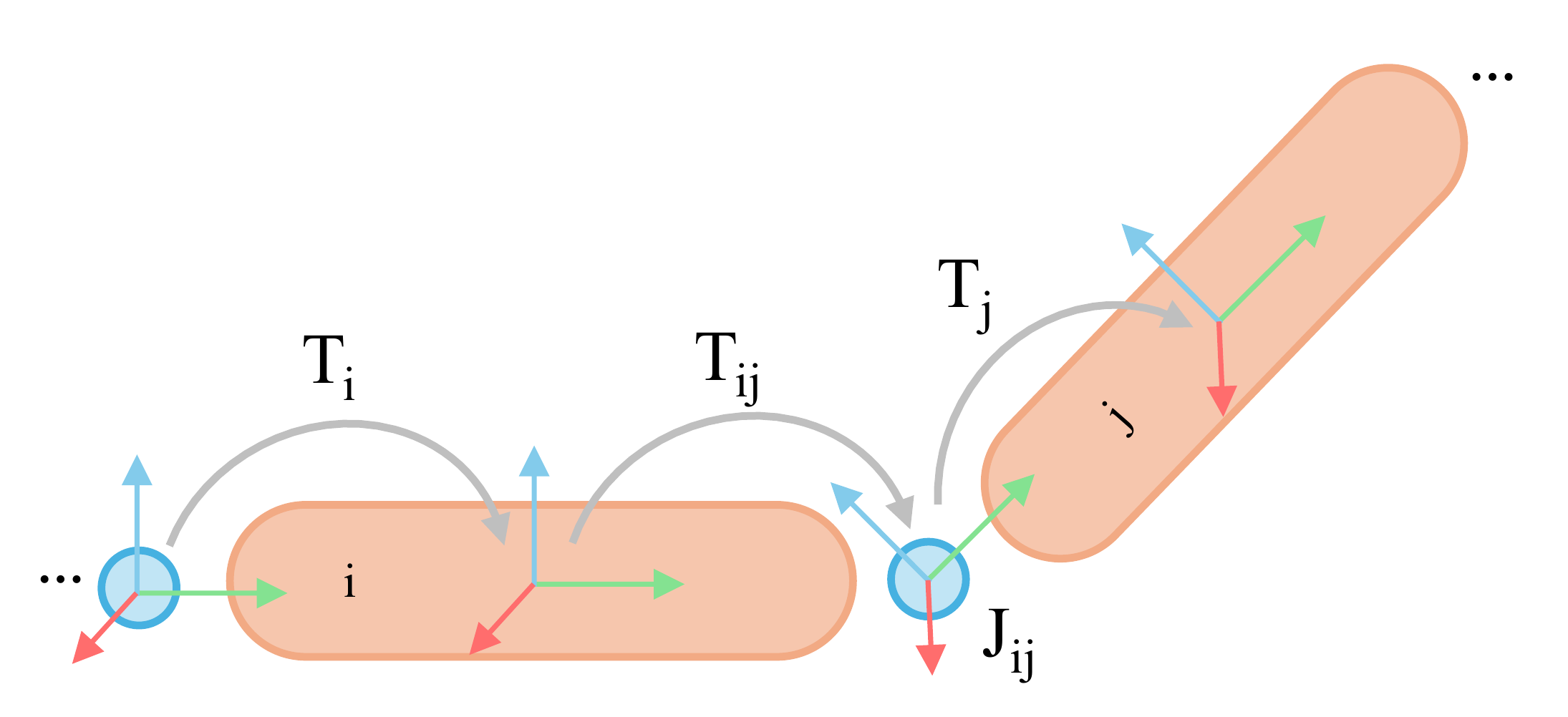}
    \caption{\hl{Transformations} 
 between connected links and joints.
 between connected links and joints.
    Link \(L_i\) and link \(L_j\) are connected by joint \(J_{ij}\).
    The transformation of \(L_i\) relative to its parent joint, the joint-origin transformation of \(J_{ij}\) relative to \(L_i\), and the transformation of \(L_j\) relative to \(J_{ij}\) are denoted by \(\mathbf{T}_i\), \(\mathbf{T}_{ij}\), and \(\mathbf{T}_j\), respectively.}
    \label{fig:transformations}
\end{figure}

$\mathbf{a} \in \mathbb{R}^3$ is a normalized direction vector, which represents the rotation axis of the revolute joint or the direction of the prismatic joint (see the blue dotted line in Figure~\ref{fig:joints-and-actuator}(a1) and the blue arrow in Figure~\ref{fig:joints-and-actuator}(a2)).
The constant \(\mathbf{T}^o\in \mathrm{SE}(3)\) is the joint’s origin transform relative to its parent link \(L_p\). Let \(\theta\) be the joint parameter (angle for \(R\), displacement for \(P\)). 
We adopt the following parent-to-child transform:
\begin{equation}\label{eq:tpc}
\mathbf{T}_{pc}(\theta)=\mathbf{T}(\mathbf{a},\theta)\,\mathbf{T}^o,
\end{equation}
where $\mathbf{T}(\mathbf{a},\theta)=
\begin{cases}
\left[\begin{smallmatrix}\mathbf{I}&\theta\,\mathbf{a}\\ \mathbf{0}^\top&1\end{smallmatrix}\right], & \tau=P,\\
\left[\begin{smallmatrix}\mathbf{R}(\mathbf{a},\theta)&\mathbf{0}\\ \mathbf{0}^\top&1\end{smallmatrix}\right], & \tau=R,
\end{cases}$
and $\mathbf{R}(\mathbf{a},\theta)=\mathbf{I}+\sin\theta\,[\mathbf{a}]_\times+(1-\cos\theta)\,[\mathbf{a}]_\times^2$ (Rodrigues’ formula), with $[\mathbf{a}]_\times$ the skew-symmetric matrix of $\mathbf{a}$.
For \(\tau=F\), \(\mathbf{T}_{pc}=\mathbf{T}^o\).
Figure~\ref{fig:transformations} illustrates the transformation $\mathbf{T}_{ij}$ of the joint $J_{ij}$ between $L_i$ and $L_j$.
Given a path from \(L_s\) to \(L_t\) through intermediate links \(L_{l_0},\dots,L_{l_{k-1}}\) with joints \(J_{s l_0},J_{l_0 l_1},\dots,J_{l_{k-1} t}\), the cumulative transform is
\begin{equation}\label{eq:Tpq}
\mathbf{T}_{s t}=\mathbf{T}_{s l_0}\!\left(\,\prod_{i=0}^{k-2}\mathbf{T}_{l_i l_{i+1}}\,\right)\mathbf{T}_{l_{k-1} t},
\end{equation}
with the product ordered along the path direction.

\textbf{\textit{\hl{Actuators}}}.
\(A_i\ = (i, \mathtt{t}, \mathtt{r}, b, Rd) \) represents an actuator of the robot. 
An actuator produces telescopic motion under hydraulic or pneumatic pressure. 
It consists of a \emph{\hl{tube}} link and a \emph{rod} link:
\(\mathtt{t}=(t,p_t)\), \(\mathtt{r}=(r,p_r)\),
where $t$ and $r$ are the tube/rod link IDs and $p_t$, $p_r$ are their respective parent links. The tube and rod mount to their parents via revolute joints $J_{p_t t}$ and $J_{p_r r}$. 
We depict the actuator topology by an orange arrow from $L_{p_t}$ to $L_{p_r}$ (Figure~\ref{fig:joints-and-actuator}(c1,c2)). 
The component \(b=[\ell_i,u_i]\) specifies the lower and upper bounds of the actuator length \(l_i\), where \(l_i\) is defined as the distance between the two mounting joint origins in world coordinates, i.e., the distance between \(J_{p_t t}\) and \(J_{p_r r}\).

The $Rd$ component is a set of actuator identifiers that represent the redundant actuators of the actuator.
To improve load capacity and stability, multiple actuators may synchronously realize the same motion~\cite{gosselin2018redundancy,merlet2006parallel, maloisel2023optimal}. We define an equivalence relation $\sim$ on $\mathcal{A}$: $A_i\sim A_j$ iff $A_i$ and $A_j$ are mutually redundant. Each equivalence class is a \emph{\hl{redundant actuator group}}. The cardinality of the quotient set \(\mathcal{A}/\!\sim\) equals the system DoF of \(\mathcal{R}\).
In mechanical design, redundant actuators are commonly arranged symmetrically and connected in parallel or series hydraulic circuits—for example, paired columns in hydraulic supports 
($A_0$–$A_1$ and $A_2$–$A_3$, detailed in Section~\ref{sec:experiments}) and paired cylinders for the rotary mechanism of roadheaders ($A_0$–$A_1$, detailed in Section~\ref{sec:experiments}) \cite{tian2018kinematic}.
These relations are recorded at modeling time and used later in topology processing and solver scheduling.

\subsection{Mining Robot Description Format} \label{sec:mrdf}

The tuple \( \mathcal{R} = ( \mathcal{L}, \mathcal{J}, \mathcal{A} ) \) specifies the kinematics content we require. We therefore design a simple, human- and machine-readable format—\emph{\hl{Mining Robot Description Format (MRDF)}}—that records links, joints, and actuators as first-class entities and serves as the primary input to the MineRobot framework.
A comparison with existing robot description formats is provided later in Section~\ref{sec:comparison-expressiveness}.
Unlike general-purpose formats such as URDF and SDF, MRDF explicitly models actuators because our solvers are actuator centered (Section~\ref{sec:solver}). 
We serialize MRDF using JavaScript Object Notation (JSON) because of its concise syntax and straightforward implementation.
Figure~\ref{fig:mrdf} lists representative MRDF pseudocode snippets for a link (Figure~\ref{fig:mrdf}a), a joint (Figure~\ref{fig:mrdf}b), an actuator (Figure~\ref{fig:mrdf}c), and the overall file layout (Figure~\ref{fig:mrdf}d).
The snippets are simplified and annotated for readability rather than provided as complete valid JSON files.
Detailed MRDF conventions are given in Appendix~\ref{appendix:mrdf-schema}.

\begin{figure}[htbp]
  \centering
        \includegraphics[width=0.8\linewidth]{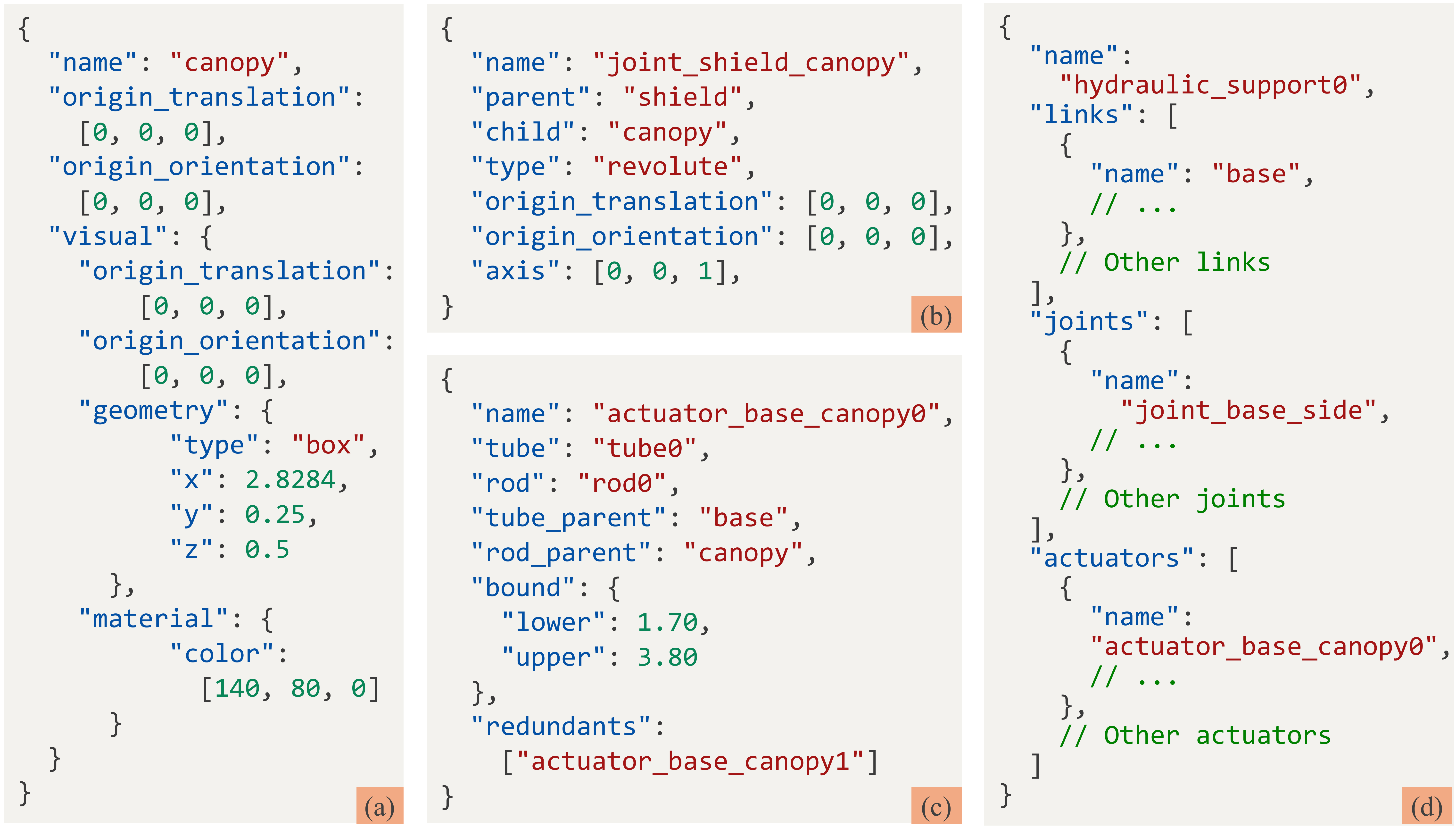}
    \caption{Representative MRDF pseudocode snippets and overall file structure. The snippets are simplified for readability and are not complete valid JSON files. (\textbf{a}) Link entry with name, pose, and visual fields. (\textbf{b}) Joint entry with parent/child, type, pose, and axis fields. (\textbf{c}) Actuator entry with tube/rod-side mounting information, stroke bounds, and redundancy information. (\textbf{d}) Overall MRDF file layout.}
    \label{fig:mrdf}
\end{figure}

A link entry stores the link name, origin transformation, and optional visual information. The orientation is represented by roll--pitch--yaw (RPY) angles~\cite{sciavicco2012modelling}. The visual field may specify primitive geometry and material parameters or refer to an external model asset such as OBJ or glTF. These fields map to the link tuple \(L_i=(i,\mathbf{T},V)\).
A joint entry stores the joint name, parent/child link names, joint type (\(P/R/F\)), origin transformation, and motion axis for \(P/R\) joints. These data correspond to \(J_i=(i,p,c,\tau,\mathbf{a},\mathbf{T}^o)\). During parsing, fixed joints (\(F\)) enable merging adjacent links before downstream processing.
An actuator entry records actuator identity, tube/rod-side mounting information, admissible stroke bounds \([\ell_i,u_i]\), and redundant-actuator relations. This mirrors \(A_i=(i,\mathtt{t},\mathtt{r},b,Rd)\), where \(b\) stores the actuator length bounds and \(Rd\) stores redundancy information. The bounds constrain the telescopic length \(l_i\), defined in world coordinates as the distance between the tube/rod mounting joint origins.

Although MRDF follows the common link--joint graph abstraction used by URDF/SDF-like robot descriptions, it is not intended as a JSON translation of those formats.
Its distinguishing feature is that linear actuators are modeled as first-class kinematic entities, with tube/rod-side mounting information, stroke bounds, and redundancy relations that are consumed directly by four-bar contraction, ITEP extraction, FK scheduling, and actuator-length IK optimization.
The current MRDF version is therefore a domain-specific description format for the actuator-centered mining-robot pipeline; arbitrary spatial closed chains, general multi-loop mechanisms, and unsupported joint or constraint types remain outside the current scope.

In preprocessing, the framework reads MRDF, assigns unique identifiers to all names, merges links connected by fixed joints, and converts translation/orientation fields into \(\mathrm{SE}(3)\) matrices.
For concise algorithm specification, we maintain the following matrices. \\
\noindent\textit{\hl{Joint matrix.}
} \(\boldsymbol{J} \in \mathbb{R}^{n_L \times n_L}\) encodes directed joint types from \(L_i\) to \(L_j\):
\begin{equation}
    \boldsymbol{J}_{i,j}=
        \begin{cases}
        1, & \text{$J_{i,j}$ is revolute},\\
        2, & \text{$J_{i,j}$ is prismatic},\\
        3, & \text{$J_{i,j}$ is fixed},\\
        4, & \text{$J_{i,j}$ is generalized},\\
        0, & \text{otherwise}.
        \end{cases}
\end{equation}

\hl{We use} \(1/2/3/4\) for \(R/P/F/G\), and \(0\) when no joint exists from \(L_i\) to \(L_j\). The value \(4\) indicates a \emph{generalized joint}, which represents a contracted four-bar linkage (Section~\ref{sec:topo}). \\
\noindent\textit{\hl{Actuator incidence.}} \(\boldsymbol{At},\,\boldsymbol{Ar}\in \mathbb{R}^{n_A\times n_L}\) mark tube/rod links per actuator (rows index actuators, columns index links):
\begin{equation}
    \boldsymbol{At}_{i,j},\ \boldsymbol{Ar}_{i,j}=
        \begin{cases}
        1, & \text{$L_j$ is the tube/rod link of $A_i$},\\
        0, & \text{otherwise}.
        \end{cases}
\end{equation} \\

\noindent\textit{\hl{Redundancy matrix.}} 
\(\boldsymbol{Rd}\in \mathbb{R}^{n_A\times n_A}\) records actuator redundancy:
\begin{equation}
    \boldsymbol{Rd}_{i,j}=
        \begin{cases}
            1, & \text{$A_j$ is redundant to $A_i$},\\
            0, & \text{otherwise}.
        \end{cases}
\end{equation}

$\boldsymbol{Rd}_{i,j}$ is $1$ if actuator $A_j$ is redundant for $A_i$ and $0$ otherwise. By convention, $\boldsymbol{Rd}_{i,i}=1$.
These matrices, together with per-link transforms and joint origins, constitute the minimal kinematics state used by our topology processing (four-bar contraction and ITEP extraction) and by the kinematics solvers described later.

\section{Topology Construction Algorithms} \label{sec:topo}

In this section, we introduce a two-stage topology-construction pipeline.
We first detect four-bar linkages and \emph{\hl{contract}} each into a \emph{\hl{generalized joint}} while preserving its connectivity to the rest of the robot (Algorithm~\ref{alg:generalized-joints}).
Under the scope assumptions in Section~\ref{sec:modeling}, after contraction, the joint topology becomes acyclic with actuators ignored, ensuring that subsequent path searches are well defined.
On this contracted topology, for every actuator we extract its \emph{\hl{Independent Topologically Equivalent Path (ITEP)}}—the ordered set of links and joints influenced when all other actuators are fixed (Algorithm~\ref{alg:topo})—and assign each ITEP to one of four canonical types.
These ITEPs serve as inputs to the kinematics solver in Section~\ref{sec:solver}.

\subsection{Four-Bar Linkage and Generalized Joints}
\label{sec:fourbar-generalized}


A four-bar is a closed-loop mechanism composed of a fixed ground link and three moving links connected in sequence by revolute joints~\cite{norton2007design,mccarthy2010geometric}.
From a mechanical-design perspective, four-bar linkages are classical transmission modules for motion transformation, path generation, and compact force transmission in machinery~\cite{norton2007design,mccarthy2010geometric,ebrahimi2015efficient}.
In this work, the planar assumption is made at the nominal rigid-body mechanism-design level, rather than as a claim that physical mining equipment is perfectly planar under load.
Its engineering basis is that many representative underground mining mechanisms, especially hydraulic-support linkages and other heavy-duty linkage modules, are designed with approximately parallel revolute axes and a dominant working plane~\cite{prebil2002synthesis,guan2019dynamic}.
In such mechanisms, planar four-bar linkages provide a practical trade-off among motion-path shaping, load capacity, stiffness, manufacturability, maintainability, and predictable operation in confined environments~\cite{prebil2002synthesis,guan2019dynamic}.
Actual equipment may still deviate from this idealization because of factors such as manufacturing tolerances, pin clearance and wear, elastic deformation, asymmetric loads, and sensing or calibration errors~\cite{ge2020virtual,mei2025sensing,wang2024method}.
These effects are not captured by the present nominal kinematic model and require calibration, sensing compensation, higher-fidelity multibody modeling, or physical validation.
Mechanisms with significant spatial multi-loop motion, non-parallel joint-axis coupling, or a large deformation that invalidates this planar rigid-body abstraction fall outside the current MineRobot scope.
Accordingly, we focus on the planar, single-DoF, reciprocating four-bars that are prevalent in our target mechanism class, and exclude atypical spatial closed-chain designs, link-interference cases, and singular or branch-switching configurations.

Formally, a four-bar is the ordered quadruple
\[
Fb=(L_a,L_b,L_c,L_d),
\]
where links are connected by directed revolute joints \((J_{ab},J_{bc},J_{cd})\), and the loop-closure between \(L_d\) and \(L_a\) is encoded by a fixed pair \((J_{da},J_{ad})\).
The same directed convention is used in MRDF.
Figure~\ref{fig:four-bar} illustrates the structure and its corresponding topological graph (with \(J_{ad}\) omitted for clarity).
By the Chebychev--Grübler--Kutzbach criterion~\cite{mccarthy2010geometric}, the DoF of a four-bar is 1.
Without loss of generality, we designate \(J_{ab}\) as the \emph{\hl{designated input joint}} with parameter \(\theta\).
Given \(\theta\), and the remaining joint variables \(\alpha\) and \(\beta\) (at \(J_{bc}\) and \(J_{cd}\)) are uniquely determined; their computation is detailed in Section~\ref{sec:solver}.

\begin{figure}[htbp]
  \centering
        \includegraphics[width=0.7\linewidth]{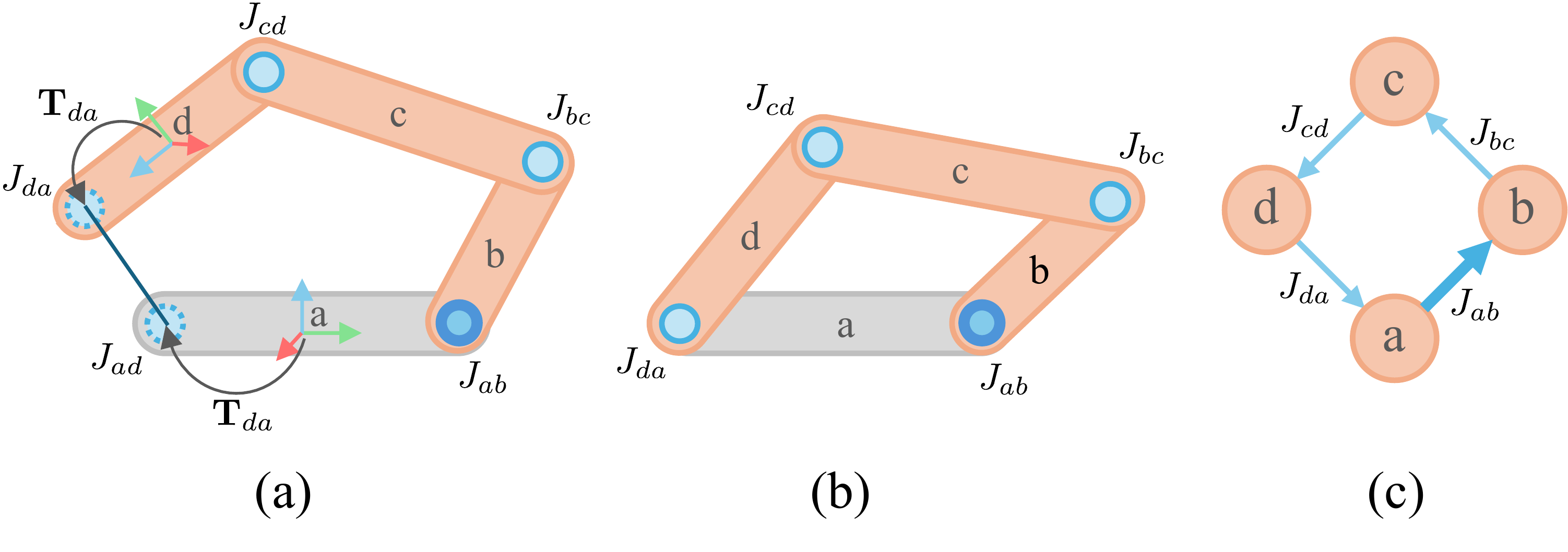}
    \caption{\hl{Four-bar} 
 linkage.
    (\textbf{a}) Links \(L_a, L_b, L_c, L_d\) are connected in sequence by revolute joints \(J_{ab}, J_{bc}, J_{cd}\);
    the pair \(J_{da}\)–\(J_{ad}\) encodes the closure between \(L_d\) and \(L_a\).
    (\textbf{b}) Same geometry with \(J_{ad}\) omitted.
    (\textbf{c}) Corresponding topological graph.
    The ground link \(L_a\) is stationary. We treat \(J_{ab}\) (bold) as the \emph{\hl{designated input joint}}; prescribing its rotation \(\theta\) uniquely determines the linkage configuration.
    }
    \label{fig:four-bar}
\end{figure}

A four-bar acts as a single-DoF transmission from the ground link \(L_a\) to the remaining members of \(Fb\).
We abstract this transmission by a \emph{\hl{generalized joint}} (type~4 in the joint matrix \(\boldsymbol{J}\); abbreviated as \(G\)), which denotes a single-DoF map from \(L_a\) to a target link within the same four-bar.
This abstraction operation is called \emph{\hl{contraction}}.
Algorithm~\ref{alg:generalized-joints} contracts each four-bar into generalized joints while preserving its incidence to the rest of the mechanism; it takes the joint matrix \(\boldsymbol{J}\) and actuator incidence \(\boldsymbol{At},\boldsymbol{Ar}\) as input and updates \(\boldsymbol{J}\) in place.

We first compute strongly connected components (SCCs) using Tarjan’s classic algorithm~\cite{tarjan1972depth} to obtain all four-bars $\mathcal{F}$ (Step~1), and discard trivial two-node SCCs introduced solely by auxiliary closure joints (e.g., \(J_{da}\)–\(J_{ad}\) in Figure~\ref{fig:four-bar}).
Because four-bars are modeled with a consistent ordering, each extracted \(Fb\) follows the sequence in Figure~\ref{fig:four-bar}c.

This contraction is a topology-level abstraction, not a removal of the four-bar kinematic relation.
A generalized joint \(G\) records that a target member of the four-bar is connected to the ground member through a planar 1-DoF transmission.
During FK/IK solving, this transmission is evaluated by the corresponding four-bar closure solver; therefore, the internal loop constraint is retained in the kinematic computation rather than discarded.
The contraction preserves the external incidence and actuator-relevant motion transmission required by the MineRobot solver, but it is not claimed to be a universal kinematic equivalence for arbitrary spatial or multi-DoF closed chains.
Under the planar 1-DoF four-bar assumption, prescribing the designated input joint uniquely determines the remaining four-bar configuration within the selected assembly branch. Therefore, replacing the four-bar topology by a generalized joint preserves the single-DoF transmission relation needed to propagate motion from the ground member to the exposed member.

\begin{algorithm}[htbp]
\small
\caption{Contracting four-bar linkages into generalized joints.}
\label{alg:generalized-joints}
\KwIn{Joint matrix $\boldsymbol{J}$; actuator incidence $\boldsymbol{At},\boldsymbol{Ar}$}
\KwOut{Updated $\boldsymbol{J}$ with generalized joints ($G$)}
$\mathcal{F}\gets \textsc{getFourbars}(\boldsymbol{J})$\;
\ForEach{$Fb=[L_a,L_b,L_c,L_d]\in\mathcal{F}$}{
  \tcp{\hl{remove} 
 internal four-bar joints}
  $\boldsymbol{J}_{a,b},\boldsymbol{J}_{b,c},\boldsymbol{J}_{c,d},\boldsymbol{J}_{d,a},\boldsymbol{J}_{a,d}\gets 0$;

  \tcp{add $G$ joints to keep external incidence}
  \For{$x\in\{b,c,d\}$}{
    \If(\tcp*[f]{has external children}){$\sum_j \boldsymbol{J}_{x,j}>0$}{ $\boldsymbol{J}_{a,x}\gets 4$ }
    \If(\tcp*[f]{is tube/rod parent}){$\exists i:\boldsymbol{At}_{i,x}=1 \vee \boldsymbol{Ar}_{i,x}=1$}{ $\boldsymbol{J}_{a,x}\gets 4$ }
  }

  \tcp{if an actuator spans two non-ground links}
  \ForEach{actuator $A_i$ with parents $(p_t,p_r)\in\{b,c,d\}$}{
    $\boldsymbol{J}_{p_t,p_r}\gets 4$\;
  }
}
\end{algorithm}

Given \(Fb=(L_a,L_b,L_c,L_d)\), the contraction operator acts on \(\boldsymbol{J}\) via the following minimal yet sufficient rules:
(i) \textbf{\hl{Remove loop-internal joints.} 
} Delete all joints among \(\{L_a,L_b,L_c,L_d\}\) (Step~3).
(ii) \textbf{\hl{Expose ground-to-member transmission.}} For each non-ground member \(L_x\in\{L_b,L_c,L_d\}\), if \(L_x\) has any child outside \(Fb\) (Steps~5--6) or \(L_x\) is a parent of any actuator endpoint (tube or rod; Steps~7--8), insert a generalized joint from \(L_a\) to \(L_x\).
(iii) \textbf{\hl{Respect actuator-directed coupling inside} \(Fb\).} If an actuator attaches between two non-ground members \(L_u,L_v\in\{L_b,L_c,L_d\}\) (e.g., tube parent at \(L_u\) and rod parent at \(L_v\)), add a generalized joint consistent with the actuator's direction (Steps~9--10).
Intuitively, (i) removes the cycle that enforces loop closure, while (ii)–(iii) reintroduce precisely those single-DoF transmission edges needed to propagate motion from the ground and to encode actuator-induced coupling among non-ground members.
Consequently, contraction does not create generalized joints between arbitrary pairs of non-ground four-bar members; a non-ground-to-non-ground generalized joint is added only when such a pair is directly coupled by an actuator before contraction as in Figure~\ref{fig:contraction}(b1,b2).

\begin{figure}[htbp]
  \centering
        \includegraphics[width=1.0\linewidth]{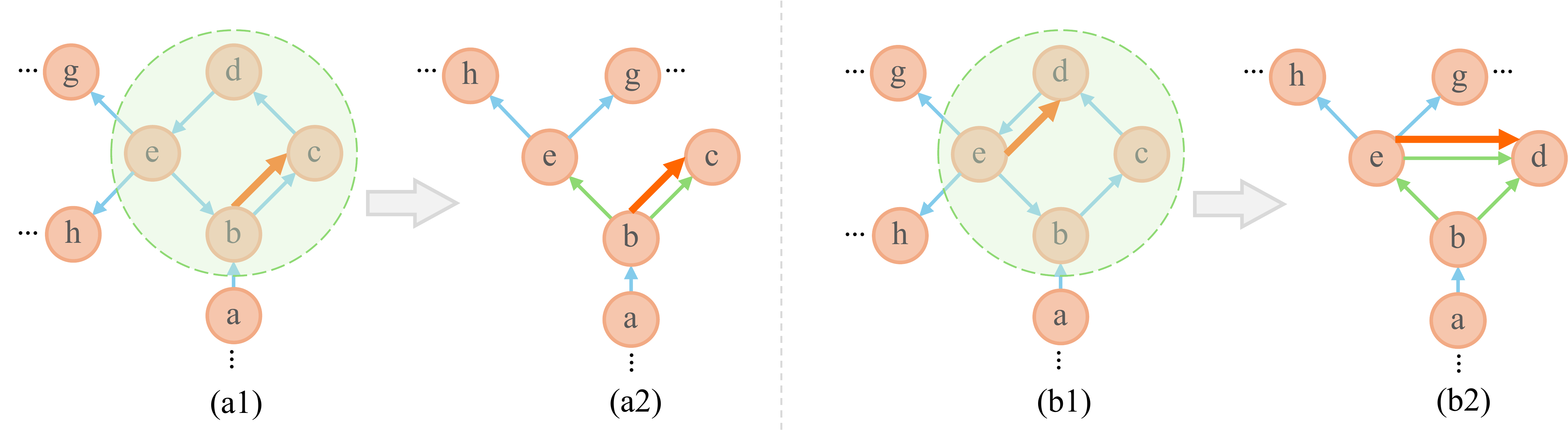}
    \caption{\hl{Contraction} 
 of four-bars. (\textbf{a1}) and (\textbf{b1}) show two representative four-bar configurations before contraction; 
(\textbf{a2}) and (\textbf{b2}) show the corresponding contracted topologies. 
A four-bar \(Fb=(L_b, L_c, L_d, L_e)\) (green) and its adjacent links; here the ground is \(L_b\) (\textbf{a1},\textbf{b1}). First, all joints inside \(Fb\) are removed. Then, if \(L_x\in\{L_c,L_d,L_e\}\) has external children (e.g., \(L_e\) in \textbf{a1},\textbf{b1}) or is an actuator parent (e.g., \(L_c\) in \textbf{a1}; \(L_d\) and \(L_e\) in \textbf{b1}), a generalized joint from \(L_b\) to \(L_x\) is added (green arrows). Finally, if an actuator attaches between \(L_u,L_v\in\{L_c,L_d,L_e\}\) (e.g., \(L_d\) and \(L_e\) in \textbf{b1}), a generalized joint between \(L_u\) and \(L_v\) is also added, consistent with the actuator’s direction.}
    \label{fig:contraction}
\end{figure}

Algorithm~\ref{alg:generalized-joints} implements these rules over all detected four-bars and updates \(\boldsymbol{J}\) in place.
With adjacency-based bookkeeping, four-bar contraction runs in linear time in the size of the robot graph and actuator set, i.e., $\mathcal{O}(n_L+n_J+n_A)$ time and $\mathcal{O}(n_L+n_J+n_A)$ space.
Figure~\ref{fig:contraction} illustrates typical cases using two examples.
After contracting all four-bars, the joint graph (actuators excluded) becomes a directed acyclic graph (DAG), laying the groundwork for Algorithm~\ref{alg:topo}.

\subsection{Independent Topologically Equivalent Paths (ITEPs)} \label{sec:topo-algo}

We observe that each actuator redundancy group corresponds to one independent DoF of the robot; with all other groups locked, the ordered chain of links and joints influenced by that group defines its \emph{\hl{Independent Topologically Equivalent Path}} (ITEP). 
Formally, let $\mathcal{A}$ be the set of actuators and $\sim$ the redundancy relation recorded in MRDF; each equivalence class $\mathcal{G}_i\subset\mathcal{A}$ contributes one DoF.
Choosing a representative $A_i\in\mathcal{G}_i$, we construct its ITEP $P_i$ while \emph{\hl{locking}} all actuators in $\mathcal{A}\setminus\mathcal{G}_i$. 
We categorize ITEPs into four types (A–D). Algorithm~\ref{alg:topo} then extracts, for each actuator, the associated ITEP from the contracted topology, providing the inputs to the per-type kinematics solvers in Section~\ref{sec:solver}.

Algorithm~\ref{alg:topo} takes as input the contracted joint graph $\boldsymbol{J}$, actuator incidence $\boldsymbol{At},\boldsymbol{Ar}$, and the redundancy matrix $\boldsymbol{Rd}$; it outputs the ITEP $P_i$ for each actuator $A_i$.
We maintain a marking vector $  \mathbf{M} \in \mathbb{R}^{n_A \times 1} $ to record whether an actuator has been processed ($\boldsymbol{M}_i=1$) or not ($\boldsymbol{M}_i=0$) (Step~1).
For each unprocessed $A_i$, we create a working copy $\boldsymbol{J}'$ of the joint matrix $\boldsymbol{J}$ (Step~5) and then \emph{lock} all actuators outside its redundancy group $\mathcal{G}_i$ (Steps~6--21).

For each other actuator $A_j$, we retrieve the tube link $L_t$, the rod link $L_r$, and their parent links $L_{p_t}$ and $L_{p_r}$ via the auxiliary function $\Call{getActuatorLinks}{}$ (Step~9).
We define a path $P$ in a directed graph as the ordered sequence of nodes traversed from one node to another, including both endpoints.
Let $|P|$ denote the node count of $P$, and let $[\,]$ denote the empty path.
For actuator $A_j$, extract the path from $L_{p_t}$ to $L_{p_r}$ (Step~10) or $L_{p_r}$ to $L_{p_t}$ (Step~17), whichever exists.
After Algorithm~\ref{alg:generalized-joints}, if configurations like Figure~\ref{fig:contraction}(b2) are absent, the robot’s topology forms a tree with unique paths between any two connected nodes.
Even if such cases exist, the mechanical design avoids actuators driving already actuated four-bar linkages, ensuring that no path passes through actuated generalized joints. 
Thus, the path between two connected nodes is unique and can be retrieved using a standard depth-first or breadth-first search through the auxiliary function $\Call{getTopoPath}{}$.

\begin{algorithm}[htbp]
\small
\caption{Topology construction algorithm for actuators.} 
\label{alg:topo}
\KwIn{Joint matrix $\boldsymbol{J}$; actuator incidence $\boldsymbol{At}, \boldsymbol{Ar}$; redundancy matrix $\boldsymbol{Rd}$}
\KwOut{ITEP $P_i$ for each actuator $A_i$}

$\boldsymbol{M} \gets \mathbf{0}^{n_A \times 1}$ \tcp{0=unprocessed, 1=processed}

\For{$i \gets 0$ \KwTo $n_A - 1$}{
    \If{$\boldsymbol{M}_i \neq 0$}{
        \textbf{continue}
    }

    $\boldsymbol{J'} \gets \boldsymbol{J}$

    \tcp{lock all other actuators}
    \For{$j \gets 0$ \KwTo $n_A - 1$}{ 
        \If{$\boldsymbol{Rd}_{i,j} \neq 0$}{
            \textbf{continue}
        }

        $t, r, p_t, p_r \gets \textsc{getActuatorLinks}(\boldsymbol{At}, \boldsymbol{Ar}, \boldsymbol{J}, j)$ \\
        $P \gets \textsc{getTopoPath}(\boldsymbol{J'}, p_t, p_r)$

        \If{$P \neq [\,]$}{
            \If{$|P| = 2$}{
                $\boldsymbol{J'}_{p_t, p_r} \gets 3$
            }
            \Else{
                $\boldsymbol{J'}_{r,t} \gets 3$; \quad $\boldsymbol{J'}_{t,p_t} \gets 1$; \quad $\boldsymbol{J'}_{p_t,t} \gets 0$
            }
        }
        \Else{
            $P \gets \textsc{getTopoPath}(\boldsymbol{J'}, p_r, p_t)$ \\
            \If{$|P| = 2$}{
                $\boldsymbol{J'}_{p_r, p_t} \gets 3$
            }
            \Else{
                $\boldsymbol{J'}_{t,r} \gets 3$; \quad $\boldsymbol{J'}_{r,p_r} \gets 1$; \quad $\boldsymbol{J'}_{p_r,r} \gets 0$
            }
        }
    }

    $p_t, p_r \gets \textsc{getActuatorLinks}(\boldsymbol{At}, \boldsymbol{Ar}, \boldsymbol{J}, i)$ \\
    $P_a \gets \textsc{getTopoPath}(\boldsymbol{J'}, p_t, p_r)$ \\
    $P_b \gets \textsc{getTopoPath}(\boldsymbol{J'}, p_r, p_t)$ \\
    $P_i \gets \textsc{mergePaths}(P_a, P_b)$ 

    \For{$j \gets 0$ \KwTo $n_A - 1$}{
        \If{$\boldsymbol{Rd}_{i,j} \neq 0$}{
            $P_j \gets P_i$; $\boldsymbol{M}_j \gets 1$ 
        }
    }
}
\end{algorithm}

The locking operation for every other actuator $A_j$ is as follows.
If a path $P$ exists from $L_{p_t}$ to $L_{p_r}$ (Step~11), we distinguish two cases.
\textbf{\hl{(i) Direct connection.}}
If $L_{p_t}$ and $L_{p_r}$ are connected by a single-DoF joint $J_{p_t p_r}$ ($R/P/G$), as in Figure~\ref{fig:topo-build-other-actuators}(a1).
Fixing the rod–tube joint $J_{rt}$ is kinematically equivalent to locking $J_{p_t p_r}$; therefore we directly set $J_{p_t p_r}$ to fixed (Step~13; Figure~\ref{fig:topo-build-other-actuators}(a3)).
\textbf{\hl{(ii) Indirect connection.}} 
Otherwise, the tube/rod parents are connected by a directed path $P$ with $|P|>2$.
Locking $A_j$ closes a local loop that is topologically equivalent to merging the tube and rod into a single link connected to $L_{p_t}$ and $L_{p_r}$ by two revolute joints (Figure~\ref{fig:topo-build-other-actuators}(b1--b3)).
Under our scope assumptions, such a loop is necessarily 1-DoF and unique; moreover, a mobility-counting argument implies $|P|=3$, yielding a four-link generalized four-bar (detailed in Appendix~\ref{appendix:A}).
In our mining-robot models, this indirect pattern appears as a contracted transmission ($G$) adjacent to a revolute joint ($R$).
We fix $J_{rt}$ and reverse $J_{t p_t}$ to close a local loop (Step~15; Figure~\ref{fig:topo-build-other-actuators}(b2)), which is topologically equivalent to merging $L_r$ and $L_t$ into a single node (Figure~\ref{fig:topo-build-other-actuators}(b3)). 
By the Chebychev–Gr{\"u}bler–Kutzbach criterion, the resulting loop has DoF~1.
Conversely, if instead the path exists in the reverse direction, i.e., from $L_{p_r}$ to $L_{p_t}$, we apply the symmetric operations with the joint directions swapped (Steps~16--21).

\begin{figure}[htbp]
  \centering
    \includegraphics[width=0.5\linewidth]{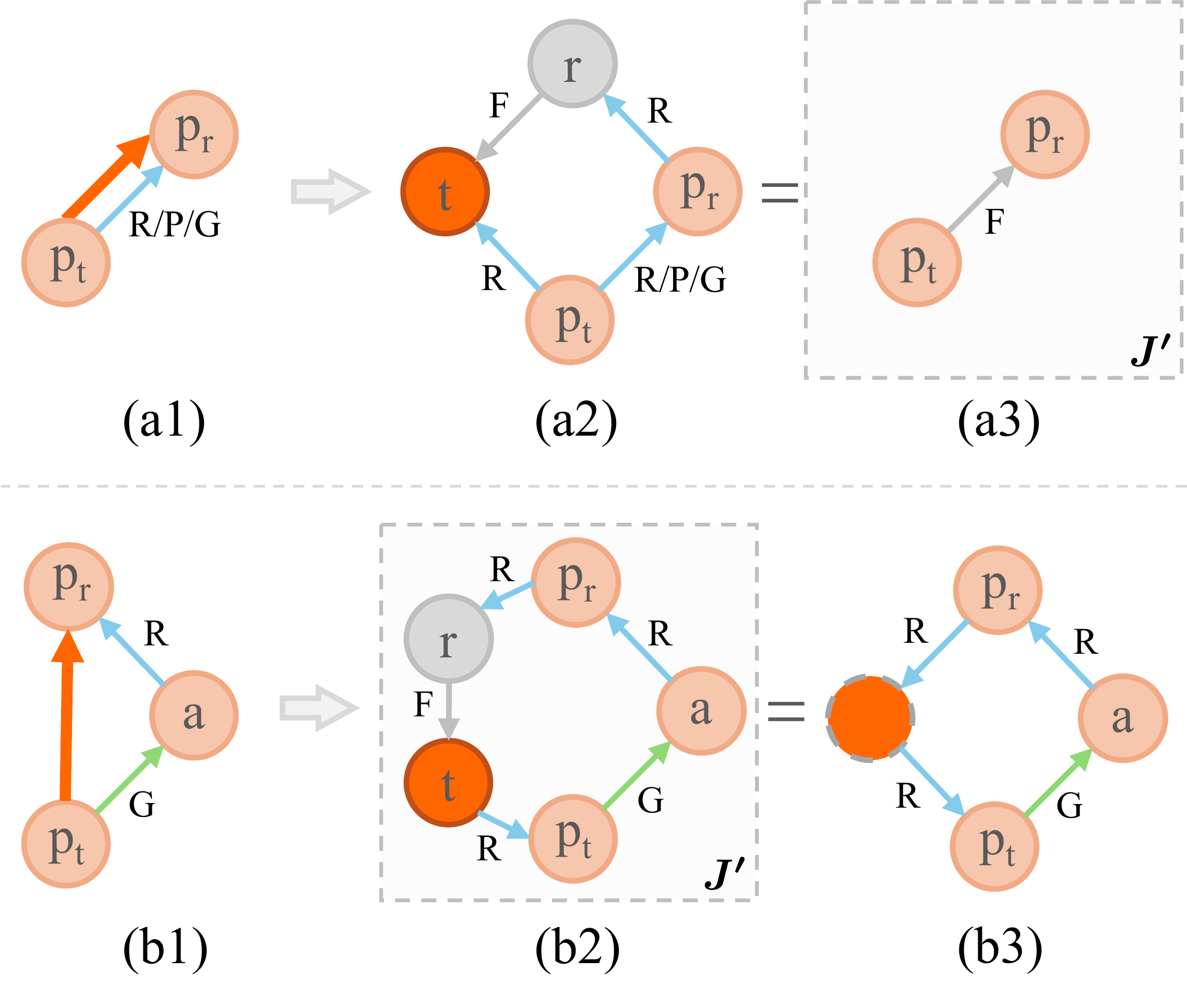}
  \caption{\hl{Locking} 
 other actuators (schematics). 
  (\textbf{a1}--\textbf{a3}) If a direct edge between the tube/rod parents exists, set it to \(F\).
  (\textbf{b1}--\textbf{b3}) Otherwise, fix \(J_{tr}\) and reverse the adjacency to form and collapse a 1–DoF loop, which is equivalent to merging the tube and rod.}
  \label{fig:topo-build-other-actuators}
\end{figure}

After locking all other actuators, we extract the links and joints influenced by $A_i$, namely its ITEP. 
Let $L_{p_t}$ and $L_{p_r}$ denote the parent links of the tube and rod of $A_i$, respectively (Step~22). 
The paths from $L_{p_t}$ to $L_{p_r}$ and from $L_{p_r}$ to $L_{p_t}$ are obtained and are denoted as $P_a$ and $P_b$, respectively (Steps~23--24).
If $\boldsymbol{J}'$ contains no local closed-loop configuration like Figure~\ref{fig:topo-build-other-actuators}(b2), exactly one of $P_a$ or $P_b$ is nonempty, yielding the unique simple path between $L_{p_t}$ and $L_{p_r}$. 
If such loops exist, each has one DoF, and since all other actuators are locked, the robot’s overall DoF is also 1. 
Therefore, there can be only one such loop, and $L_{p_t}$ and $L_{p_r}$ must lie on it.
The ITEP $P_i$ of actuator $A_i$ is obtained by merging $P_a$ and $P_b$ into a single path via the auxiliary function $\Call{mergePaths}{}$ (Step~25).
Given $P_a = [i, \dots, j]$ and $P_b = [j, \dots, i]$, the merged path is $P = [i, \dots, j, \dots, i]$.
Finally, because mutually redundant actuators share the same topology, $P_i$ is assigned to all $A_j\in\mathcal{G}_i$ (Steps~26--27).
With an adjacency-list representation of the robot topology, the algorithm runs in $\mathcal{O}(n_G n_A (n_L+n_J))$ time and $\mathcal{O}(n_L+n_J)$ space, where $n_G$ denotes the number of redundant actuator groups (i.e., the robot DoF).

We classify the ITEP \(P_i\) of actuator \(A_i\) into four topological equivalence classes (A--D) according to the single-DoF relation between its tube and rod parents \(L_{p_t}\) and \(L_{p_r}\).
ITEP topological graphs are shown in Figure~\ref{fig:topo-types}, and the corresponding mechanism examples are illustrated in Figure~\ref{fig:topo-types-demos}.

\begin{enumerate}[label=\textbf{\textit{Type~\Alph*}},align=parleft,leftmargin=*,labelsep=2mm]

  \item \emph{\hl{(Prismatic-equivalent).} 
}
  $L_{p_t}$ and $L_{p_r}$ are connected by a prismatic joint whose telescopic motion is driven by the actuator (e.g., Figure~\ref{fig:topo-types-demos}a).

  \item \emph{\hl{(Revolute-equivalent).}}
  $L_{p_t}$ and $L_{p_r}$ are connected by a revolute joint whose rotational motion is driven by the actuator (e.g., Figure~\ref{fig:topo-types-demos}b).

  \item \emph{\hl{(Four-bar-equivalent).}}
  $L_{p_t}$ and $L_{p_r}$ lie on the same contracted four-bar, i.e., they are connected through a generalized joint that mediates a single-DoF four-bar transmission (e.g., Figure~\ref{fig:topo-types-demos}(c1,c2)).

  \item \emph{\hl{(Generalized-four-bar-equivalent).}}
  A 1-DoF closed loop formed by four links connected in sequence by one generalized joint and three revolute joints; $L_{p_t}$ and $L_{p_r}$ are arbitrary links on this loop (e.g., Figure~\ref{fig:topo-types-demos}(d1,d2)).

\end{enumerate}

Types~A/B are two-link, single-joint ($P/R$) cases;
Type~C is a four-bar;
Type~D is a more complex configuration obtained by nesting one four-bar within another, which we refer to as a \emph{\hl{generalized four-bar}
}.
In this view, Types~A/B can be regarded as \emph{\hl{trivial}} four-bars.

From an engineering perspective, these four types correspond to recurring actuator-transmission patterns in the studied mining robots.
Types~A and B appear in direct hydraulic-cylinder transmissions where actuator stroke is mapped to a sliding or rotating joint motion, such as the cutter-head telescopic motion of the roadheader between \(L_2\) and \(L_3\) in Section~\ref{sec:experiments}
, and the cutter-drum pitching motion of the shearer between \(L_0\) and \(L_1\) in Section~\ref{sec:experiments}.
Type~C appears when a linear actuator drives a planar four-bar substructure, such as the opening/closing transmission of the ventilation door in Section~\ref{sec:experiments}.
Type~D appears in more coupled hydraulic-support mechanisms, where an actuator-induced path includes a contracted four-bar and forms a generalized four-bar transmission after other actuator groups are locked as illustrated in Section~\ref{sec:experiments}.

\begin{figure}[htbp]
  \centering
    \includegraphics[width=0.35\linewidth]{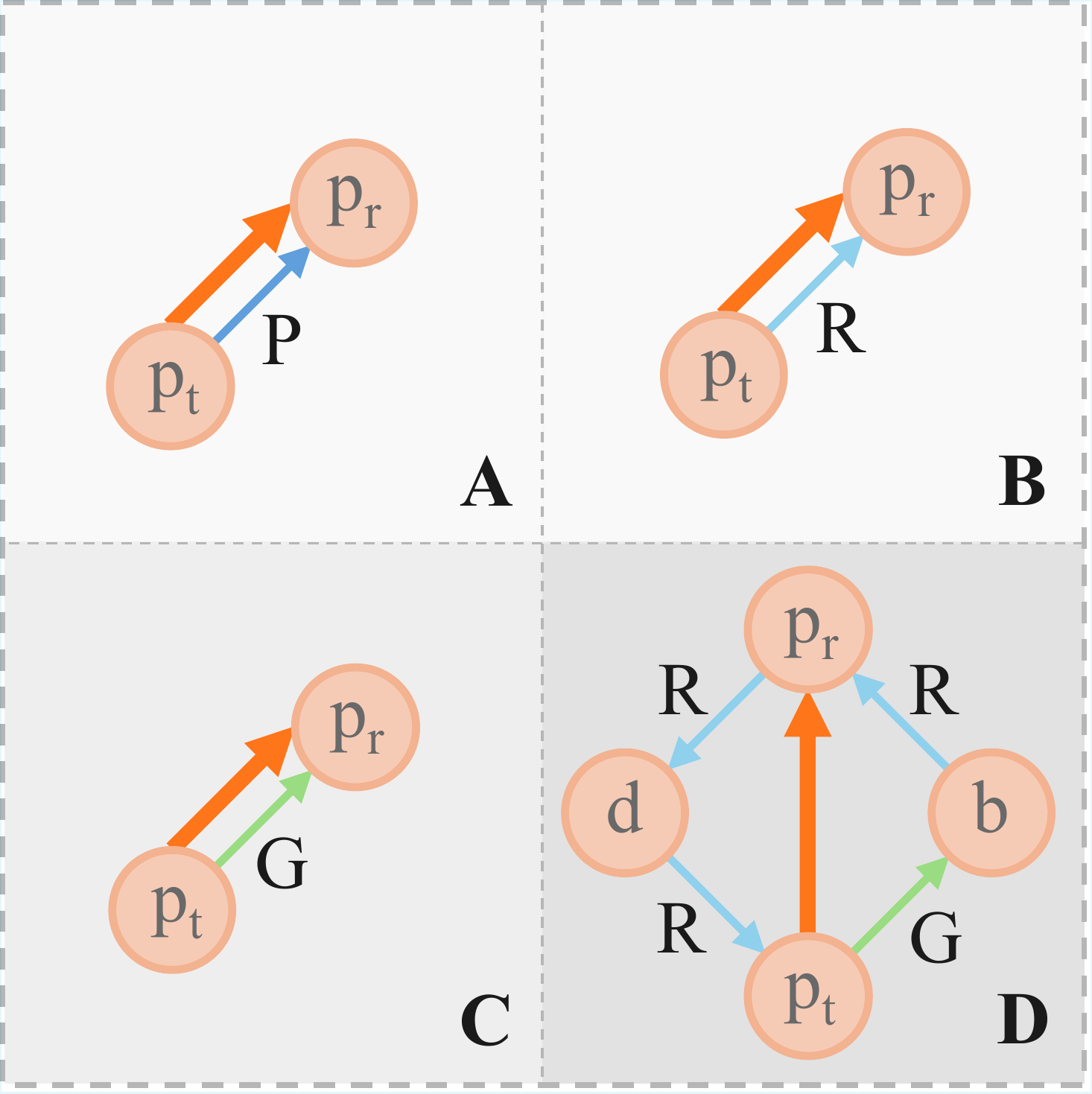}
    \caption{\hl{Topological} 
    graphs of four ITEP types.
    (\textbf{a}) Type A: prismatic-equivalent path; 
    (\textbf{b}) Type B: revolute-equivalent path; 
    (\textbf{c}) Type C: four-bar-equivalent path; 
    (\textbf{d}) Type D: generalized-four-bar-equivalent path. Darker backgrounds indicate higher complexity.}
    \label{fig:topo-types}
\end{figure}

The four ITEP types are complete only under the mechanism scope stated in Section~\ref{sec:modeling}.
More precisely, we assume that:
(i) actuation is provided by linear actuators and each actuator redundancy group contributes one DoF;
(ii) the remaining joint types after fixed-joint merging are revolute, prismatic, or generalized joints obtained from planar four-bar contraction;
(iii) after four-bar contraction, the actuator-free topology is tree-like, or more generally satisfies the independent-path condition that the endpoint links of each actuator are connected by a unique simple path unless locking another actuator induces one local 1-DoF loop; and
(iv) actuator paths do not pass through already actuated generalized joints.

Under these assumptions, locking all actuator redundancy groups except the active group leaves a 1-DoF actuator-induced structure.
If no local loop is induced, the active actuator endpoints are connected through exactly one 1-DoF transmission, which can only be prismatic, revolute, or generalized in the contracted graph; these cases correspond to Types~A, B, and C, respectively.
If locking another actuator induces a local loop, the scope assumptions and mobility counting restrict it to a single 1-DoF four-link loop, yielding the generalized-four-bar case, i.e., Type~D.
Thus every supported actuator-induced structure falls into one of Types~A--D.

\begin{figure}[htbp]    
  \centering
    \includegraphics[width=0.5\linewidth]{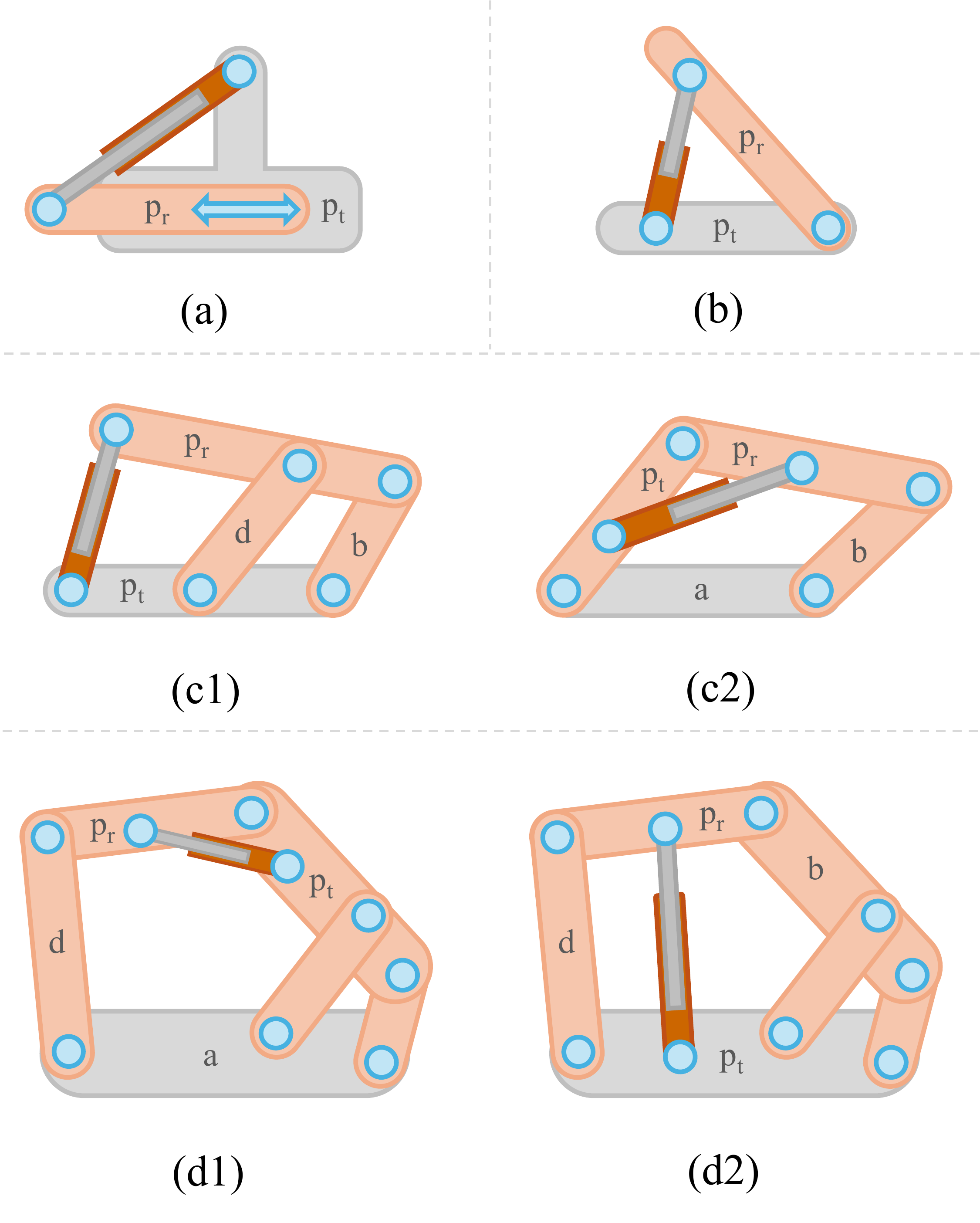}
    \caption{\hl{Illustrative} 
  mechanism examples for the four ITEP types. 
  (\textbf{a}) and (\textbf{b}) correspond to Types A and B, respectively; 
  (\textbf{c1},\textbf{c2}) and (\textbf{d1},\textbf{d2}) show two examples of Types C and D, respectively.}
    \label{fig:topo-types-demos}
\end{figure}

This completeness statement does not apply to arbitrary spatial multi-loop mechanisms, unsupported joint types, topologies that remain multi-cyclic after four-bar contraction, mechanisms requiring actuator paths through already actuated generalized joints, or branch-switching/singular configurations.
Appendix~\ref{appendix:A} provides the detailed proof and a fuller limitations discussion.

The independence of ITEPs refers to the topology-level decomposition obtained by locking all actuator groups except the active redundancy group.
The topological equivalence of ITEPs is characterized by three allowances: (i) the actuator may connect any two links on the ITEP, in either direction; (ii) the attachment points may lie arbitrarily along those links; and (iii) any link on the ITEP may itself be a rigid aggregate of links joined by fixed joints. 
Let \(S\) denote the set of all actuator structures in the robot, and let \(\sim\) be the topological-equivalence relation. The four ITEP types then form the quotient set \(S/\sim = \{\text{Type A},\,\text{Type B},\,\text{Type C},\,\text{Type D}\}\).
The kinematics independence and topological equivalence of ITEPs enable the general kinematics solvers developed in Section~\ref{sec:solver}.

\section{Kinematics Solvers} \label{sec:solver}

In this section, we build kinematics solvers on the ITEP structure obtained in Section~\ref{sec:topo}.
We first show that forward kinematics for each ITEP type reduces to a one-dimensional root-finding problem.
Leveraging ITEP independence, we compose these per-ITEP solvers into an intuitive, sequential forward kinematics pipeline for the full robot.
On top of this FK, we formulate inverse kinematics as a bound-constrained optimization problem and solve it via a Gauss--Seidel-style iterative scheme.
The resulting methods are stable, easy to implement, and efficient, making them suitable for repeated actuator-level kinematic queries in planning, training, and digital-twin systems.

\subsection{Kinematics Solvers for Four Types of ITEPs}
\label{sec:itep-solvers}

We cast the forward kinematics of each ITEP as a unified one-dimensional root-finding problem. 
Given a target actuator length \(l^*\), we determine the joint transformations along the ITEP that realize this length. 
Let \(\varphi\) denote the per-ITEP FK solver, and let \(\mathcal{T}=\{\mathbf{T}\}\) be the set of all joint transformations (the robot configuration). 
Starting from the current configuration \(\mathcal{T}'\), we set
\begin{equation}\label{eq:varphi}
\mathcal{T}=\varphi(l^*,\,\mathcal{T}'),
\end{equation}
which updates the transformations on the ITEP so that the actuator attains \(l^*\).

For any actuator \(A_i\), let \(\theta\) denote the single driving variable along its ITEP (a prismatic displacement, a revolute angle, or a designated input angle inside a four-bar/generalized four-bar). 
Let \(f : \mathbb{R} \xrightarrow{} \mathbb{R}\) denote the function that maps \(\theta\) to the actuator length \(l\).
Then all four ITEP types reduce to the scalar equation
\[
f(\theta) - l^* = 0.
\]

\hl{We solve} 
 this scalar equation using Newton--Raphson iterations with finite-difference derivatives on the physical assembly mode specified by the MRDF reference configuration and subsequent warm starts.
We do not assume that the residual function is globally monotone or convex over all mathematical assembly modes.
Closed-chain mechanisms may in principle admit multiple assembly modes, and branch switching can occur through singular or near-singular configurations.
In the mining robots considered here, such configurations are avoided in normal operation by mechanical design, actuator stroke limits, joint limits, link interference constraints, and safety margins.
For the hydraulic-support benchmarks, this operating-range assumption is also consistent with previous studies reporting one-to-one relationships between support configuration and actuator lengths within valid stroke ranges~\cite{pang2025intelligent,mu2024research}.
Accordingly, the MRDF actuator bounds encode mechanically admissible operating ranges rather than arbitrary numerical intervals, and the target actuator length satisfies \(l^*\in[\ell_i,u_i]\) within this supported range.

Numerically, the implementation uses a safeguarded Newton--Raphson procedure.
A Newton step is accepted only when the finite-difference derivative is sufficiently well conditioned and the proposed update remains within the admissible interval; otherwise, the solver falls back to bisection on the current valid bracket.
The residual tolerance is \(10^{-6}\), and the maximum number of iterations is 50.
If the tolerance cannot be reached within this budget, the corresponding ITEP solve is reported as failed.
We use Newton--Raphson because of its implementation simplicity and observed efficiency in our tests; Appendix~\ref{appendix:fk-rootfinding} shows that several representative one-dimensional solvers also converge on the Type C and Type D cases, indicating that alternative solvers can be adopted depending on the desired speed--robustness trade-off.

For clarity, we introduce the following notation.
The operator \(\mathbf{t}(\mathbf{T})\) extracts the translation vector from a homogeneous transformation \(\mathbf{T}\).
The notation \(\lVert\cdot\rVert\) denotes the Euclidean (\(\ell_2\)) norm.
Consistent with the earlier notation, the actuator’s tube and rod links are denoted by \(L_t\) and \(L_r\), and their parent links by \(L_{p_t}\) and \(L_{p_r}\), respectively.

\subsubsection{Type A}

In Type~A, $L_{p_t}$ and $L_{p_r}$ are connected by a prismatic joint $J_{p_t p_r}$ with displacement $\theta$.
We solve for $\theta$ so that the actuator length equals $l^*$. 
Transformations from $L_{p_t}$ to $L_t$ (orange) and from $L_{p_t}$ to $L_r$ (blue) are illustrated in Figure~\ref{fig:type-a}.
Based on the relative positions of $L_{p_t}$, $J_{p_r r}$, and $J_{p_t t}$ (illustrated by the green triangle in Figure~\ref{fig:type-a}(right)), the actuator length $l$ can be computed as
\begin{equation} \label{eq:type-a-l}
l = \left\lVert 
\mathbf{t}( \mathbf{T}_{p_t t} ) -
\mathbf{t}( \mathbf{T}_{p_t r} )
\right\rVert
\end{equation}
where $\mathbf{T}_{p_t r} = \mathbf{T}_{p_t p_r}(\theta) \mathbf{T}_{p_r} \mathbf{T}_{p_r r}$, according to (\ref{eq:tpc}) and (\ref{eq:Tpq}).
Let $f^0_A : \mathbb{R} \xrightarrow{} \mathbb{R}$ denote the function that maps $\theta$ to the actuator length $l$.
For Type A, the FK solver $\varphi$ reduces to solving a root-finding problem: given the target actuator length $l^*$, find $\theta$ such that
\begin{equation} \label{eq:type-a}
f_A^0(\theta) - l^* = 0.
\end{equation}

\begin{figure}[htbp]
  \centering
        \includegraphics[width=0.7\linewidth]{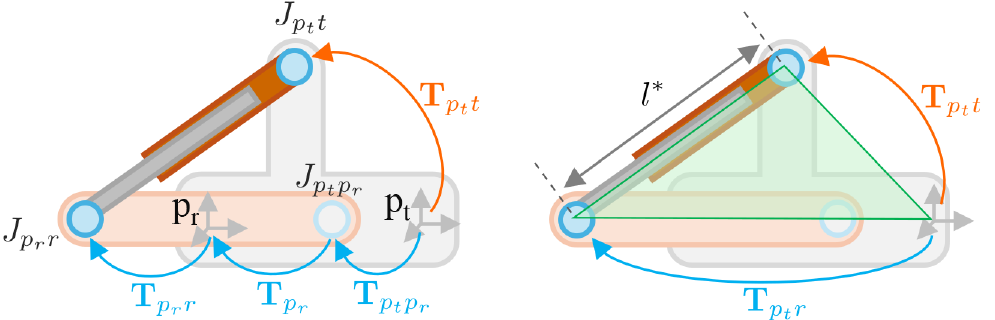}
    \caption{\hl{Type A.} 
     The tube- and rod-parent links $L_{p_t}$ and $L_{p_r}$ are connected by a prismatic joint $J_{p_t p_r}$ with displacement $\theta$. 
    Orange/blue paths show transforms from $L_{p_t}$ to $L_t$/$L_r$.}
    \label{fig:type-a}
\end{figure}

\subsubsection{Type B}

In Type~B, $L_{p_t}$ and $L_{p_r}$ are connected by a revolute joint $J_{p_t p_r}$ with rotation $\theta$. We solve for $\theta$ so that the actuator length equals $l^*$. Transformations from $L_{p_t}$ to $L_t$ (orange) and from $L_{p_t}$ to $L_r$ (blue) are illustrated in Figure~\ref{fig:type-b}. Based on the relative positions of $L_{p_t}$, $J_{p_r r}$, and $J_{p_t t}$ (illustrated by the green triangle in Figure~\ref{fig:type-b}(right)), the actuator length $l$ can be computed as
\begin{equation}\label{eq:type-b-l}
l = \left\lVert 
\mathbf{t}\!\left(\mathbf{T}_{p_t t}\right) -
\mathbf{t}\!\left(\mathbf{T}_{p_t r}\right)
\right\rVert .
\end{equation}

\hl{According} to (\ref{eq:tpc}) and (\ref{eq:Tpq}), the path transformation is
$\mathbf{T}_{p_t r} = \mathbf{T}_{p_t p_r}(\theta)\,\mathbf{T}_{p_r}\,\mathbf{T}_{p_r r}.$
Let $f_B^0:\mathbb{R}\to\mathbb{R}$ denote the function that maps $\theta$ to the actuator length $l$. For Type~B, the FK solver $\varphi$ reduces to solving the scalar equation
\begin{equation}\label{eq:type-b}
f_B^0(\theta) - l^* = 0.
\end{equation}

\vspace{-3pt}
\begin{figure}[htbp]
  \centering
        \includegraphics[width=0.6\linewidth]{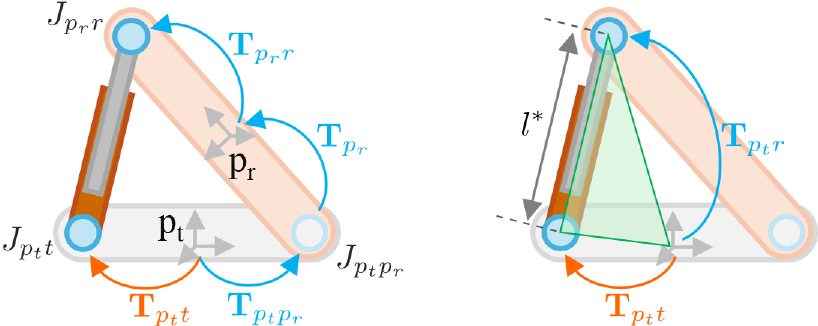}
    \caption{\hl{Type B}. 
    The tube- and rod-parent links $L_{p_t}$ and $L_{p_r}$ are connected by a revolute joint $J_{p_t p_r}$ with rotation $\theta$. 
    Orange/blue paths show transformations from $L_{p_t}$ to $L_t$/$L_r$.}
    \label{fig:type-b}
\end{figure}

\subsubsection{Type C}

In Type~C, the actuator lies on a four-bar \(Fb=[L_a,L_b,L_c,L_d]\) (Figure~\ref{fig:type-c}). We solve for the designated input angle \(\theta\) at \(J_{ab}\) so that the actuator length equals \(l^*\). Let \(\alpha\) and \(\beta\) be the angles at \(J_{bc}\) and \(J_{cd}\), respectively. The four-bar loop-closure is enforced by
\begin{equation}\label{eq:type-c-loop}
\mathbf{t}\!\left(\overline{\mathbf{T}}_{ad}\right)-\mathbf{t}\!\left(\mathbf{T}_{ad}\right)=\mathbf{0},
\end{equation}
where $\overline{\mathbf{T}}_{ad}=\mathbf{T}_{ab}(\theta)\,\mathbf{T}_{b}\,\mathbf{T}_{bc}(\alpha)\,\mathbf{T}_{c}\,\mathbf{T}_{cd}(\beta)\,\mathbf{T}_{d}\,\mathbf{T}_{da}$ according to (\ref{eq:tpc}) and (\ref{eq:Tpq}).
Given $\theta$, the values of $\alpha$ and $\beta$ can be efficiently obtained using a standard Newton--Raphson procedure.
Transformations from \(L_a\) to \(L_t\) (orange) and from \(L_a\) to \(L_r\) (blue) are illustrated in Figure~\ref{fig:type-c}. Based on the relative positions of \(L_a\), \(J_{p_r r}\), and \(J_{p_t t}\) (the green triangle in Figure~\ref{fig:type-c}(right)), the actuator length is
\begin{equation}\label{eq:type-c-l}
l=\left\lVert
\mathbf{t}\!\left(\mathbf{T}_{a t}\right)-
\mathbf{t}\!\left(\mathbf{T}_{a r}\right)
\right\rVert ,
\end{equation}
where \(\mathbf{T}_{a t}\) and \(\mathbf{T}_{a r}\) are the path transformations from \(L_a\) to \(L_t\) and \(L_r\), respectively.
For example, 
$\mathbf{T}_{at} = \mathbf{T}_{p_tt} $,
$\mathbf{T}_{ar} = \mathbf{T}_{ab}\mathbf{T}_{b}\mathbf{T}_{bc}\mathbf{T}_{c}\mathbf{T}_{p_rr} $ 
in Figure~\ref{fig:topo-types-demos}(c1), and 
$\mathbf{T}_{at} = 
\mathbf{T}_{ab}\mathbf{T}_{b}\mathbf{T}_{bc}\mathbf{T}_{c}\mathbf{T}_{cd}\mathbf{T}_{d}\mathbf{T}_{p_tt}$,
$\mathbf{T}_{ar} = \mathbf{T}_{ab}\mathbf{T}_{b}\mathbf{T}_{bc}\mathbf{T}_{c}\mathbf{T}_{p_rr} $
in Figure~\ref{fig:topo-types-demos}(c2).

Let \(f:\mathbb{R}\to\mathbb{R}\) map \(\theta\) to the actuator length \(l\).
Unlike Types A and B, where the actuator length $l$ can be computed directly from $\theta$, the function $f$ here implicitly involves first solving the loop-closure constraint in (\ref{eq:type-c-loop}) to obtain $\alpha$ and $\beta$, and then computing $l$ accordingly.
For Type~C, the FK solver \(\varphi\) reduces to the scalar equation
\begin{equation}\label{eq:type-c}
f(\theta)-l^*=0.
\end{equation}

\vspace{-9pt}
\begin{figure}[htbp]
  \centering
        \includegraphics[width=0.75\linewidth]{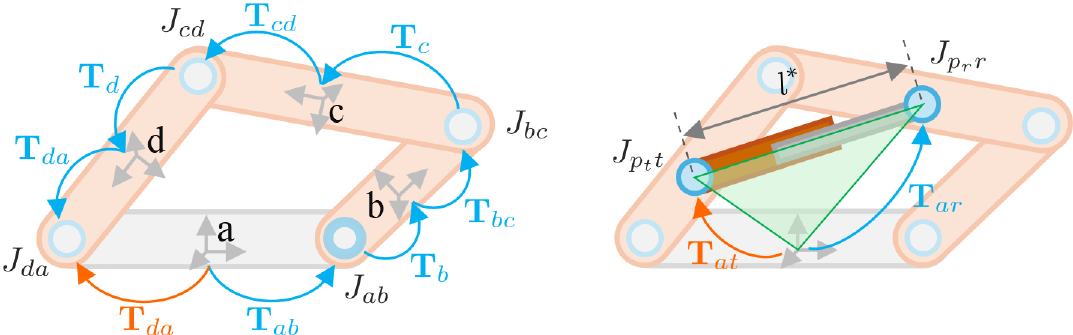}
    \caption{\hl{Type C}. 
    The actuator lies on a four-bar \(Fb=[L_a,L_b,L_c,L_d]\) with designated input \(J_{ab}\) (angle \(\theta\)) and loop angles \(\alpha,\beta\) at \(J_{bc},J_{cd}\). 
    Orange/blue paths show transformations from \(L_a\) to \(L_t\)/\(L_r\).}
    \label{fig:type-c}
\end{figure}

\subsubsection{Type D}

In Type~D, the actuator lies on two nested four-bars—an inner \({Fb}_1=[L_a,L_{b_1},L_{c_1},L_{d_1}]\) and an outer \({Fb}_2=[L_a,L_{b_2},L_{c_2},L_{d_2}]\)—with the shared link \(L_{c_1}\equiv L_{b_2}\) (Figure~\ref{fig:type-d}). We solve for the designated input angle \(\theta\) at \(J_{ab_1}\) so that the actuator length equals \(l^*\). Let \(\alpha_1,\beta_1\) be the angles at \(J_{b_1c_1}\) and \(J_{c_1d_1}\), and \(\alpha_2,\beta_2\) the angles at \(J_{b_2c_2}\) and \(J_{c_2d_2}\), respectively.

The inner-loop closure is enforced by
\begin{equation}\label{eq:type-d-loop-1}
\mathbf{t}\!\left(\overline{\mathbf{T}}_{a d_1}\right)-\mathbf{t}\!\left(\mathbf{T}_{a d_1}\right)=\mathbf{0},
\end{equation}
where
$\overline{\mathbf{T}}_{a d_1}
=\mathbf{T}_{a b_1}(\theta)\,\mathbf{T}_{b_1}\,
\mathbf{T}_{b_1 c_1}(\alpha_1)\,\mathbf{T}_{c_1}\,
\mathbf{T}_{c_1 d_1}(\beta_1)\,\mathbf{T}_{d_1}\,\mathbf{T}_{d_1 a}.$
Given \(\theta\), we obtain \(\alpha_1(\theta),\beta_1(\theta)\) by a standard Newton–Raphson procedure and then enforce the outer-loop closure
\begin{equation}\label{eq:type-d-loop-2}
\mathbf{t}\!\left(\overline{\mathbf{T}}_{a d_2}\right)-\mathbf{t}\!\left(\mathbf{T}_{a d_2}\right)=\mathbf{0},
\end{equation}
with
$\overline{\mathbf{T}}_{a d_2} = \mathbf{T}_{ab_1}(\theta)\, \mathbf{T}_{b_1}\, \mathbf{T}_{b_1 c_1}(\alpha_1)\, \mathbf{T}_{b_2}\, \mathbf{T}_{b_2 c_2}(\alpha_2)\, \mathbf{T}_{c_2}\, \mathbf{T}_{c_2 d_2}(\beta_2)\,\mathbf{T}_{d_2 a}$
which yields \(\alpha_2(\theta),\beta_2(\theta)\).

Transformations from \(L_a\) to \(L_t\) (orange) and from \(L_a\) to \(L_r\) (blue) are illustrated in Figure~\ref{fig:type-d}. Based on the relative positions of \(L_a\), \(J_{p_r r}\), and \(J_{p_t t}\) (the green triangle in \mbox{Figure~\ref{fig:type-d}(right)}), the actuator length is
\begin{equation}\label{eq:type-d-l}
l=\left\lVert
\mathbf{t}\!\left(\mathbf{T}_{a t}\right)-
\mathbf{t}\!\left(\mathbf{T}_{a r}\right)
\right\rVert ,
\end{equation}
where \(\mathbf{T}_{a t}\) and \(\mathbf{T}_{a r}\) are the path transformations from \(L_a\) to \(L_t\) and \(L_r\), respectively.
Let \(f^{2}:\mathbb{R}\to\mathbb{R}\) map \(\theta\) to the actuator length \(l\) (the superscript indicates the two nested closures). 
Here, \(f^{2}\) is defined implicitly: solve the inner closure \eqref{eq:type-d-loop-1} for \(\alpha_{1},\beta_{1}\), then the outer closure \eqref{eq:type-d-loop-2} for \(\alpha_{2},\beta_{2}\); compute \(l\) thereafter.
For Type~D, the FK solver \(\varphi\) reduces to the scalar equation
\begin{equation}\label{eq:type-d}
f^{2}(\theta)-l^*=0.
\end{equation}

\hl{The superscript} in \(f^{2}\) signifies two nested loop closures (a generalized four-bar). By contrast, \(f\) in~\eqref{eq:type-c} corresponds to a single four-bar, whereas \(f_A^{0}\) and \(f_B^{0}\) in~\eqref{eq:type-a} and~\eqref{eq:type-b} denote the prismatic- and revolute-equivalent (trivial) cases.

\begin{figure}[htbp]
  \centering
        \includegraphics[width=0.9\linewidth]{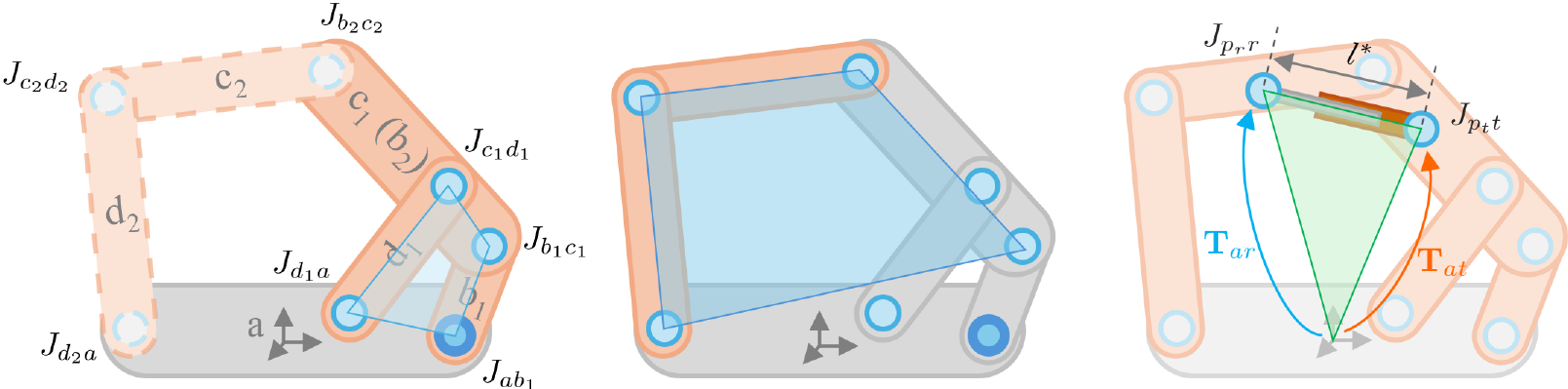}
    \caption{\hl{Type D}.
    The actuator lies on two nested four-bars: the inner \({Fb}_1=[L_a,L_{b_1},L_{c_1},L_{d_1}]\) (\textbf{left}, light blue) and the outer \({Fb}_2=[L_a,L_{b_2},L_{c_2},L_{d_2}]\) (\textbf{middle}, dark blue), which share the link \(L_{c_1}\equiv L_{b_2}\).
    The designated input is \(J_{ab_1}\) (angle \(\theta\)); solving the inner closure for \(\alpha_1,\beta_1\) and then the outer closure for \(\alpha_2,\beta_2\) determines the configuration.
    Orange/blue paths show transformations from \(L_a\) to \(L_t\)/\(L_r\)~(\textbf{right}). }
    \label{fig:type-d}
\end{figure}

\subsection{Forward Kinematics}

With the per-ITEP forward solvers $\varphi$ established, the joint transformations along the ITEP of actuator $A_i$ can be computed from its target length $l_i^*$. 
Leveraging ITEP independence, the robot-level forward kinematics reduces to sequentially solving the FK of each actuator.
Formally, the forward kinematics of the mining robot can be expressed as
\begin{equation} \label{eq:Phi}
     \mathcal{T} = \Phi\left(\mathcal{L}^*, \mathcal{T}' \right)
\end{equation}
where $\mathcal{L}^* = \{ l_i^* \}$ denotes the set of target lengths for all actuators.
More specifically, the computation is carried out sequentially by applying the kinematics solver $\varphi$ to each actuator in turn:
\begin{equation} \label{eq:Phi-detail}
    \Phi\left(\mathcal{L}^*, \mathcal{T}' \right)=
       \varphi\left( l^*_{n_A-1}, \varphi\left( l_{n_A-2}^*, \cdots \varphi\left( l^*_0, \mathcal{T}' \right) \cdots \right) \right)
\end{equation}
which represents the iterative update of joint transformations along the ITEPs, one actuator at a time.

The complete procedure for solving the forward kinematics of a mining robot is presented in Algorithm~\ref{algo:3}.
The input to the algorithm is the set of target actuator lengths $\mathcal{L}^*$ and the current configuration $\mathcal{T}'$ of the robot. The output is an updated configuration $\mathcal{T}$ in which all actuators satisfy their respective target lengths.
For the first FK call, when no previous configuration is available, \(\mathcal{T}'\) is initialized from the MRDF reference configuration, which is constructed from the origin translation and orientation fields of all links and joints.
For subsequent calls, \(\mathcal{T}'\) is warm-started from the configuration returned by the previous FK/IK query.
For each actuator $A_i$, if the target length $l_i^*$ is equal to its current length $l_i'$ (that is, the actuator is \textit{\hl{inactive}}), the algorithm proceeds directly to the next actuator (Steps~2--3). 
Otherwise, the actuator is considered \textit{\hl{active}}, and the corresponding root-finding problem is solved according to its ITEP type $P_i$.
(i) For Type A, the solver computes $\theta$ by solving the root-finding problem in (\ref{eq:type-a}), and updates the joint transformation $\mathbf{T}_{p_t p_r}(\theta)$ accordingly (Steps~4--6).
(ii) For Type B, $\theta$ is obtained by solving (\ref{eq:type-b}), and the transformation $\mathbf{T}_{p_t p_r}(\theta)$ is subsequently updated (Steps~7--9).
(iii) For Type C, the solver returns $\theta$, $\alpha$, and $\beta$ by solving (\ref{eq:type-c}), followed by updating the joint transformations $\mathbf{T}_{ab}(\theta)$, $\mathbf{T}_{bc}(\alpha)$, and $\mathbf{T}_{cd}(\beta)$ (Steps~\mbox{10--12}).
(iv) For Type D, the root-finding problem in (\ref{eq:type-d}) yields $\theta$, $\alpha_1$, $\beta_1$, $\alpha_2$, and $\beta_2$, and the corresponding transformations $\mathbf{T}_{ab_1}(\theta)$, $\mathbf{T}_{b_1c_1}(\alpha_1)$, $\mathbf{T}_{c_1d_1}(\beta_1)$, $\mathbf{T}_{b_2c_2}(\alpha_2)$, and $\mathbf{T}_{c_2d_2}(\beta_2)$ are updated accordingly (Steps~13--15).
Detailed solution strategies for each case are provided in Section~\ref{sec:itep-solvers}.

\vspace{12pt}
\begin{algorithm}[htbp]
\small
\caption{Forward kinematics solver.} \label{algo:3}
\KwIn{Target actuator lengths $\mathcal{L}^*$ and current configuration $\mathcal{T}'$ of a mining robot $\mathcal{R}$}
\KwOut{Updated configuration $\mathcal{T}$ satisfying $\mathcal{L}^*$}

\For{$A_i \in \mathcal{A}$}{
    \If{$l_i^* = l_i'$}{
        \textbf{continue}
    }

    \If{type of $P_i = A$}{
        $\theta \gets$ Solve (\ref{eq:type-a}) \\
        Update $\mathbf{T}_{p_t p_r} \in \mathcal{T}$
    }
    \ElseIf{type of $P_i = B$}{
        $\theta \gets$ Solve (\ref{eq:type-b}) \\
        Update $\mathbf{T}_{p_t p_r} \in \mathcal{T}$
    }
    \ElseIf{type of $P_i = C$}{
        $\theta, \alpha, \beta \gets$ Solve (\ref{eq:type-c}) \\
        Update $\{ \mathbf{T}_{ab}, \mathbf{T}_{bc}, \mathbf{T}_{cd} \} \subset \mathcal{T}$
    }
    \Else{
        $\theta, \alpha_1, \beta_1, \alpha_2, \beta_2 \gets$ Solve (\ref{eq:type-d}) \\
        Update $\{ \mathbf{T}_{ab_1}, \mathbf{T}_{b_1c_1}, \mathbf{T}_{c_1d_1}, \mathbf{T}_{b_2c_2}, \mathbf{T}_{c_2d_2} \} \subset \mathcal{T}$
    }
}

\For{$A_i \in \mathcal{A}$}{
    $t, r, p_t, p_r \gets \textsc{getActuatorLinks}(\boldsymbol{At}, \boldsymbol{Ar}, \boldsymbol{J}, i)$ \\
    $\theta_t, \theta_r \gets$ Solve (\ref{eq:lookat}) \\
    Update $\{ \mathbf{T}_{p_t t}, \mathbf{T}_{p_r r} \} \in \mathcal{T}$
}
\end{algorithm}
\vspace{12pt}

After solving the actuator ITEPs, all actuators satisfy their target lengths, although the tube and rod body frames may not yet be aligned with the line connecting the two mounting points. We therefore apply a lightweight endpoint-alignment refinement to maintain the geometric consistency of actuator-body orientations (Steps~16--17). 
This step is consistent with endpoint-based hydraulic-actuator representations in multibody models~\cite{khadim2023experimental}; it is applied only as a post-processing operation and does not affect subsequent FK/IK solver iterations. Its detailed formulation is provided in Appendix~\ref{appendix:B1}.

The runtime of Algorithm~\ref{algo:3} is dominated by solving the ITEP subproblems for active actuators. 
The final endpoint-alignment refinement involves only constant-cost vector operations and is negligible compared to the ITEP solving process.
Let \(m\) be the number of active actuators and \(a,b,c,d\) the counts of Types~A–D (\(m=a{~+~}b{~+~}c{~+~}d\)). With per-type constant costs \(t_A,t_B,t_C,t_D\), the total time is
$t \;\approx\; a\,t_A \;+\; b\,t_B \;+\; c\,t_C \;+\; d\,t_D,$ and hence the overall complexity is \(O(m)\). Quantitative details are reported in Section~\ref{sec:experiments}.
In summary, the forward kinematics computes the full joint configuration $\mathcal{T}$ from $\mathcal{L}^*$ by exploiting the preconstructed ITEPs. 
Leveraging ITEP independence, it avoids the manual construction and complex solving of the large nonlinear systems required by conventional approaches, and supports independent updates for arbitrary actuator subsets, satisfying flexibility and performance requirements in planning, training, and digital-twin workflows.

\subsection{Inverse Kinematics}

The inverse kinematics of the mining robot aims to determine the set of actuator target lengths $\mathcal{L}^*$ that minimizes the discrepancy between the global transformation $\mathbf{T}_{0e}$ of a designated end-effector link $L_e$ and a given target transformation $\mathbf{T}^*$.
See Section~\ref{sec:experiments} 
 for an example in which the roadheader cutter head and the hydraulic support canopy are solved by the IK solver to meet the given target transformations.
Given a set of actuator lengths $\mathcal{L}$ and an initial configuration $\mathcal{T}'$, the corresponding robot configuration can be obtained via the forward kinematics algorithm as $\Phi\left( \mathcal{L}, \mathcal{T}' \right)$. Based on this configuration, the global transformation of the end-effector link $L_e$ is computed according to (\ref{eq:Tpq}), and is denoted as $\mathbf{T}_{0e}\left( \Phi\left( \mathcal{L}, \mathcal{T}' \right) \right)$.

We measure the pose difference between the end-effector transformation extracted from the FK result \(\Phi(\mathcal{L},\mathcal{T}')\) and the target transformation \(\mathbf{T}^*\) using the Lie algebra representation of their relative transformation~\cite{lynch2017modern}.
Specifically, let
\begin{equation}
    \log\left(
        \mathbf{T}_{0e}^{-1}
        \left( \Phi \left(\mathcal{L}, \mathcal{T}'\right)\right)
        \mathbf{T}^*
    \right)^\vee
    =
    \begin{bmatrix}
    \boldsymbol{\rho} \\
    \boldsymbol{\theta}
    \end{bmatrix},
\end{equation}
where \(\mathbf{T}_{0e}\left(\Phi(\mathcal{L},\mathcal{T}')\right)\) is the end-effector transformation extracted from the FK result, \(\boldsymbol{\rho}\in\mathbb{R}^3\) is the translational component of the logarithmic pose error, and \(\boldsymbol{\theta}\in\mathbb{R}^3\) is the rotational component represented as an axis-angle vector.
Because \(\boldsymbol{\rho}\) has units of length whereas \(\boldsymbol{\theta}\) is measured in radians, we normalize the translational component by a characteristic length \(L_c\).
In this work, \(L_c\) is defined as the distance from the base frame to the selected end-effector in the MRDF reference configuration:
$
    L_c =
    \left\|
    \mathbf{t}\!\left(
    \mathbf{T}_{0e}\left(\mathcal{T}^{\mathrm{ref}}\right)
    \right)
    \right\|_2 
$,
where \(\mathbf{t}(\cdot)\) extracts the translation vector and \(\mathcal{T}^{\mathrm{ref}}\) is the MRDF reference configuration.
The IK objective is defined as the dimensionless weighted pose residual
\begin{equation}
    \Psi(\mathcal{L}, \mathbf{T}^*) =
    \left\|
    \begin{bmatrix}
    \boldsymbol{\rho}/L_c \\
    \boldsymbol{\theta}
    \end{bmatrix}
    \right\|_2
    =
    \sqrt{
    \left\|\boldsymbol{\rho}/L_c\right\|_2^2
    +
    \left\|\boldsymbol{\theta}\right\|_2^2
    } .
\end{equation}

This implies that the evaluation of the objective function $\Psi$ relies on solving the forward kinematics $\Phi$.
Inverse kinematics is thus formulated as the following optimization~problem:
\begin{align} \label{eq:ik-opt}
\mathcal{L}^* = & \mathop{\arg\min}\limits_{\mathcal{L}} ~ \Psi(\mathcal{L}, \mathbf{T}^*) \notag \\
&\mathop{s.t.} \quad \forall l_i \in \mathcal{L}, \, \ell_i \leq l_i \leq u_i
\end{align}
where $\ell_i$ and $u_i$ denote the lower and upper bounds of actuator $i$.

To solve the optimization problem (\ref{eq:ik-opt}), we propose a stable, efficient and easy-to-implement algorithm, as detailed in Algorithm~\ref{alg:ik}.
The algorithm iteratively optimizes actuator lengths and terminates when the objective change between two successive outer iterations falls below a tolerance (set to \(10^{-6}\)) or when the maximum iteration budget is reached.
This tolerance is used to detect numerical convergence of the optimization process; final pose accuracy is evaluated separately using the residual \(\Psi^*\).
First, a marking vector \( \mathbf{M} \in \mathbb{R}^{n_A \times 1} \) is used to record whether an actuator needs to be optimized \( M_i = 0 \) or not \( M_i = 1 \) (Step~1).
For each actuator \( i \), it and all of its redundant actuators are directly marked with \( 1 \) (Steps~5--7).
Next, we check whether the parent links of the tube and rod of \( A_i \), denoted \( L_{p_t} \) and \( L_{p_r} \), are connected to the end-effector link \( L_e \).  
If neither is connected, the motion of \(A_i\) has no effect on \(L_e\), and we refer to such an actuator as an \textit{\hl{irrelevant actuator}}.
For example, in Figure~\ref{fig:sp1}, the motion of \( A_{10} \) does not affect the canopy \( L_4 \).
Otherwise, the actuator is \textit{\hl{relevant}} (i.e., needs to be optimized) (Steps~8--12).
This preprocessing procedure eliminates all irrelevant actuators, ensuring that only one actuator per redundancy group needs to be optimized.

\begin{algorithm}[htbp]
\small
\caption{Inverse kinematics solver.}\label{alg:ik}
\KwIn{Target transformation $\mathbf{T}^*$ of the end-effector; current configuration $\mathcal{T}'$ of mining robot $\mathcal{R}$}
\KwOut{Optimal actuator lengths $\mathcal{L}^*$}

$\boldsymbol{M} \gets \mathbf{0}^{n_A \times 1}$ \\
\For{$i \gets 0$ \KwTo $n_A - 1$}{
    \If{$\boldsymbol{M}_i \neq 0$}{
        \textbf{continue}
    }

    \For{$j \gets 0$ \KwTo $n_A - 1$}{
        \If{$\boldsymbol{Rd}_{i,j} \neq 0$}{
            $\boldsymbol{M}_j \gets 1$
        }
    }
    
    $p_t, p_r \gets \textsc{getActuatorLinks}(\boldsymbol{At}, \boldsymbol{Ar}, \boldsymbol{J}, i)$ \\
    $P_a \gets \textsc{getTopoPath}(\boldsymbol{J'}, p_t, e)$ \\
    $P_b \gets \textsc{getTopoPath}(\boldsymbol{J'}, p_r, e)$ \\
    \If{$P_a \neq [\,] \lor P_b \neq [\,]$}{
        $\boldsymbol{M}_i \gets 0$
    }
}

\For{$k \gets 0$ \KwTo $maxIter$}{
    \For{$i \gets 0$ \KwTo $n_A - 1$}{
        \If{$ \boldsymbol{M}_i = 0 $}{
            $l_i^k \gets \textsc{GoldenSectionSearch}(i)$
        }
    }
    $\Psi^k \gets \Psi(\mathcal{L}^k, \mathbf{T}^*)$ \\
    \If{$|\Psi^k - \Psi^{k-1}| < \text{tol}$}{
        $\mathcal{L}^* \gets \mathcal{L}^k$ \\
        \textbf{break}
    }
}
\end{algorithm}
\vspace{3pt}

In each iteration step $k$, the algorithm sequentially optimizes the length of each actuator using Golden Section Search (GSS)~\cite{kiefer1953sequential,conn2009introduction} over its bound $[\ell_i, u_i]$ (Steps~14--16).
After the search over all marked actuators is completed, the current objective value $\Psi^{k}$ is evaluated (Step~17).
If the absolute difference from the previous iterate is below the tolerance ($10^{-6}$), the IK optimization is regarded as numerically converged (Step~18).
Each objective evaluation invokes the FK routine to update all actuator states; consequently, upon termination the current lengths are returned as the optimal set $\mathcal{L}^{*}$.
The detailed GSS procedure is provided in Appendix~\ref{appendix:B22}.

Ignoring preprocessing and treating each objective evaluation $\Psi(\cdot)$ as a constant-cost FK call ($C_{\mathrm{FK}}$), with $S$ optimized actuators and $K$ outer iterations, GSS needs $\mathcal{O}(\ln((u_i-\ell_i)/\text{tol}))$ evaluations per variable.
One sweep thus uses $\mathcal{O}(S\,\ln(W_{\max}/\text{tol}))$ evaluations, and the overall time complexity is
$\mathcal{O}\big(K\,S\,C_{\mathrm{FK}}\,\ln(W_{\max}/\text{tol})\big)$,
where $W_{\max}=\max_i(u_i-\ell_i)$.

Due to the independence of ITEPs, the multidimensional optimization problem in (\ref{eq:ik-opt}) is decomposed at each iteration into a sequence of bounded one-dimensional subproblems.
This decomposition, together with the Gauss--Seidel-style outer update strategy, is the primary source of the efficiency gain of our IK solver.
A detailed numerical comparison against classical gradient-based and global optimization methods is provided in Section~\ref{sec:experiments}.

In the implementation reported in this paper, each one-dimensional subproblem is solved by GSS.
We use GSS as the default scalar solver because the actuator update is naturally bounded by $[\ell_i,u_i]$, the objective value is obtained through FK evaluations rather than an analytic derivative, and interval-reduction search avoids extra derivative estimation and boundary-handling steps.
This makes GSS simple and stable for repeated actuator-level updates.
The use of GSS also has limitations: it is a local bounded one-dimensional search applied after fixing the other actuator lengths, and therefore it does not by itself guarantee a globally optimal solution for the original coupled, nonconvex IK problem.
For strongly coupled mechanisms, several Gauss--Seidel-style sweeps may be required before the actuator updates stabilize, and the final solution can still depend on the target pose, bounds, and initialization.
Thus, GSS is used here as a robust bounded scalar-search component within the ITEP-based decomposition, rather than as a general global optimizer.
Other bounded one-dimensional solvers can also be used; the ablation in Appendix~\ref{appendix:B22} shows that the dominant efficiency gain comes from the ITEP-based decomposition itself, while the specific choice of the 1D solver plays a secondary role.

\section{Experiments and Results} \label{sec:experiments}

In this section, we present experimental evaluations and analysis of the four algorithms proposed in this paper. 
The effectiveness of topology construction Algorithms~\ref{alg:generalized-joints} and \ref{alg:topo} in processing robot topologies is first qualitatively demonstrated using several representative mining robot examples.
This is followed by a quantitative evaluation of the kinematics solving Algorithms~\ref{algo:3} and \ref{alg:ik} through numerical experiments.

\subsection{Topology Construction Results} 
We begin with a detailed demonstration of Algorithms~\ref{alg:generalized-joints} and \ref{alg:topo} on a complex mining robot, the Top Coal Caving Hydraulic Support Robot. 
Several other representative mining robot examples are then briefly presented in topological graphs.

\begin{figure}[htbp] 
  \centering
        \includegraphics[width=1.0\linewidth]{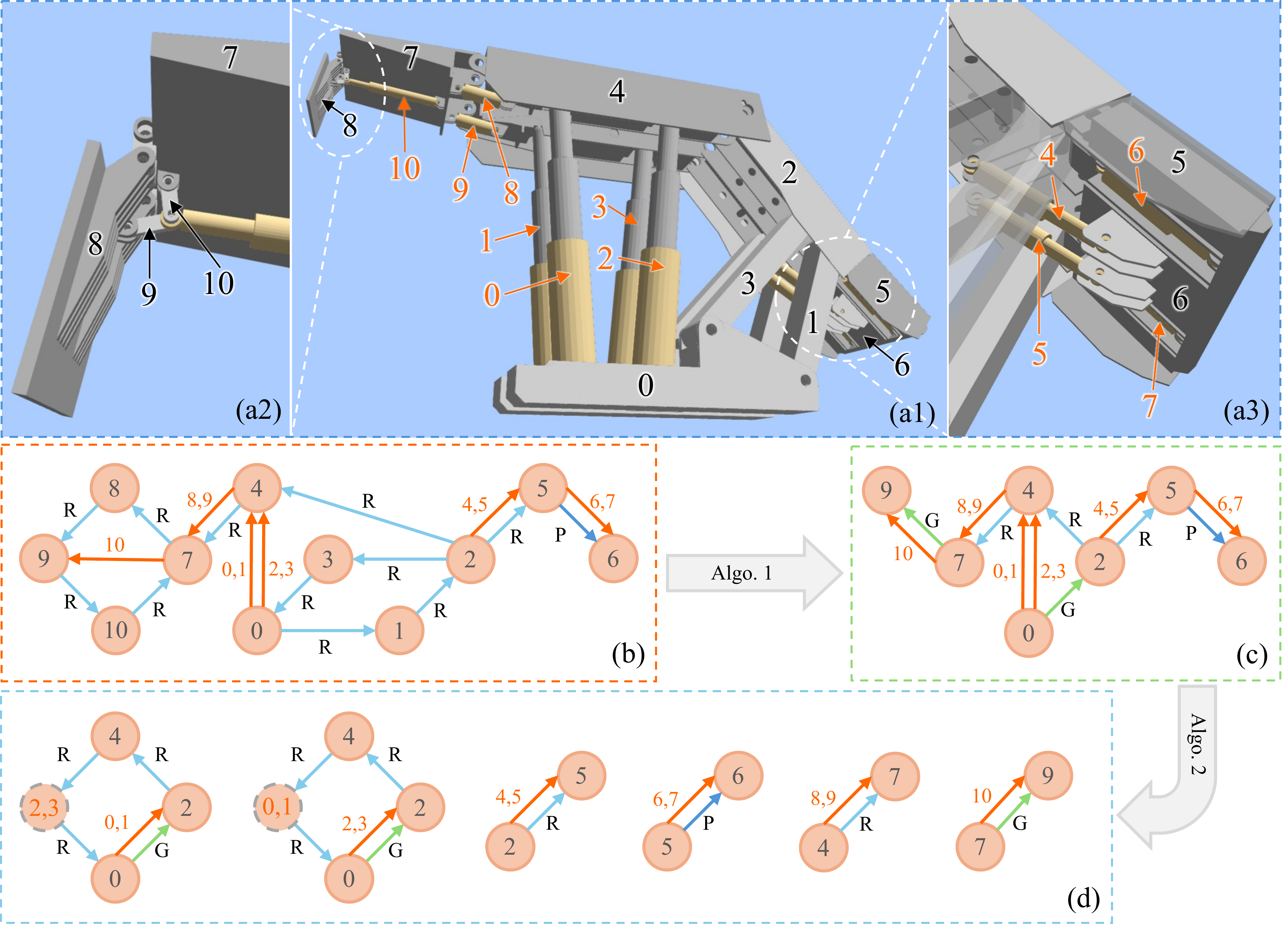}
    \caption{\hl{Topology} 
 construction of Top Coal Caving Hydraulic Support Robot.
  construction of Top Coal Caving Hydraulic Support Robot. 
(\textbf{a1})--(\textbf{a3}) Link and actuator annotations of the robot; 
(\textbf{b}) initial topological graph; 
(\textbf{c}) topology after four-bar contraction; 
(\textbf{d}) extracted actuator ITEPs.} 
    \label{fig:sp1}
\end{figure}

Figure~\ref{fig:sp1} shows the model and topology processing of the Top Coal Caving Hydraulic Support Robot.
Figure~\ref{fig:sp1} (a1),(a2),(a3) show the rendered model, where the indices of links and actuators are marked in black and orange, respectively.
For brevity, we omit the link indices of all actuator tubes and rods.
This robot has a complex structure, featuring two pairs of supporting hydraulic actuators, front ($A_0$, $A_1$) and rear ($A_2$, $A_3$), as well as a multilevel front beam (see Figure~\ref{fig:sp1}(a2)) and a telescopic tail beam (see Figure~\ref{fig:sp1}(a3)).
In the topological graph shown in Figure~\ref{fig:sp1}b, the links, joints, and actuators are represented as nodes, blue directed edges, and orange directed edges, respectively. Redundant actuators are visualized using a shared arrow.
This topological information is encoded in the MRDF and is parsed into matrices $\boldsymbol{J}$, $\boldsymbol{Rd}$, $\boldsymbol{At}$, and $\boldsymbol{Ar}$ (see Equation~\eqref{eq:sp1} for $\boldsymbol{J}$ and $\boldsymbol{Rd}$):
\begin{equation} \label{eq:sp1}
\begin{aligned}
\boldsymbol{J} &=
\scalebox{0.4}{$
\begin{blockarray}{rccccccccccccccc}
         & 0 & 1 & 2 & 3 & 4 & 5 & 6 & 7 & 8 & 9 & 10 & 11 & \cdots & 32\\  
\begin{block}{r(ccccccccccc|cccc)}
0        & 0 & 1 & 0 & 0 & 0 & 0 & 0 & 0 & 0 & 0 & 0  & 0  & \cdots & 0 \\
1        & 0 & 0 & 1 & 0 & 0 & 0 & 0 & 0 & 0 & 0 & 0  & 0  & \cdots & 0 \\
2        & 0 & 0 & 0 & 1 & 1 & 1 & 0 & 0 & 0 & 0 & 0  & 0  & \cdots & 0 \\
3        & 1 & 0 & 0 & 0 & 0 & 0 & 0 & 0 & 0 & 0 & 0  & 0  & \cdots & 0 \\
4        & 0 & 0 & 0 & 0 & 0 & 0 & 0 & 1 & 0 & 0 & 0  & 0  & \cdots & 0 \\
5        & 0 & 0 & 0 & 0 & 0 & 0 & 2 & 0 & 0 & 0 & 0  & 0  & \cdots & 0 \\
6        & 0 & 0 & 0 & 0 & 0 & 0 & 0 & 0 & 0 & 0 & 0  & 0  & \cdots & 0 \\
7        & 0 & 0 & 0 & 0 & 0 & 0 & 0 & 0 & 1 & 0 & 0  & 0  & \cdots & 0 \\
8        & 0 & 0 & 0 & 0 & 0 & 0 & 0 & 0 & 0 & 1 & 0  & 0  & \cdots & 0 \\
9        & 0 & 0 & 0 & 0 & 0 & 0 & 0 & 0 & 0 & 0 & 1  & 0  & \cdots & 0 \\
10       & 0 & 0 & 0 & 0 & 0 & 0 & 0 & 1 & 0 & 0 & 0  & 0  & \cdots & 0 \\
\cline{2-15}
11       & 0 & 0 & 0 & 0 & 0 & 0 & 0 & 0 & 0 & 0 & 0  & 0  & \cdots & 0 \\
\vdots   & \vdots & \vdots & \vdots & \vdots & \vdots & \vdots & \vdots & \vdots & \vdots & \vdots & \vdots & \vdots & \ddots & \vdots \\
32       & 0 & 0 & 0 & 0 & 0 & 0 & 0 & 0 & 0 & 0 & 0  & 0  & \cdots & 0 \\
\end{block}
\end{blockarray}
$}
, \quad
\boldsymbol{Rd} =
\scalebox{0.5}{$
\begin{blockarray}{rcccccccccccc}
		 & 0 & 1 & 2 & 3 & 4 & 5 & 6 & 7 & 8 & 9 & 10 \\  
\begin{block}{r(cccccccccccc)}
0        & 1 & 1 & 0 & 0 & 0 & 0 & 0 & 0 & 0 & 0 & 0 \\
1        & 1 & 1 & 0 & 0 & 0 & 0 & 0 & 0 & 0 & 0 & 0 \\
2        & 0 & 0 & 1 & 1 & 0 & 0 & 0 & 0 & 0 & 0 & 0 \\
3        & 0 & 0 & 1 & 1 & 0 & 0 & 0 & 0 & 0 & 0 & 0 \\
4        & 0 & 0 & 0 & 0 & 1 & 1 & 0 & 0 & 0 & 0 & 0 \\
5        & 0 & 0 & 0 & 0 & 1 & 1 & 0 & 0 & 0 & 0 & 0 \\
6        & 0 & 0 & 0 & 0 & 0 & 0 & 1 & 1 & 0 & 0 & 0 \\
7        & 0 & 0 & 0 & 0 & 0 & 0 & 1 & 1 & 0 & 0 & 0 \\
8        & 0 & 0 & 0 & 0 & 0 & 0 & 0 & 0 & 1 & 1 & 0 \\
9        & 0 & 0 & 0 & 0 & 0 & 0 & 0 & 0 & 1 & 1 & 0 \\
10       & 0 & 0 & 0 & 0 & 0 & 0 & 0 & 0 & 0 & 0 & 1 \\
\end{block}
\end{blockarray}
$}
\end{aligned}
\end{equation}

After Algorithm~\ref{alg:generalized-joints}, all four-bars ($[L_{0},L_{1},L_{2},L_{3}]$ and $[L_{7},L_{8},L_{9},L_{10}]$) have been contracted into generalized joints as indicated by the green directed arrows in Figure~\ref{fig:sp1}c.
The ITEPs of all actuators are then obtained by Algorithm~\ref{alg:topo} (see Figure~\ref{fig:sp1}d).
In addition, the topology processing results of several other representative mining robots are shown in Figure~\ref{fig:cases}. These include two important types of hydraulic support robots: the Shield-Type Hydraulic Support Robot (Figure~\ref{fig:cases}a) and the Support-Shield Hydraulic Support Robot (Figure~\ref{fig:cases}b), as well as the Roadheader Robot (Figure~\ref{fig:cases}c). Several other robots with relatively simple topologies—namely, the Load-Haul-Dump Robot, Ventilation Door Robot, Drilling Jumbo Robot, and Shearer Robot—are illustrated in Figure~\ref{fig:cases}d.
The numbers of links, joints, and actuators for all mining robots, after merging fixed-joint chains during preprocessing, are summarized in Table~\ref{tab:robot_statistics}.

\begin{figure}[htbp]
\centering 
    \includegraphics[width=1.0\linewidth]{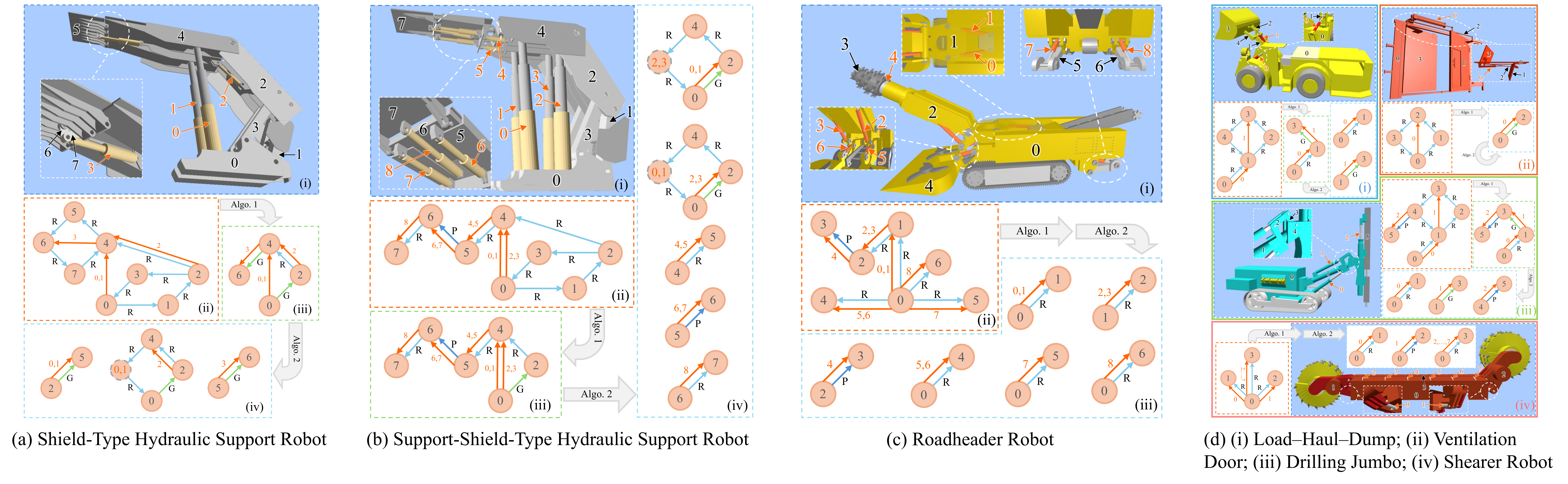}
    \caption{\hl{Topology} 
  construction of various mining robots. 
  Panels (\textbf{a})--(\textbf{c}) show the Shield-Type Hydraulic Support Robot, Support-Shield-Type Hydraulic Support Robot, and Roadheader Robot, respectively. 
  Panel (\textbf{d}) summarizes four auxiliary mining robots: 
  (\textbf{i}) Load--Haul--Dump, 
  (\textbf{ii}) Ventilation Door, 
  (\textbf{iii}) Drilling Jumbo, and 
  (\textbf{iv}) Shearer Robot.
 }
    \label{fig:cases}
\end{figure}

\vspace{-6pt}

\begin{table}[htbp]
  \caption{Mining robots and their structural information.}
  \label{tab:robot_statistics}
  \centering 
  \resizebox{\textwidth}{!}{
    \begin{tabular}{lcccl}
      \toprule
      \textbf{Mining Robot} & \textbf{Links} & \textbf{Joints} & \textbf{Actuators} & \textbf{End-Effector} \\
      \midrule
      Top Coal Caving Hydraulic Support Robot (TCCHS) (Figure~\ref{fig:sp1}) & 11 & 34 & 11 & Canopy ($L_4$) \\
      Shield-Type Hydraulic Support Robot (STHS) (Figure~\ref{fig:cases}a) & 8 & 16 & 4 & Canopy ($L_4$) \\
      Support-Shield Hydraulic Support Robot (SSHS) (Figure~\ref{fig:cases}b) & 8 & 22 & 7 & Canopy ($L_4$) \\
      Roadheader Robot (RH) (Figure~\ref{fig:cases}c) & 7 & 25 & 9 & Cutter Head ($L_3$)\\
      Load-Haul-Dump Robot (LHD) (Figure~\ref{fig:cases}d(i)) & 5 & 9 & 2 & Bucket ($L_3$) \\
      Ventilation Door Robot (VD) (Figure~\ref{fig:cases}d(ii)) & 4 & 6 & 1 & Door Leaf ($L_1$ or $L_3$) \\
      Drilling Jumbo Robot (DJ) (Figure~\ref{fig:cases}d(iii)) & 6 & 12 & 3 & Rock Drill ($L_5$) \\
      Shearer Robot (SH) (Figure~\ref{fig:cases}d(iv)) & 4 & 20 & 8 & Cutter Drum ($L_1$ or $L_2$) \\
      \bottomrule
    \end{tabular}
  }
\end{table}

These examples demonstrate the topology construction process for mining robots. Algorithm~\ref{alg:generalized-joints} contracts all four-bar linkages into generalized joints, after which Algorithm~\ref{alg:topo} constructs the ITEP for each actuator.
As the preprocessing stage of the MineRobot framework, these algorithms automatically construct ITEPs from the topological information recorded in the MRDF, serving as the foundation for forward and inverse kinematics and avoiding the tedious manual kinematics analysis required by traditional methods.

\subsection{Numerical Results of Kinematics Solvers}

The following presents the quantitative evaluation and analysis of the proposed kinematics solvers.
We first evaluate the forward kinematics computation time on various mining robots and analyze the scalability of the algorithm.
Then, we present the convergence behavior and performance of the inverse kinematics algorithm and compare it with several alternative methods.
All timing results were obtained on a laptop equipped with an AMD Ryzen~7 5800H CPU, 32~GB of RAM, and an NVIDIA GeForce RTX~3060 Laptop GPU. 
The solver was implemented in C++ using Eigen~3.4.0~\cite{eigenweb}.
For reproducibility, all randomized experiments use a fixed random seed of 42 and 100 trials per robot or per target set unless otherwise stated.
Actuator lengths are sampled uniformly within the MRDF-defined bounds.
The solver parameters, optimizer settings, reference-workflow settings, and data-availability details are summarized in Appendix~\ref{appendix:experiment-details}.

\subsubsection{Forward Kinematics Solver}

To evaluate the performance of the forward kinematics solver (Algorithm~\ref{algo:3}), we design the following experiment.
For a given robot \( \mathcal{R} \), each actuator may be active or inactive.  
Let \( a \), \( b \), \( c \), and \( d \) denote the number of active actuators of each ITEP type, respectively.  
Each valid combination of \( (a, b, c, d) \) corresponds to one experimental case.
For each case, we randomly generate a set of target actuator lengths \( \mathcal{L}^* \),  
where the target length \( l_i^* \) of each active actuator \( A_i \) is sampled uniformly from the interval \( [\ell_i, u_i] \).  
The initial configuration \( \mathcal{T}' \) of the robot is also randomly initialized.  
The solver is then executed to compute the resulting configuration \( \mathcal{T} \).  
An FK trial is considered successful if all active actuator-length residuals and all loop-closure residuals are below the solver tolerance \(10^{-6}\).
Each case is repeated 100 times with randomized inputs, and the average execution time is recorded.
Table~\ref{tab:fk_time} presents a subset of the results, specifically the average runtime when all actuators are active.

\begin{table}[htbp]
  \caption{Forward kinematics computation time.}
  \label{tab:fk_time}
  \centering
  \resizebox{0.7\textwidth}{!}{
    \begin{tabular}{lcccccccc}
      \toprule
      & \textbf{TCCHS} & \textbf{STHS} & \textbf{SSHS} & \textbf{RH} & \textbf{LHD} & \textbf{VD} & \textbf{DJ} & \textbf{SH} \\
      \midrule
      $a$ & 2 & 0 & 2 & 1 & 1 & 0 & 1 & 0 \\
      $b$ & 4 & 0 & 3 & 8 & 0 & 0 & 1 & 8 \\
      $c$ & 1 & 3 & 0 & 0 & 1 & 1 & 1 & 0 \\
      $d$ & 4 & 1 & 4 & 0 & 0 & 0 & 0 & 0 \\
      Time (ms) & 1.0557 & 0.5623 & 0.8693 & 0.0091 & 0.1762 & 0.1764 & 0.1691 & 0.0072 \\
      \bottomrule
    \end{tabular}
  }
\end{table}

The experimental results demonstrate that our forward kinematics algorithm is highly efficient. 
For the most complex robot, TCCHS, the computation time is approximately 1~ms when all actuators are active, which meets the requirements for real-time applications.
Conventional methods formulate one closed-loop constraint equation per actuator, yielding a coupled nonlinear system that scales with the total number of actuators. This system must be solved in full, regardless of actuator activity.
By exploiting the independence of ITEPs, our approach decouples the problem and achieves linear scalability---the computational cost scales linearly with the number of active actuators.
According to the procedure of Algorithm~\ref{algo:3}, the total execution time \( t \) can be approximated as
\begin{equation}
\label{eq:fk-time-model}
    t \approx a \cdot t_A + b \cdot t_B + c \cdot t_C + d \cdot t_D,
\end{equation}
where \( a \), \( b \), \( c \), and \( d \) denote the number of active actuators of each ITEP type,  
and \( t_A \), \( t_B \), \( t_C \), and \( t_D \) represent the average time required to solve each type,  
including both numerical root-finding and auxiliary overhead.
We perform a linear regression using the values of \( a \), \( b \), \( c \), and \( d \) from each experimental case to fit the execution time \( t \).  
The resulting coefficients are \( t_A = 0.0025 \)~ms, \( t_B = 0.0022 \)~ms, \( t_C = 0.1850 \)~ms, and \( t_D = 0.2396 \)~ms.  
Using these coefficients, the predicted execution times closely match the measured values, achieving a coefficient of determination \(R^2 = 0.9936\) and a root mean squared error (RMSE) of 0.02948~ms.
Figure~\ref{fig:fk-runtime-regression} compares the measured FK runtimes with the values predicted by the linear model; 
the close alignment with the 45-degree reference line supports the scalability analysis of the FK solver.

\begin{figure}[htbp]
  \centering
    \includegraphics[width=0.4\linewidth]{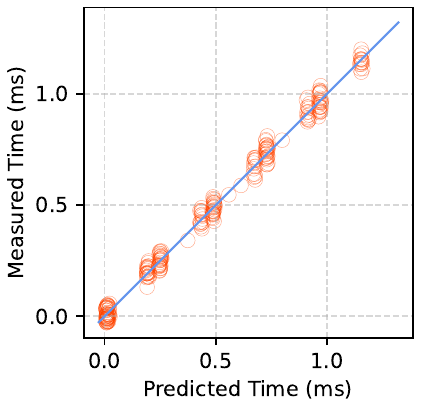}
    \caption{Measured versus predicted FK runtimes for the scalability analysis.
    Each orange point corresponds to one randomized FK timing case.
    The horizontal axis shows the runtime predicted by the linear model in Equation~\eqref{eq:fk-time-model}, and the vertical axis shows the measured runtime.
    The blue line indicates the ideal 45-degree agreement line.}
    \label{fig:fk-runtime-regression}
\end{figure}

\subsubsection{Inverse Kinematics Solver}
\label{sec:ik-experiments}

To evaluate the convergence behavior and performance of the inverse kinematics solver (Algorithm~\ref{alg:ik}), we conduct comparative experiments on all robots in Table~\ref{tab:robot_statistics}.
For each trial, the initial configuration is generated by sampling actuator lengths within their bounds and applying the FK solver, ensuring a feasible closed-chain state.
We first compare convergence speed against commonly used optimizers using randomized target poses around the robot workspace.
We then separately evaluate reachable and unreachable target sets to distinguish pose accuracy from numerical convergence.
Figure~\ref{fig:ik-imgs} shows two representative examples (Roadheader and TCCHS) before and after IK solving.

For the optimizer comparison, we run 100 randomized IK trials per robot.
In each trial, we record the objective value \(\Psi\) at every outer iteration of Algorithm~\ref{alg:ik}.
This experiment evaluates numerical convergence speed under a common objective-change stopping criterion, rather than serving as the pose-accuracy test.
We compare against commonly used optimizers under the same objective function, actuator bounds, initial configurations, target poses, random seed, tolerance, maximum iteration budget, and time budget, with detailed settings provided in Appendix~\ref{appendix:experiment-details}.
The baselines include Broyden--Fletcher--Goldfarb--Shanno (BFGS), conjugate gradient (CG), basic gradient descent, covariance matrix adaptation evolution strategy (CMA-ES), and (1+1)-ES~\cite{li2023difffr, geilinger2020add}.

The experimental results are summarized numerically in Table~\ref{tab:simulation_results}, while Figure~\ref{fig:ik_results} visualizes the corresponding convergence curves.
Table~\ref{tab:simulation_results} reports the median iteration count and average runtime over 100 randomized trials for each method, providing the numerical values associated with the trends shown in Figure~\ref{fig:ik_results}.
Numerical convergence is declared using the same objective-change tolerance as in Algorithm~\ref{alg:ik} (\(10^{-6}\)).
For (1+1)-ES, which converges slowly, we report the first 200 iterations.
Figure~\ref{fig:ik_results} illustrates the convergence behavior on three hydraulic support robots and the Roadheader; the remaining robots exhibit similar trends to the Roadheader and are omitted for brevity.
As gradient descent converges substantially more slowly, it is included only as a visual baseline in Figure~\ref{fig:ik_results} and omitted from Table~\ref{tab:simulation_results} due to non-competitive performance.

\begin{figure}[htbp]
\centering
    \subcaptionbox{Roadheader in initial configuration}[0.4\linewidth][c]{%
        \includegraphics[width=\linewidth]{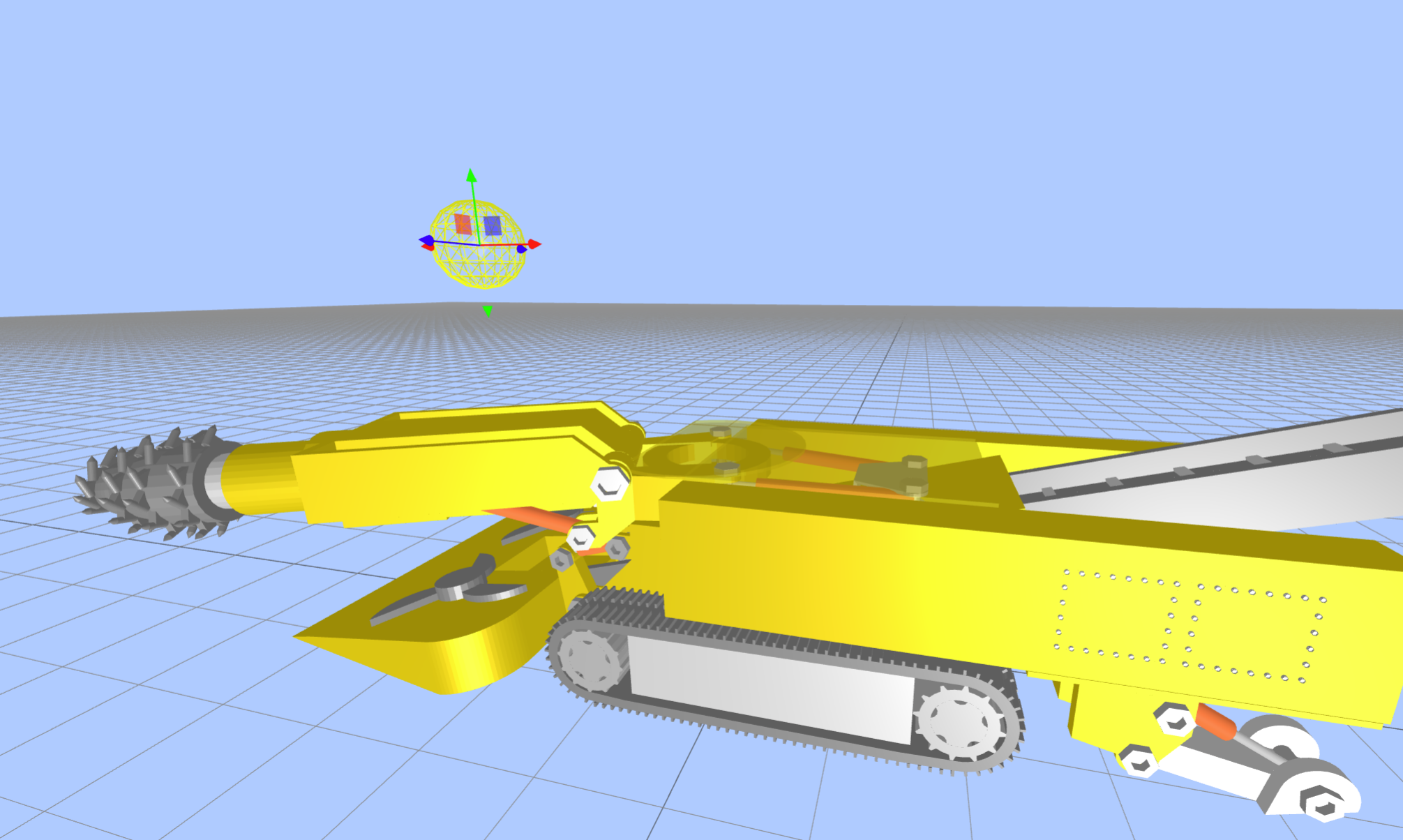}}
    \subcaptionbox{Roadheader after IK solving}[0.4\linewidth][c]{%
        \includegraphics[width=\linewidth]{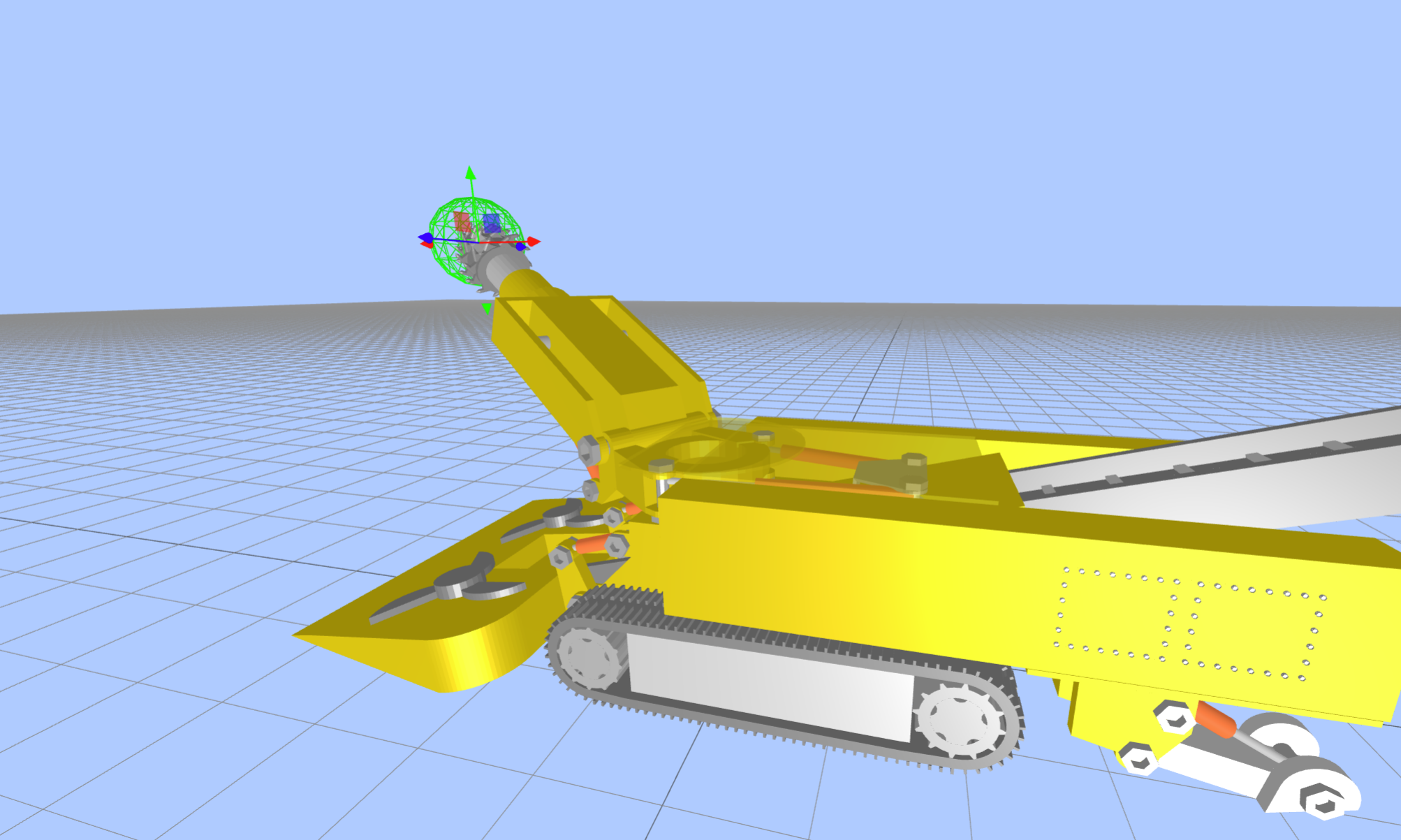}}

    \subcaptionbox{TCCHS in initial configuration}[0.4\linewidth][c]{%
        \includegraphics[width=\linewidth]{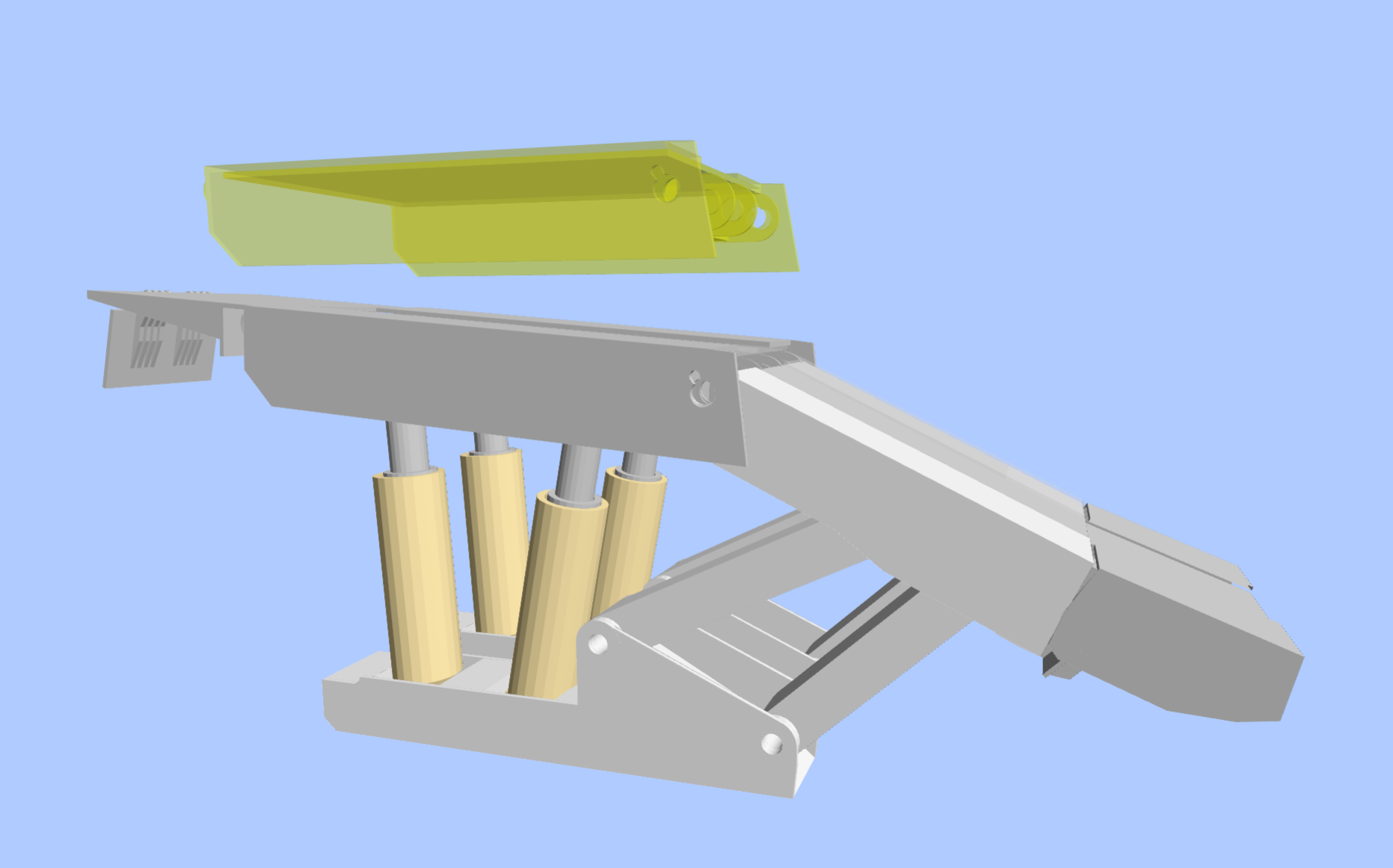}}
    \subcaptionbox{TCCHS after IK solving}[0.4\linewidth][c]{%
        \includegraphics[width=\linewidth]{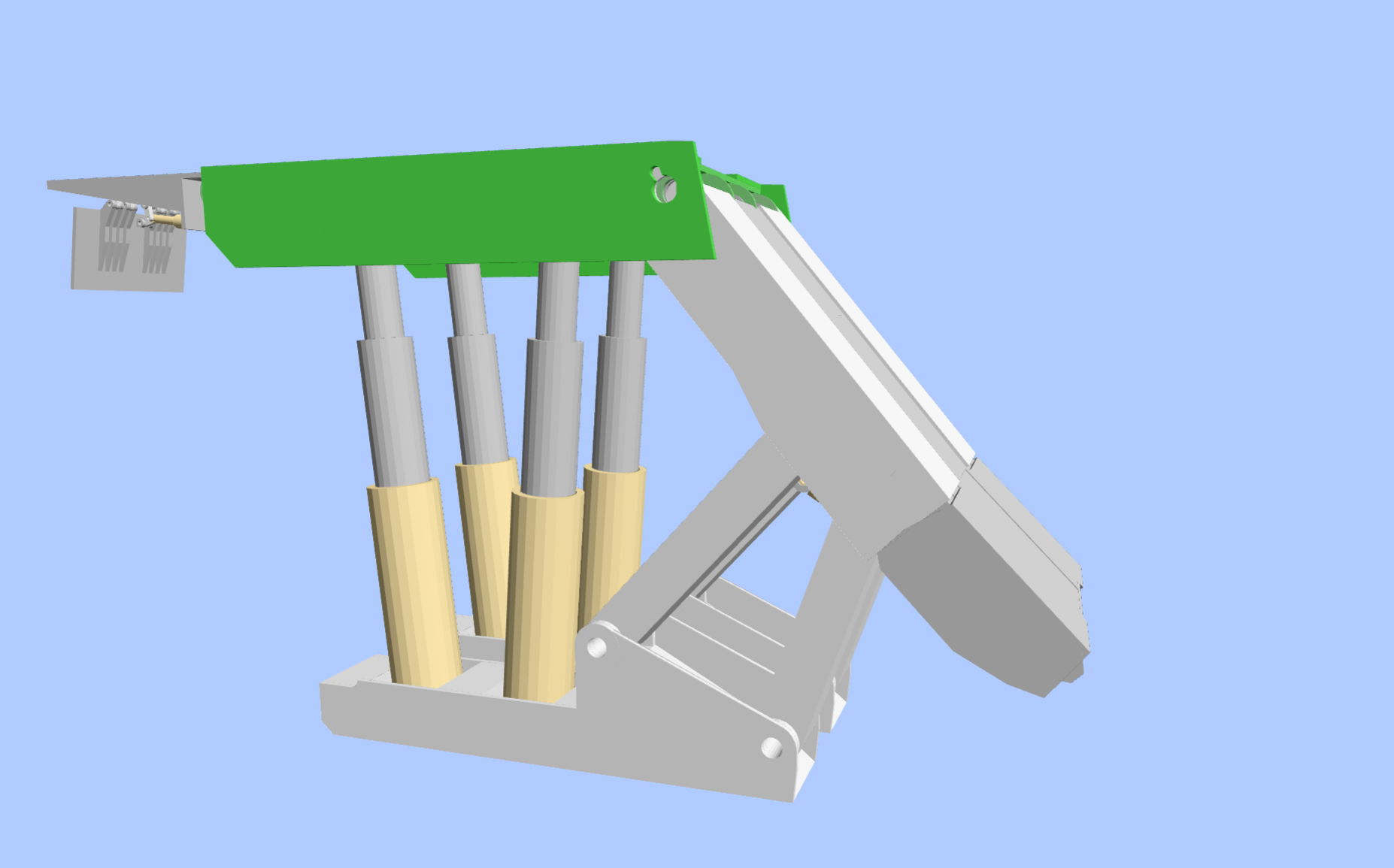}}

    \caption{\hl{Inverse} kinematics results for two mining robots. 
    (\textbf{a}) Roadheader in a random initial configuration; the target end-effector position is indicated by the yellow sphere. 
    (\textbf{b}) Roadheader after IK solving. 
    (\textbf{c}) TCCHS in a random initial configuration; the target canopy position is shown in yellow. 
    (\textbf{d}) TCCHS after IK solving. }
    \label{fig:ik-imgs}
\end{figure}

\begin{table}[htbp]
  \caption{\textbf{Inverse kinematics performance comparison} over 100 randomized trials per robot. All methods use the same stopping criterion and tolerance. Gradient descent was also tested but is omitted from the table due to non-competitive performance. For the (1+1) evolution strategy ((1+1)-ES), only the first 200 iterations are reported.}
  \label{tab:simulation_results}
  \centering 
  \resizebox{\textwidth}{!}{
    \begin{tabular}{lrrrrrrrr}
      \toprule
      & \multicolumn{2}{c}{\textbf{TCCHS}}
      & \multicolumn{2}{c}{\textbf{STHS}}
      & \multicolumn{2}{c}{\textbf{SSHS}}
      & \multicolumn{2}{c}{\textbf{RH}} \\
      \cmidrule{2-3} \cmidrule{4-5} \cmidrule{6-7} \cmidrule{8-9}
      & \textbf{Iters.} & \textbf{Time (ms)}
      & \textbf{Iters.} & \textbf{Time (ms)}
      & \textbf{Iters.} & \textbf{Time (ms)}
      & \textbf{Iters.} & \textbf{Time (ms)} \\
      \midrule
      \textbf{Ours} & \textbf{11} & \textbf{78.63} & \textbf{14} & \textbf{96.27} & \textbf{10} & \textbf{73.91} & \textbf{1} & \textbf{0.21} \\
      Quasi-Newton (BFGS) & 40 & 263.04 & 19 & 137.34 & 18 & 123.18 & 6  & 0.48 \\
      Conjugate Gradient  & 33 & 174.76 & 29 & 122.03 & 28 & 144.18 & 6  & 0.34 \\
      CMA-ES              & 34 & 143.56 & 58 & 253.07 & 27 & 129.31 & 34 & 1.39 \\
      (1+1)-ES            & 200 & 278.81 & 200 & 283.70 & 200 & 302.13 & 200 & 9.35 \\
      \midrule
      & \multicolumn{2}{c}{\textbf{LHD}}
      & \multicolumn{2}{c}{\textbf{VD}}
      & \multicolumn{2}{c}{\textbf{DJ}}
      & \multicolumn{2}{c}{\textbf{SH}} \\
      \cmidrule{2-3} \cmidrule{4-5} \cmidrule{6-7} \cmidrule{8-9}
      & Iters. & Time (ms)
      & Iters. & Time (ms)
      & Iters. & Time (ms)
      & Iters. & Time (ms) \\
      \midrule
      \textbf{Ours} & \textbf{1} & \textbf{4.21} & \textbf{1} & \textbf{4.06} & \textbf{1} & \textbf{4.45} & \textbf{1} & \textbf{0.15} \\
      Quasi-Newton (BFGS) & 7  & 26.02 & 9  & 35.31 & 6  & 28.18 & 7  & 0.63 \\
      Conjugate Gradient  & 8  & 21.33 & 7  & 19.20 & 7  & 20.41 & 6  & 0.47 \\
      CMA-ES              & 37 & 127.45 & 29 & 113.21 & 45 & 192.34 & 23 & 4.28 \\
      (1+1)-ES            & 200 & 177.08 & 200 & 153.83 & 200 & 147.61 & 200 & 12.15 \\
      \bottomrule
    \end{tabular}
  }
\end{table}

The IK objective $\Psi(\mathcal{L},\mathbf{T}^{*})$ is generally nonconvex because it composes FK with closed-chain constraints and measures pose discrepancy through the Lie-algebra mapping on $\mathrm{SE}(3)$.
Accordingly, as with other local optimizers such as BFGS and CG, our Gauss--Seidel-style solver (i.e., a coordinate-descent method) is not guaranteed to find a global optimum.
In repeated application workflows, our goal is not a global-optimality guarantee but robust and efficient convergence to a high-quality feasible solution under a consistent stopping~criterion.

As shown in Figure~\ref{fig:ik_results} and quantified in Table~\ref{tab:simulation_results}, our method reduces $\Psi$ to a low value within a small number of iterations on all tested robots.
BFGS and conjugate gradient show similar early-stage progress but typically require more iterations near the optimum.
In practice, BFGS and conjugate gradient operate directly on the original high-dimensional coupled variable space, where multiple actuator variables are optimized jointly.
By contrast, the main advantage of our method lies in the ITEP-based decomposition of the original bound-constrained IK problem into a sequence of low-dimensional subproblems within a Gauss--Seidel-style outer framework.
This decomposition substantially reduces the difficulty of each update and leads to faster convergence in our benchmark.
Moreover, gradient-based methods typically require gradient information, line search, and additional handling for bound constraints, whereas our method avoids these requirements by solving bounded one-dimensional subproblems directly.
The global methods are gradient free, but they require substantially more iterations and runtime on these problems, making them less suitable for repeated actuator-level kinematic queries in the tested planning, training, and digital-twin workflows as shown in Table~\ref{tab:simulation_results}.

Robots whose end-effector is jointly affected by multiple actuators, such as the canopies of TCCHS and SSHS driven by the front legs ($A_0,A_1$) and rear legs ($A_2,A_3$), and the canopy of STHS driven by the legs ($A_0,A_1$) and a balance jack ($A_2$), typically require more iterations due to coupling.
This behavior is expected because updating one actuator changes the best response of others, so multiple sweeps are needed before the iterates stabilize.
This distinction is important: stronger coupling mainly affects convergence speed, but it does not necessarily imply a higher risk of converging to poor local minima.
To assess this empirically, we report success rates and final residual statistics over 100 randomized trials per robot in Appendix~\ref{appendix:B21} (Table~\ref{tab:ik_robustness_compact}).
We also include a near-limit stress test, denoted Ours-NL, where actuator lengths are sampled within the nearest \(5\%\) of their MRDF-defined stroke bounds.
Across all robots, including the coupled hydraulic supports, Ours, Ours-NL, BFGS, and CG converge reliably under the same tolerance, and the reported statistics show no systematic degradation of the final solutions on coupled or near-limit feasible~configurations.

To separate optimization convergence from pose accuracy, we further construct two target sets.
Reachable targets are generated by sampling actuator lengths within MRDF bounds and applying FK; therefore, each target has a feasible actuator-length solution.
For these targets, a trial is counted as successful only if the final weighted residual satisfies \(\Psi^*<10^{-6}\).
Unreachable targets are generated by translating reachable FK poses outward by a controlled random offset and are used to evaluate numerical convergence when exact pose matching is not expected.
For these targets, we report runtime and convergence rate according to the objective-change criterion; 
final residuals are not used as pose-accuracy metrics because their magnitudes depend on the random distance from the feasible workspace.

\begin{figure}[htbp]
  \centering
    \includegraphics[width=1.0\linewidth]{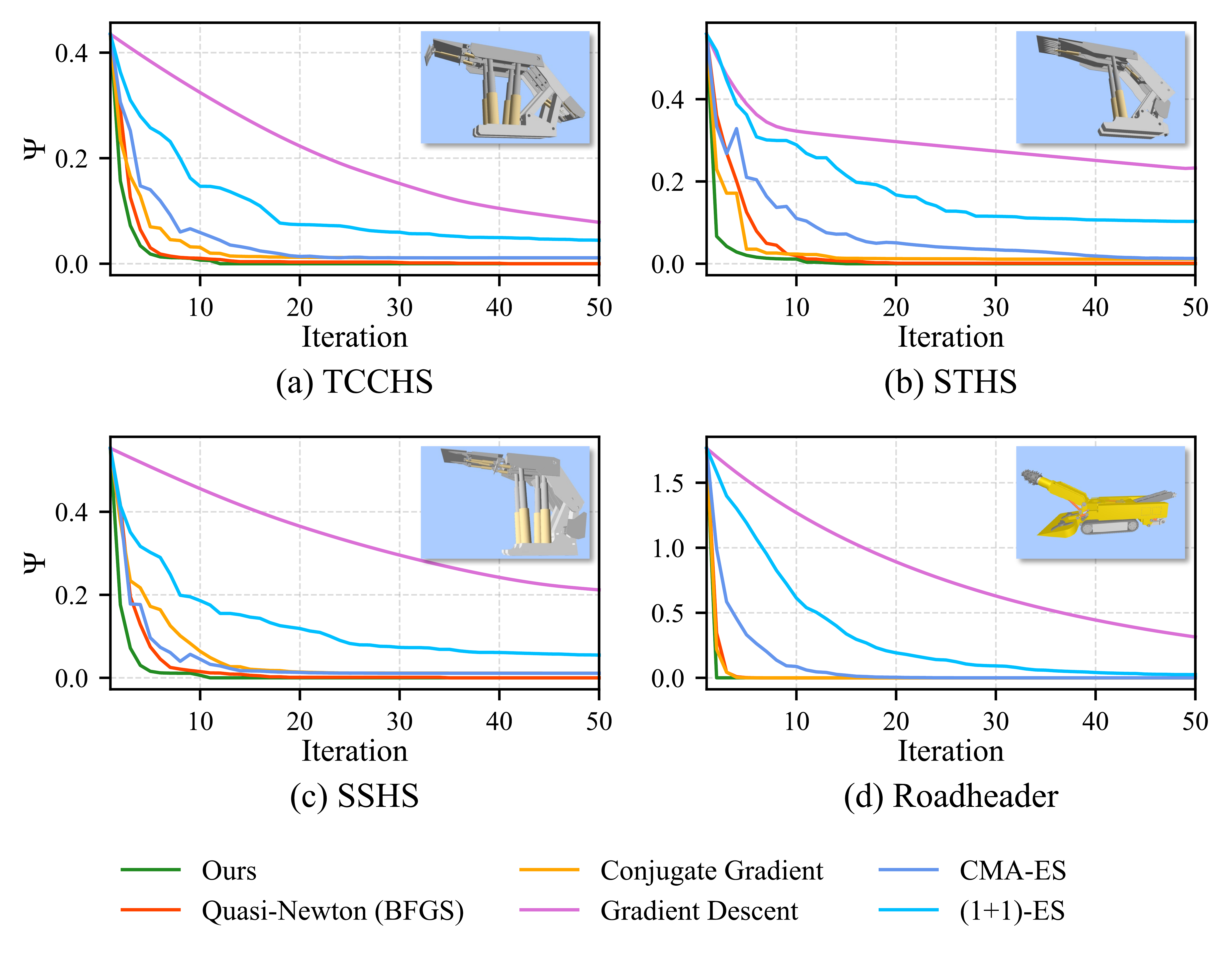}
    \caption{IK convergence curves on four representative mining robots:
    (\textbf{a}) TCCHS, (\textbf{b}) STHS, (\textbf{c})~SSHS, and (\textbf{d}) Roadheader.
    Each curve shows how the IK objective \(\Psi\) decreases over outer iterations for one optimization method in the randomized IK benchmark.
    The corresponding numerical iteration counts and runtime statistics are reported in Table~\ref{tab:simulation_results}.}
    \label{fig:ik_results}  
\end{figure}

Table~\ref{tab:ik_reachable_unreachable} separates pose accuracy from numerical convergence.
For the reachable target set, all robots achieve a \(100\%\) pose-success rate, and both the median and P95 final residuals remain below \(10^{-6}\), indicating that the numerical stopping criterion leads to accurate final poses in these feasible trials.
For the unreachable target set, all trials reach numerical convergence under the same objective-change criterion.
In this case, the returned actuator lengths should be interpreted as best-effort bound-constrained solutions rather than exact pose-matching results.

\begin{table}[htbp]
  \caption{\textbf{IK performance on reachable and unreachable targets} (100 trials per target set). Reachable targets are generated by sampling actuator lengths within MRDF bounds and applying FK; Succ.\ denotes pose success with \(\Psi^*<10^{-6}\). Unreachable targets are generated by translating reachable FK poses outward by a controlled random offset; Conv.\ denotes numerical convergence according to the objective-change criterion within the iteration budget. Final residual statistics are reported only for reachable targets because unreachable targets do not have feasible actuator-length solutions, and their residual magnitudes are dominated by the sampling distance from the feasible workspace.}
  \label{tab:ik_reachable_unreachable}
  \centering 
  \resizebox{\textwidth}{!}{
    \begin{tabular}{lrrrrrr}
      \toprule
      \multirow{2}{*}{ }
      & \multicolumn{4}{c}{\textbf{Reachable Targets}}
      & \multicolumn{2}{c}{\textbf{Unreachable Targets}} \\
      \cmidrule{2-5}\cmidrule{6-7}
      & \textbf{Avg. Time (ms)} & \textbf{Succ. (\%)} & \textbf{Med.} \boldmath{\(\Psi^*\)} \textbf{(}\boldmath{$10^{-6}$}\textbf{)} & \textbf{P95} \boldmath{\(\Psi^*\)} \textbf{(}\boldmath{$10^{-6}$}\textbf{)}
      & \textbf{Avg. Time (ms)} & \textbf{Conv. (\%)} \\
      \midrule
      TCCHS & 69.45 & 100\% & 0.23 & 0.68 &  85.16 & 100\% \\
      STHS  & 94.21 & 100\% & 0.27 & 0.64 & 103.82 & 100\% \\
      SSHS  & 75.43 & 100\% & 0.26 & 0.69 &  78.94 & 100\% \\
      RH    &  0.20 & 100\% & 0.06 & 0.15 &   0.23 & 100\% \\
      LHD   &  4.41 & 100\% & 0.09 & 0.20 &   4.56 & 100\% \\
      VD    &  4.23 & 100\% & 0.05 & 0.23 &   4.52 & 100\% \\
      DJ    &  4.28 & 100\% & 0.07 & 0.19 &   4.66 & 100\% \\
      SH    &  0.17 & 100\% & 0.06 & 0.16 &   0.16 & 100\% \\
      \bottomrule
    \end{tabular}
  }
\end{table}

In summary, our IK solver exploits ITEP independence to decompose the bound-constrained problem into a sequence of bounded one-dimensional subproblems per iteration.
It converges efficiently on reachable targets with small final residuals and remains numerically stable on the tested unreachable targets, where the returned actuator lengths are interpreted as best-effort bound-constrained solutions.

\subsection{Comparison with General-Purpose Robotics Formats and Frameworks}
To better position MineRobot against widely used robotics formats and frameworks, we compare them from two perspectives.
First, we provide a qualitative comparison of representation-level modeling expressiveness and framework-level functionality and support for the kinematics of mining robots.
Second, we report quantitative runtime evaluations on the same set of mining robots and tasks.

\subsubsection{Modeling Expressiveness and Framework Support} \label{sec:comparison-expressiveness}

To better position MineRobot with respect to widely used robotics descriptions and frameworks, we compare prior work from two complementary perspectives.
Table~\ref{tab:format_comparison} focuses on \emph{\hl{representation-level semantics} 
}, i.e., whether a robot description format itself can natively encode the structures that are central to underground mining robots.
Table~\ref{tab:tool_comparison}, in contrast, focuses on \emph{\hl{framework-level functionality and support}}, i.e., whether a robotics framework natively provides solver interfaces or dedicated handling pipelines for such mechanisms.
This distinction is important: a format may be expressive at the representation level, while a framework may still lack a native pipeline to directly exploit that information.

\begin{table}[htbp]
  \caption{Comparison of robot description formats in terms of kinematics-modeling expressiveness for mining robots.}
  \label{tab:format_comparison}
  \centering
  \resizebox{0.8\textwidth}{!}{
    \begin{tabular}{lcccc}
      \toprule
      \textbf{Format} & \shortstack[c]{\textbf{Closed Loop} \\ \textbf{Support}} & \shortstack[c]{\textbf{Linear Actuator} \\ \textbf{Semantics}} & \shortstack[c]{\textbf{Four-Bar} \\ \textbf{Semantics}} & \shortstack[c]{\textbf{Redundant Actuator} \\ \textbf{Semantics}} \\
      \midrule
      \textbf{MRDF} & \cmark & \cmark & \cmark & \cmark \\
      URDF~\cite{tola2023understanding} & \xmark & \pmark & \xmark & \pmark \\
      SDFormat~\cite{sdformat} & \cmark & \pmark & \pmark & \pmark \\
      MJCF~\cite{todorov2012mujoco} & \cmark & \cmark & \pmark & \pmark \\
      USD~\cite{usd_pixar} & \pmark & \pmark & \pmark & \pmark \\
      \bottomrule
    \end{tabular}
  }
  
  \vspace{2mm}
  \noindent{\footnotesize{\textit{Note:} 
    This table compares representation-level semantics.
    \cmark~Native (directly supported as a first-class semantic concept); 
    \pmark~Conditional (representable, but not as a first-class semantic concept); 
    \xmark~No (not supported).
  }}
\end{table}

Table~\ref{tab:format_comparison} shows that existing general-purpose robot description formats only partially cover the kinematics patterns prevalent in mining robots.
URDF is effective for tree-structured robots, but it does not natively support closed loops, and its support for linear actuators and redundant actuation is only indirect.
SDFormat and MJCF improve the situation by allowing closed-loop mechanisms to be encoded in the file, and MJCF further provides an explicit cylinder actuator primitive.
However, for these formats, planar four-bars and redundant actuator groups are still not treated as first-class semantic concepts.
USD is highly flexible as a general scene and physics representation, but from the viewpoint of mining-robot kinematics it remains largely generic and indirect.
In contrast, MRDF natively represents all four structures considered here: closed-loop mechanisms, linear actuators, four-bars, and redundant actuators.
This comparison clarifies the role of MRDF in our framework: it is not introduced merely as another robot file format but as a domain-specific representation whose semantics are aligned with the dominant kinematic patterns of underground mining robots.

This distinction is particularly important for linear actuators and redundant actuation.
In general-purpose formats, linear actuation is usually represented indirectly through prismatic joints, drives, generic actuators, or controllers, while redundancy is typically approximated through mimic relations, equality constraints, or external synchronization logic.
Such encodings may be sufficient for simulation or control in specific software stacks, but they do not explicitly expose the actuator-level topology required by our topology construction and solver scheduling.
By contrast, MRDF treats the actuator itself as a first-class modeling entity, together with its mounting relations, bounds, and redundancy-group membership.
This actuator-centered representation is the key enabler of the subsequent four-bar contraction, ITEP extraction, and sequential FK/IK pipeline developed in MineRobot.

\begin{table}[htbp]
  \caption{Comparison of general-purpose robotics frameworks in terms of kinematics support for mining robots.}
  \label{tab:tool_comparison}
  \centering 
  \resizebox{\textwidth}{!}{
    \begin{tabular}{llcccccc}
      \toprule
      \textbf{Framework} & \textbf{Formats} & \shortstack[c]{\textbf{FK} \\ \textbf{Solver}} & \shortstack[c]{\textbf{IK} \\ \textbf{Solver}} & \shortstack[c]{\textbf{Closed Loop} \\ \textbf{Pipeline}} & \shortstack[c]{\textbf{Actuator-Centered} \\ \textbf{Kinematics}} & \shortstack[c]{\textbf{Four-Bar} \\ \textbf{Handling}} & \shortstack[c]{\textbf{Redundant Actuator} \\ \textbf{Handling}} \\
      \midrule
      \textbf{MineRobot} & MRDF                 & \cmark & \cmark & \cmark & \cmark & \cmark & \cmark \\
      Drake~\cite{drake}               & URDF, SDFormat, etc.       & \cmark & \cmark & \pmark & \xmark & \xmark & \xmark \\
      MuJoCo~\cite{todorov2012mujoco}             & MJCF, URDF           & \cmark & \xmark & \cmark & \xmark & \xmark & \pmark \\
      Gazebo~\cite{koenig2004gazebo}             & SDFormat, etc.       & \pmark & \xmark & \pmark & \xmark & \pmark & \pmark \\
      PyBullet~\cite{coumans2021pybullet}           & URDF, SDFormat, etc.            & \cmark & \cmark & \pmark & \xmark & \xmark & \xmark \\
      Isaac Sim~\cite{nvidia_isaac_sim}          & USD, etc.                  & \cmark & \cmark & \pmark & \xmark & \xmark & \xmark \\
      DART~\cite{lee2018dart}               & URDF, SDFormat, etc.       & \cmark & \cmark & \pmark & \xmark & \xmark & \xmark \\
      \bottomrule
    \end{tabular}
  }
   
  \vspace{2mm}
  \noindent{\footnotesize{ \textit{Note:}  
    This table compares framework-level functionality and support.  
    \cmark~Native (natively provided through solver interfaces, import capability, or dedicated handling pipelines);  
    \pmark~Conditional (partially supported under additional conditions);  
    \xmark~No (not supported).
  }}
\end{table}

Table~\ref{tab:tool_comparison} complements the format-level analysis by comparing general-purpose robotics frameworks in terms of mining-robot kinematics support. Here, \emph{\hl{FK solver}} and \emph{\hl{IK solver}} indicate native solver interfaces; \emph{\hl{closed loop pipeline}} indicates whether closed-loop information can be imported from supported standard formats without manual constraint coding; and \emph{\hl{actuator-centered kinematics}}, \emph{\hl{four-bar handling}}, and \emph{\hl{redundant actuator handling}} indicate whether these mining-robot-specific structures are natively incorporated into the kinematics pipeline. A fair reading of the table is not that existing frameworks are weak in general. On the contrary, several of them provide strong support for standard kinematics tasks and closed-loop modeling under various conditions. Among them, Drake and MuJoCo are particularly representative: Drake provides strong native support for general FK/IK and multibody modeling, while MuJoCo provides native support for closed-loop import through MJCF together with explicit actuator modeling. Therefore, MineRobot is not motivated by the absence of general FK/IK functionality in the literature.

Instead, the main gap lies in the combination of capabilities required by mining robots, and more importantly, in how these capabilities are organized into the kinematics pipeline. As Table~\ref{tab:tool_comparison} shows, none of the compared general-purpose frameworks natively provide actuator-centered kinematics, dedicated four-bar handling, and redundant-actuator-aware processing in a unified pipeline. Even when closed loops can be represented or imported, they are generally handled as generic constraints within a joint-centric formulation. As a result, the corresponding FK/IK problems remain coupled at the mechanism level, and the user typically has to rely on generic numerical solvers over the full constrained system.

MineRobot differs precisely in this respect. By contracting four-bars into generalized joints and extracting Independent Topologically Equivalent Paths (ITEPs), it reorganizes the original coupled closed-loop mechanism into actuator-centered subproblems associated with independent actuator groups. This decomposition turns the original kinematics problem into a sequence of small, topology-aware local solves, which improves not only computational efficiency but also numerical robustness and solver stability in interactive settings. Consequently, FK no longer needs to resolve the entire mechanism as one generic constrained system, and IK can alternate over actuator variables in a Gauss--Seidel-style manner aligned with the same decomposition. In other words, MineRobot does not merely add support for mining-robot-specific mechanisms; it uses their topology to decouple the kinematics computation itself.

Taken together, Tables~\ref{tab:format_comparison} and~\ref{tab:tool_comparison} highlight the intended scope of our contribution.
General-purpose formats and frameworks are broad and versatile, but they are not designed around the characteristic kinematics of underground mining robots.
MineRobot addresses this gap by coupling a domain-specific representation (MRDF) with a topology-aware kinematics pipeline.
The practical consequence is that the information needed by our FK/IK solvers is available natively in the representation and can be exploited directly, rather than being reconstructed indirectly from generic joints, constraints, or external control logic.
This stronger alignment between modeling semantics and solver design is the main reason why MineRobot provides both clearer authoring for mining robots and a more direct route to efficient closed-loop kinematics, which will be reflected in the runtime comparison of Section~\ref{comparison-runtime}.

\subsubsection{Runtime Comparison} \label{comparison-runtime}

We next compare runtime performance with representative general-purpose frameworks.
Based on the analysis in Section~\ref{sec:comparison-expressiveness}, we choose Drake and MuJoCo as the most informative baselines.
Drake provides strong native support for generic FK/IK and multibody modeling, whereas MuJoCo provides native support for closed-loop import through MJCF together with an explicit cylinder actuator primitive.
They therefore represent two strong yet complementary points of comparison for MineRobot.
This comparison is a workflow-level reference under matched robot geometry, actuator bounds, target poses, tolerances, and time budgets, rather than a claim about framework-wide performance limits.
In this comparison, the success rate is defined as the percentage of randomized trials that satisfy the prescribed stopping criterion within the common \(1\,\mathrm{s}\) time budget.
Detailed reference-workflow settings are provided in Appendix~\ref{appendix:experiment-details}.
For FK, success requires all active actuator-length residuals and loop-closure residuals to be below \(10^{-6}\).
For IK, success requires the objective-change convergence criterion in Algorithm~\ref{alg:ik} to be reached while all actuator lengths remain within their bounds.

Before comparing runtime, we first validate FK consistency against an independent Drake reference workflow.
For the FK validation, each robot is evaluated on 100 actuator-length samples drawn within the MRDF-defined admissible bounds.
The same actuator-length inputs are solved by MineRobot and a Drake-based closed-chain reference model on the same reference assembly branch.
We define \(e_{\mathrm{FK}}\) as the weighted SE(3) pose residual between the MineRobot and Drake end-effector transformations, using the same translational normalization length \(L_c\) as in the IK objective.

Table~\ref{tab:fk_drake_validation} shows that MineRobot produces FK poses consistent with the independent Drake reference workflow across all tested robots.
This validation is complementary to the runtime comparison below: Table~\ref{tab:fk_drake_validation} checks kinematic consistency under the same actuator-length inputs, whereas Table~\ref{tab:engine_runtime_comparison} compares computational efficiency and observed convergence rates.

\begin{table}[htbp]
  \caption{\textbf{FK validation against an independent Drake reference workflow}. The table reports median and P95 weighted SE(3) pose differences \(e_{\mathrm{FK}}\) over 100 actuator-length samples per robot. All Drake reference solves converged for the sampled cases.}
  \label{tab:fk_drake_validation}
  \centering
  \resizebox{0.6\textwidth}{!}{
    \begin{tabular}{lrrrrrrrr}
      \toprule
      & \textbf{TCCHS} & \textbf{STHS} & \textbf{SSHS} & \textbf{RH} & \textbf{LHD} & \textbf{VD} & \textbf{DJ} & \textbf{SH} \\
      \midrule
      Med. \(e_{\mathrm{FK}}\) (\(10^{-6}\)) & 0.39 & 0.35 & 0.28 & 0.13 & 0.11 & 0.25 & 0.19 & 0.20 \\
      P95 \(e_{\mathrm{FK}}\) (\(10^{-6}\)) & 0.75 & 0.69 & 0.57 & 0.33 & 0.35 & 0.42 & 0.45 & 0.47 \\
      \bottomrule
    \end{tabular}
  }
\end{table}

\begin{table}[htbp]
  \caption{\textbf{Runtime comparison with MineRobot, Drake, and MuJoCo} over 100 randomized trials per robot. Under each robot, FK and IK entries report the average runtime in milliseconds, with the success rate in parentheses. All methods use the same task setup, tolerance, and time budget. Iteration counts are omitted because the compared frameworks use substantially different internal update mechanisms, making runtime and success rate more meaningful for comparison. \(^{*}\) indicates entries obtained from an external hand-implemented reference workflow outside the native engine interface; these entries are reported only as workflow-level references.}
  \label{tab:engine_runtime_comparison}
  \centering 
  \resizebox{\textwidth}{!}{
    \begin{tabular}{lcccccccc}
      \toprule
      & \multicolumn{2}{c}{\textbf{TCCHS}}
      & \multicolumn{2}{c}{\textbf{STHS}}
      & \multicolumn{2}{c}{\textbf{SSHS}}
      & \multicolumn{2}{c}{\textbf{RH}} \\
      \cmidrule{2-9}
      & \multicolumn{1}{c}{\textbf{FK}} & \multicolumn{1}{c}{\textbf{\textbf{IK}}}
      & \multicolumn{1}{c}{\textbf{FK}} & \multicolumn{1}{c}{\textbf{IK}}
      & \multicolumn{1}{c}{\textbf{FK}} & \multicolumn{1}{c}{\textbf{IK}}
      & \multicolumn{1}{c}{\textbf{FK}} & \multicolumn{1}{c}{\textbf{IK}} \\
      \midrule
      \textbf{MineRobot}
        & \textbf{1.06 (100\%)} & \textbf{78.63 (100\%)}
        & \textbf{0.56 (100\%)} & \textbf{96.27 (100\%)}
        & \textbf{0.87 (100\%)} & \textbf{73.91 (100\%)}
        & \textbf{0.01 (100\%)} & \textbf{0.21 (100\%)} \\
      Drake
        & 25.35 (100\%) & 252.62 (86\%)
        & 11.84 (100\%) & 141.30 (92\%)
        & 18.72 (100\%) & 128.45 (90\%)
        & 0.26 (100\%)  & 0.92 (100\%) \\
      MuJoCo
        & 6.23 (100\%) & 189.40 (82\%) \rlap{$^{*}$}
        & 3.15 (100\%) & 151.72 (89\%) \rlap{$^{*}$}
        & 4.78 (100\%) & 137.55 (87\%) \rlap{$^{*}$}
        & 0.07 (100\%) & 0.78 (100\%) \rlap{$^{*}$} \\
      \midrule
      & \multicolumn{2}{c}{\textbf{LHD}}
      & \multicolumn{2}{c}{\textbf{VD}}
      & \multicolumn{2}{c}{\textbf{DJ}}
      & \multicolumn{2}{c}{\textbf{SH}} \\
      \cmidrule{2-9}
      & \multicolumn{1}{c}{FK} & \multicolumn{1}{c}{IK}
      & \multicolumn{1}{c}{FK} & \multicolumn{1}{c}{IK}
      & \multicolumn{1}{c}{FK} & \multicolumn{1}{c}{IK}
      & \multicolumn{1}{c}{FK} & \multicolumn{1}{c}{IK} \\
      \midrule
      \textbf{MineRobot}
        & \textbf{0.18 (100\%)} & \textbf{4.21 (100\%)}
        & \textbf{0.18 (100\%)} & \textbf{4.06 (100\%)}
        & \textbf{0.17 (100\%)} & \textbf{4.45 (100\%)}
        & \textbf{0.01 (100\%)} & \textbf{0.15 (100\%)} \\
      Drake
        & 1.94 (100\%)  & 31.85 (98\%)
        & 1.73 (100\%)  & 39.60 (97\%)
        & 1.88 (100\%)  & 33.12 (98\%)
        & 0.22 (100\%)  & 0.96 (100\%) \\
      MuJoCo
        & 0.62 (100\%) & 26.44 (97\%) \rlap{$^{*}$}
        & 0.58 (100\%) & 32.18 (96\%) \rlap{$^{*}$}
        & 0.61 (100\%) & 27.06 (97\%) \rlap{$^{*}$}
        & 0.06 (100\%) & 0.84 (100\%) \rlap{$^{*}$} \\
      \bottomrule
    \end{tabular}
  }
\end{table}

Table~\ref{tab:engine_runtime_comparison} reports the average runtime and success rate over 100 randomized trials for each robot.
All methods use the same robot instances, task setup, tolerance, and time budget, and success rates are computed using the definitions stated above.
We emphasize runtime and success rate rather than iteration counts because the compared frameworks rely on substantially different internal update mechanisms, making direct iteration-to-iteration comparison less meaningful.
For MuJoCo, the IK results marked with $^{*}$ are reported only for reference, since MuJoCo does not natively provide the corresponding IK workflow and the measurements are obtained from a hand-implemented C++/Eigen~3.4.0 version.
More generally, the reported implementations are research-oriented and not specially optimized; therefore, the absolute runtimes should be interpreted as reference values, whereas the relative trends and success rates are the main points of comparison.

Several consistent trends can be observed.
First, MineRobot achieves the best overall FK efficiency across all tested robots while maintaining a 100\% success rate.
This advantage is especially clear on the more complex hydraulic-support robots, but it also remains consistent on simpler mechanisms.
Second, the advantage becomes more pronounced in IK, where MineRobot again attains 100\% success on all robots and consistently outperforms the compared baselines in runtime.
By contrast, Drake and the MuJoCo-based reference implementation show lower success rates on several of the more strongly coupled robots, indicating that the practical difference is not only computational speed but also solution~robustness.

The main reason for this advantage is the topology-aware decomposition introduced by MineRobot.
As discussed in Section~\ref{sec:comparison-expressiveness}, general-purpose frameworks typically handle these robots through joint-centric formulations and generic closed-loop constraints, so FK/IK remains a coupled problem at the mechanism level.
In contrast, MineRobot contracts four-bars, extracts ITEPs, and solves the resulting actuator-centered subproblems sequentially.
This decoupling substantially reduces the size of each local solve in FK and enables the Gauss--Seidel-style IK solver to operate directly on actuator variables rather than repeatedly resolving the full constrained system.
The runtime comparison therefore provides quantitative evidence that the ITEP-based decomposition improves not only efficiency but also stability under repeated randomized tests.

At the same time, these results should be interpreted within scope.
Drake and MuJoCo are general-purpose frameworks designed for a much broader class of robots and simulation tasks, whereas MineRobot is specialized for the dominant kinematic patterns of underground mining robots.
Our claim is therefore not that general-purpose frameworks are ineffective, but that their broad formulations do not directly exploit the structural regularities of mining robots.
These computational properties also have implications for deployment.
Because FK/IK are decomposed into bounded low-dimensional subproblems with fixed iteration budgets, the CPU-based C++ implementation suggests potential for deployment on low-power edge-computing or embedded platforms.
Nevertheless, platform-specific profiling, implementation optimization, and controller-stack integration are still required before claiming embedded real-time performance.

\section{Applications} \label{sec:apps}

We briefly demonstrate how MineRobot supports several representative underground-mining-robot application workflows.
As summarized in Figure~\ref{fig:apps}, the framework can be integrated into workspace analysis, trajectory generation, operator-training interfaces, and digital-twin workflows.
These applications are enabled by the same actuator-centered kinematics layer: FK maps actuator states to robot configurations for visualization and interaction, while IK maps task-space targets to actuator-length references for trajectory generation and command planning.

\begin{figure}[htbp]
\centering 
    \includegraphics[width=1.0\linewidth]{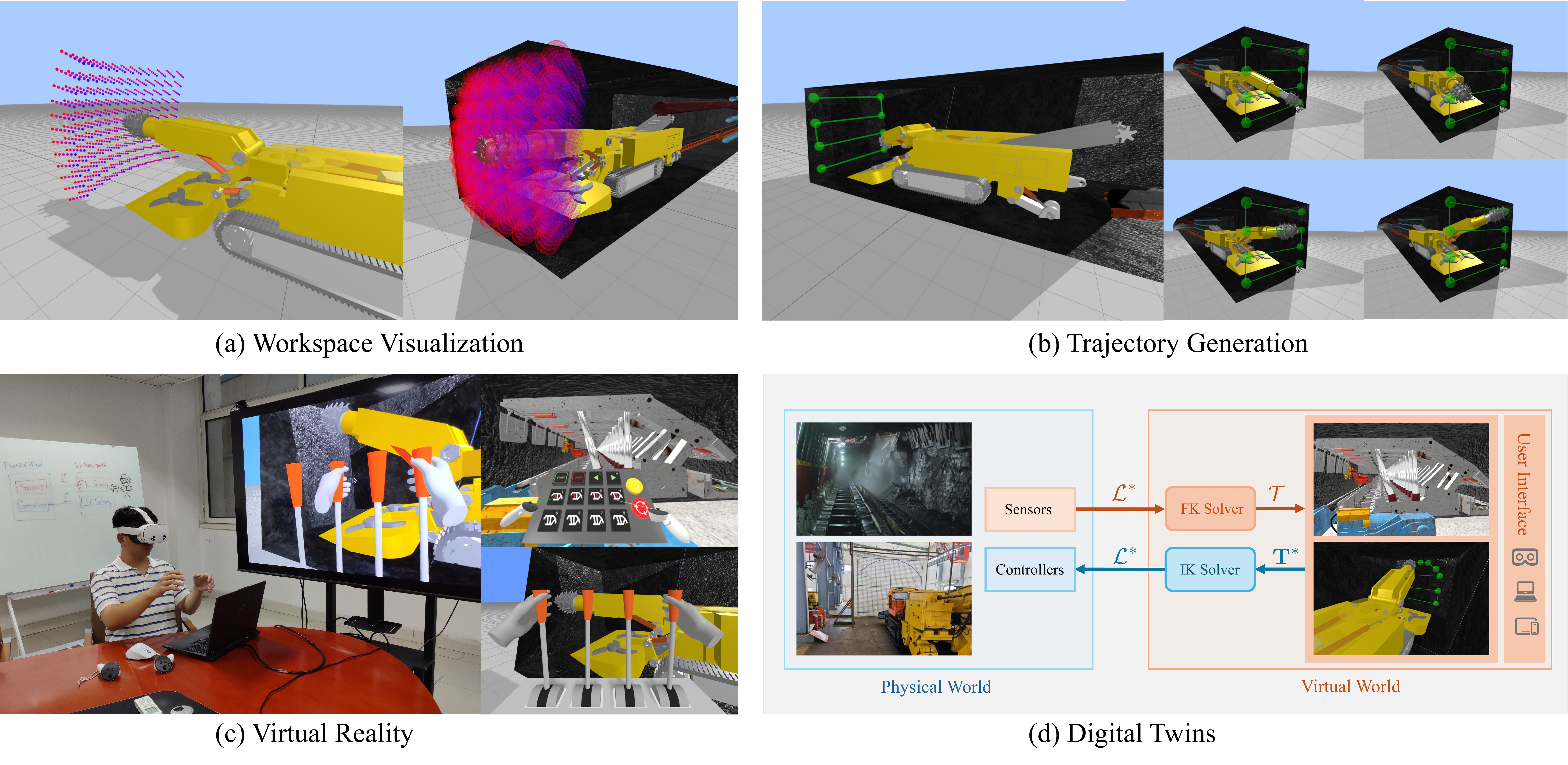}
    \caption{\hl{Representative} 
 application workflows enabled by MineRobot.
    (\textbf{a}) Workspace visualization using forward-kinematics (FK)-based sampling of reachable end-effector poses.
    (\textbf{b}) Trajectory generation by solving inverse kinematics (IK) along a target path.
    (\textbf{c}) Virtual reality interaction with real-time robot updates driven by FK.
    (\textbf{d}) Kinematics-layer synchronization workflow between physical robots and virtual counterparts through FK/IK.}
    \label{fig:apps}
\end{figure}

Specifically, workspace visualization (Figure~\ref{fig:apps}a) is realized by sampling the relevant actuator space and evaluating FK to obtain the corresponding end-effector poses.
Trajectory generation (Figure~\ref{fig:apps}b) is realized by discretizing a target path in task space and solving IK at each sample.
For virtual reality (VR) interaction (Figure~\ref{fig:apps}c), virtual controls are mapped to actuator targets, and the robot is updated online through FK at interactive rates.
For digital twins (Figure~\ref{fig:apps}d), measured actuator lengths are sent to FK for physical-to-virtual synchronization, while desired end-effector targets are sent to IK for virtual-to-physical command generation.
These examples illustrate how MineRobot serves as an actuator-centered kinematics layer for repeated FK/IK queries in representative underground-mining-robot workflows.
Additional implementation details are provided in Appendix~\ref{appendix:C}.

\section{Conclusions and Future Work} 
\label{sec:conclusion}

We presented MineRobot, an actuator-centered framework for modeling and solving the kinematics of a representative class of underground mining robots.
We introduced the Mining Robot Description Format (MRDF), a domain-specific language for parametric kinematic modeling that explicitly supports linear actuators and four-bar linkages as the framework's data backbone.
On top of MRDF, we developed a topology pipeline that contracts four-bar linkages into generalized joints and automatically extracts per-actuator Independent Topologically Equivalent Paths (ITEPs) in four types.
Leveraging ITEP independence, MineRobot composes per-type solvers into a sequential forward-kinematics pipeline and formulates inverse kinematics as an FK-aligned, bound-constrained optimization problem solved by a Gauss--Seidel-style actuator-length update scheme.
Experiments on representative robots show real-time FK performance and robust IK convergence within the tested operating ranges, supporting the use of MineRobot as an actuator-centered kinematics layer for planning, virtual training, and digital-twin workflows.

The present formulation is intentionally limited to a specific mechanism class.
It assumes linear-actuator-driven mechanisms composed mainly of revolute, prismatic, and fixed joints, with planar four-bar linkages that can be contracted into generalized joints and actuator-free topologies that are acyclic or satisfy the independent-path conditions required for ITEP extraction.
Accordingly, MineRobot is not intended as a general-purpose solver for arbitrary closed-chain robots.
Mechanisms with arbitrary spatial multi-loop closures, unsupported joint types, deliberate assembly-mode switching through singular branch transitions, or actuator-free topologies that cannot be decomposed into independent actuator paths are outside the current scope.
The framework is also limited to internal actuator-driven mechanism kinematics: for each FK/IK query, the base or chassis frame is treated as a given reference frame, and MineRobot solves the closed-chain motion of the robot mechanism relative to that frame.
The experiments in this study are numerical kinematic evaluations rather than physical machine tests; therefore, the reported results should be interpreted as validating the actuator-centered kinematic modeling and FK/IK computation layer under idealized model parameters and prescribed actuator bounds.
In actual underground applications, additional discrepancies may arise from geometric calibration errors, sensor noise and delay in actuator-length measurements, joint clearance and wear, structural deformation under heavy loads, hydraulic compliance and friction, base-frame uncertainty, contact with the roof, floor, or coal wall, and low-level controller dynamics.
These effects can influence the agreement between the computed kinematic configuration and physical machine motion, and should be handled by calibration, state estimation, feedback control, hydraulic-actuator models, collision/contact checking, safety supervision, or higher-fidelity multibody simulation modules that use MineRobot's FK/IK routines as an actuator-centered kinematic sublayer.

Although this study focuses on underground mining robots, the actuator-centered topology-processing idea may be transferable to other heavy-duty robots and construction equipment that use linear actuators, planar four-bar linkages, and similar closed-chain transmission structures.
Future work will focus on physical validation with real mining equipment, calibration and sensing integration, mobile-base kinematics and trajectory planning, interfaces with general robotics frameworks and multibody simulation tools, deployment evaluation on low-power edge-computing and embedded platforms, physics-based actuator/contact models, and MRDF extensions for robot groups and operating environments.

\vspace{6pt}

\section*{Funding}
This research was funded by the National Natural Science Foundation of China, grant number 52574195; the Natural Science Foundation of Shandong Province, grant number ZR2023ME032; and the Central Funds Guiding the Local Science and Technology Development of Shandong Province, grant number YDZX2025101, under the project ``Research and Application of the Surveying and Localization Robotic System for Underground Engineering''.




\appendix

\section{Details on ITEP Completeness and Limitations}
\label{appendix:A}

This appendix provides the detailed argument behind the completeness claim of the four ITEP types (A--D) used in Section~\ref{sec:topo}. Specifically, we show that, under the robot scope assumptions in Section~\ref{sec:modeling}, the set of links and joints influenced by a given actuator redundancy group when all other groups are locked must fall into one of the four ITEP types. We also summarize the main limitations and out-of-scope cases where these assumptions may be violated and the taxonomy may no longer apply.

\subsection{Completeness of ITEPs}

\textbf{\hl{Preliminaries and notation.}
}
We restrict attention to robots within the scope of \mbox{Section~\ref{sec:modeling}}. 
After preprocessing (merging fixed-joint chains) and planar four-bar contraction, all remaining joints on the actuator-free graph are 1-DoF joints of type $R$, $P$, or generalized $G$ (a contracted four-bar transmission), and the graph is acyclic by Assumption~\ref{asp:3}.
As discussed in Section~\ref{sec:topo-algo}, the ITEP extraction is applied within the target mining-robot topologies in which the relevant endpoint links are connected by a unique simple topological path: tree-like cases satisfy this directly, and cases involving generalized joints are restricted by the design condition that actuator paths do not pass through already actuated generalized joints.
For an actuator $A_j$, let $L_{p_t}$ and $L_{p_r}$ be the tube/rod parent links, and let $P$ be this unique simple path between them in the contracted topology, with $|P|$ the number of links on $P$.

\begin{lemma}\label{lem:lock-loop}
Locking an actuator $A_j$ is kinematically equivalent to merging its tube and rod into a single rigid link $L_{rt}$ attached to $L_{p_t}$ and $L_{p_r}$ by two revolute joints (Figure~\ref{fig:topo-build-other-actuators}). If $|P|=2$, the locked actuator is equivalent, after constraint reduction, to fixing the direct joint between $L_{p_t}$ and $L_{p_r}$, so no independent local loop is retained in the reduced topology; if $|P|>2$, locking creates a local loop consisting of $P$ and the two revolute attachments via $L_{rt}$.
\end{lemma}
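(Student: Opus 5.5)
The plan has three steps: handle the rigid-merging claim, then the case $|P|=2$, then the case $|P|>2$.

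\textbf{Step 1: merging.} I would start from the actuator model in Section~\ref{sec:modeling}. The tube $L_t$ and rod $L_r$ are joined to each other only by the telescopic relation, whose single variable is the length $l_j$. They are attached to their parents by the revolute joints $J_{p_t t}$ and $J_{p_r r}$. Locking $A_j$ prescribes $l_j$, so the relative tube--rod motion along the actuator axis is frozen. The relative axial rotation is already excluded by the endpoint-alignment convention of Algorithm~\ref{algo:3}. Hence the tube--rod joint becomes a fixed joint $F$. Preprocessing merges links joined by fixed joints, so $L_t\cup L_r$ becomes a single rigid link $L_{rt}$. This link keeps the two revolute attachments $J_{p_t t}$ (reversed, as in Step~15 of Algorithm~\ref{alg:topo}) and $J_{p_r r}$. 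This gives the first sentence of the lemma. It is exactly the topological operation depicted in Figure~\ref{fig:topo-build-other-actuators}(b1--b3).

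\textbf{Step 2: the case $|P|=2$.} Here $L_{p_t}$ and $L_{p_r}$ are joined directly by one 1-DoF joint $J_{p_t p_r}$ of type $R$, $P$ or $G$. The links $L_{p_t}, L_{rt}, L_{p_r}$ then form a three-link loop with three 1-DoF joints. By the planar Chebychev--Grübler--Kutzbach count, $3(n-1)-2j_1 = 3\cdot 2-2\cdot 3=0$. So the loop is a rigid triangle: the length $l_j$ determines the value of the joint variable of $J_{p_t p_r}$, locally and on the chosen branch. The loop therefore carries no mobility. Replacing it by $J_{p_t p_r}$ set to $F$, with $L_{rt}$ hanging rigidly, preserves the kinematics. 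After this constraint reduction no independent loop remains, which is Step~13 of Algorithm~\ref{alg:topo}.

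\textbf{Step 3: the case $|P|>2$.} By the preliminaries, $P$ is the unique simple path between $L_{p_t}$ and $L_{p_r}$ in the acyclic contracted graph. Adding $L_{rt}$ with its two revolute edges therefore closes exactly one cycle, namely $P$ followed by $L_{p_r}\to L_{rt}\to L_{p_t}$. The cycle is unique because the graph was acyclic and adding one vertex of degree two creates one independent cycle. The same mobility count gives $3(|P|)-2(|P|+1)=|P|-2\ge 1$. So this loop is a genuine kinematic loop and cannot be collapsed to a fixed joint.

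The main obstacle is the justification in Step~2 that the zero-mobility triangle is equivalent to fixing $J_{p_t p_r}$ when that joint is $P$ or $G$ rather than $R$. I would first argue that the triangle is nondegenerate: the actuator line does not pass through the joint axis or run along the prism direction, and this is guaranteed within the non-singular operating range assumed in Section~\ref{sec:itep-solvers}. Under that condition the map $\theta\mapsto l_j$ has a nonzero derivative, so the implicit function theorem gives a locally unique $\theta$ for the fixed $l_j$. For $G$ the same argument applies to the four-bar's designated input angle.
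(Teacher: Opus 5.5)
Your argument follows the paper's own proof. Both merge tube and rod into $L_{rt}$ because locking removes the actuator's only internal DoF. Both reduce the $|P|=2$ case to a fixed joint as a zero-mobility constraint. Both observe that for $|P|>2$ the path $P$ together with the two revolute attachments through $L_{rt}$ closes a cycle. You additionally write out the planar mobility counts: $0$ for the triangle and $|P|-2$ for the longer loop. The paper only asserts the first here and defers the second to Lemma~\ref{lem:path-length-3}.

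Two details in your added justifications need correcting.

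\emph{The prismatic nondegeneracy condition is backwards.} Let $\mathbf d(\theta)=\mathbf d_0+\theta\mathbf a$ be the vector between the mounting points. Then $\mathrm dl/\mathrm d\theta=\mathbf d\cdot\mathbf a/\lVert\mathbf d\rVert$, the cosine of the angle between the actuator line and the prism axis. This derivative vanishes when the actuator is \emph{perpendicular} to the prism axis and equals $\pm1$ when the actuator runs along it. As written, your condition excludes the canonical Type~A arrangement, such as the roadheader telescoping cylinder, and misses the actual singular case. Your revolute condition, that the actuator line not pass through the joint axis, is correct.

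\emph{The appeal to endpoint alignment is misplaced.} The endpoint-alignment step of Algorithm~\ref{algo:3} is a post-processing operation that the paper states does not affect the FK/IK solve, so it cannot be what excludes relative axial rotation of tube and rod. The correct and sufficient reason is the one in your preceding sentence: the actuator model has only a single telescopic relation between tube and rod, so prescribing $l_j$ leaves no relative motion.
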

\begin{proof}
Fixing the rod--tube relative motion removes the actuator's internal DoF, so the tube and rod behave as one rigid body. They remain connected to the mechanism through their mounting revolute joints, yielding the merged link $L_{rt}$. 
If $L_{p_t}$ and $L_{p_r}$ are adjacent ($|P|=2$), the locked actuator and the direct parent--parent joint form only a zero-mobility constraint that is reduced to a fixed joint in Algorithm~\ref{alg:topo}; therefore no independent local loop remains to be solved. Otherwise, the existing path $P$ together with the new two-edge connection through $L_{rt}$ forms a cycle.
\end{proof}

\begin{lemma}\label{lem:path-length-3}
Assume all actuator redundancy groups except one are locked. If locking creates a local loop as in Lemma~\ref{lem:lock-loop}, then the unique path $P$ between the endpoints satisfies $|P|=3$.
\end{lemma}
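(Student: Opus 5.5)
The plan is a mobility count on the loop that Lemma~\ref{lem:lock-loop} produces. Write $|P|=k\ge 3$ for the number of links on $P$. Then $P$ contains $k-1$ one-DoF joints, each of type $R$, $P$ or $G$, by the preliminaries. Closing the loop through the merged link $L_{rt}$ adds one extra link and two revolute joints. The resulting single loop therefore has $k+1$ links and $k+1$ one-DoF joints. Since the mechanism is planar at the nominal level (four-bar assumption, parallel revolute axes in the working plane), the planar Chebychev--Grübler--Kutzbach count gives the loop mobility
\[
M = 3\bigl((k+1)-1\bigr) - 2(k+1) = k-2 .
\]
Here each $G$ edge is counted as a single 1-DoF planar pair, which is what the contraction encodes.

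The next step is to bound $M$ from above. Lock every redundancy group except the active one $\mathcal{G}_i$. By Assumption~\ref{asp:1}, the whole robot then retains exactly one DoF. Algorithm~\ref{alg:topo} and the discussion after it show that any induced local loop must contain $L_{p_t}$ and $L_{p_r}$ of the active actuator. The loop is therefore still unconstrained by that actuator, and its mobility equals the single DoF available to the active group, so $M=1$. From $k-2=1$ we get $k=3$.

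The lower bound is immediate and serves as a consistency check. If $k=2$, Lemma~\ref{lem:lock-loop} says no loop is retained. So in the case considered, $k\ge 3$, and $M\ge 1$ holds automatically. This matches the claim that the loop is a four-link loop, with three links from $P$ plus $L_{rt}$. That four-link loop is exactly the generalized four-bar of Type~D.

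I expect the main obstacle to be justifying $M=1$ rather than merely $M\ge 1$. An over-mobile loop with $k\ge 4$ could a priori be partly constrained elsewhere in the mechanism. I would close this gap with the following argument. The loop is attached to the rest of the graph only through tree-like (acyclic) connections, by Assumption~\ref{asp:3} and the unique-path condition. Such attachments cannot remove internal loop freedoms. Any internal DoF of the loop therefore survives as a global DoF of the robot with all other groups locked. Together with the one-DoF count from Assumption~\ref{asp:1}, this forces $M\le 1$.

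A secondary subtlety is the treatment of $G$ and $P$ edges in the count. I would argue that each removes two planar freedoms, exactly like an $R$ pair. This holds because a contracted four-bar is a 1-DoF planar transmission between its ground member and the exposed member.
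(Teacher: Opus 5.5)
Your proposal follows the paper's proof: the induced loop has $|P|+1$ links and $|P|+1$ one-DoF joints, the planar Chebychev--Gr\"ubler--Kutzbach count gives mobility $|P|-2$, and equating this with the single DoF left by Assumption~\ref{asp:1} gives $|P|=3$. Your extra justification of $M\le 1$ goes beyond the paper, which simply identifies the loop mobility with the global DoF, but it tacitly assumes, as the paper does, that no other locked actuator fixes a joint on $P$, since Assumption~\ref{asp:3} only makes the actuator-free graph acyclic and says nothing about the graph once locked actuators are added.
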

\begin{proof}
By Assumption~\ref{asp:1}, after locking all other groups the mechanism has total DoF equal to 1. Let $n$ and $j$ denote the numbers of links and 1-DoF joints in the induced loop, respectively. The loop contains the $|P|$ links on $P$ plus the merged rigid link $L_{rt}$, hence $n=|P|+1$. It contains $|P|-1$ joints along $P$ and two additional revolute joints incident to $L_{rt}$, and hence $j=|P|+1$. Under our scope, this loop is planar, so the planar Chebychev--Gr\"ubler--Kutzbach criterion gives
\[
\mathrm{DoF}=3(n-1)-2j=3|P|-2(|P|+1)=|P|-2.
\]

Since $\mathrm{DoF}=1$, we obtain $|P|=3$.
\end{proof}

\begin{theorem}\label{thm:itep-complete}
Under the scope assumptions of Section~\ref{sec:modeling}, every actuator-induced structure obtained by locking all other actuator redundancy groups falls into one of the four ITEP types (A--D).
\end{theorem}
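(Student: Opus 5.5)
The argument fixes an active redundancy group $\mathcal{G}_i$ with representative $A_i$ and locks every other group. Each locked actuator $A_j$ is handled by Lemma~\ref{lem:lock-loop}. Two cases follow, according to whether any locked actuator produces a local loop.

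\emph{No local loop.} This happens when every locked $A_j$ has $|P|=2$. Then each lock only changes a direct joint into a fixed joint, and fixed joints are merged. By Assumption~\ref{asp:3} the actuator-free contracted graph is acyclic, so the reduced graph is still a forest. By the independent-path condition, the parents $L_{p_t}$ and $L_{p_r}$ of $A_i$ are joined by a unique simple path.

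Every joint on this path is a 1-DoF joint of type $R$, $P$ or $G$. Every joint that is not fixed contributes one DoF to the relative motion of the two endpoints. Assumption~\ref{asp:1} says the structure has total DoF~1, so exactly one non-fixed joint survives on the path. Merging the fixed-joint aggregates, as allowed by the equivalence rule (iii), reduces the path to two links joined by a single joint. That joint is $P$, $R$ or $G$, which gives Type~A, B or C respectively.

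\emph{A local loop exists.} Some locked $A_j$ has $|P|>2$. Each such loop carries one internal DoF, and it cannot be absorbed by a fixed joint. Since the total DoF is 1, there is exactly one such loop, and $A_i$ must drive it. If $L_{p_t}$ and $L_{p_r}$ did not both lie on the loop, the loop's DoF would remain free, which is a contradiction.

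By Lemma~\ref{lem:path-length-3}, $|P|=3$. The loop therefore has four links: the three links on $P$ plus the merged link $L_{rt}$. It has four 1-DoF joints: two on $P$ and two revolute attachments at $L_{rt}$.

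The two joints on $P$ cannot both be revolute. If they were, the loop would be an ordinary planar four-bar, which Algorithm~\ref{alg:generalized-joints} would already have contracted. A prismatic joint is excluded because of the planar revolute structure these loops have in scope. By assumption (iv), exactly one joint on $P$ is a generalized joint $G$ adjacent to an $R$, as stated in Section~\ref{sec:topo-algo}. Hence the loop consists of one $G$ and three $R$ joints, and $L_{p_t}$, $L_{p_r}$ are arbitrary links on it. This is exactly Type~D.

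The main obstacle is this last step: excluding joint compositions on $P$ other than $G$+$R$ (such as $R$+$R$ or anything with $P$). Mobility counting alone does not rule them out. I would argue it from the contraction rules: an $R$+$R$ loop closed through $L_{rt}$ would have been recognized as an SCC four-bar unless it involves the actuator. The $P$ cases I would exclude by the planar revolute scope assumption, stating this explicitly as part of the scope rather than deriving it.
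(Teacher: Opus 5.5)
Your skeleton is the paper's proof. You split the argument via Lemma~\ref{lem:lock-loop}. With no induced loop, a single surviving $R$/$P$/$G$ joint gives Types A--C; your DoF count there is a more explicit version of the paper's one-line claim. With an induced loop, the loop is unique and contains both parents, and Lemma~\ref{lem:path-length-3} makes it a four-link loop.

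The gap is in the extra step where you try to show this loop has exactly one $G$ and three $R$ joints. Both arguments you give there fail.

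(a) The $R$--$R$ exclusion does not work. The loop passes through $L_{rt}$, which exists only after $A_j$ is locked inside Algorithm~\ref{alg:topo}. Actuators are not entries of $\boldsymbol{J}$, and Algorithm~\ref{alg:topo} has to reverse the tube mounting joint and fix the rod--tube joint precisely to close this cycle. So the SCC search in Algorithm~\ref{alg:generalized-joints} never sees it. Every such loop ``involves the actuator,'' so your own caveat makes the argument exclude nothing.

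(b) Assumption (iv) only forbids actuator paths through already-actuated generalized joints. It neither places a $G$ on $P$ nor rules out $G$--$G$. The ``$G$ adjacent to $R$'' sentence in Section~\ref{sec:topo-algo} is an observation about the benchmark robots, not a consequence of (iv).

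This step cannot be derived from the stated assumptions at all. Take a revolute chain base--$L_1$--$L_2$ with one cylinder from the base to $L_2$ and another from the base to $L_1$. It satisfies (i)--(iv), and locking the first cylinder produces an all-revolute 1-DoF four-link loop.

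The paper does not attempt this derivation. Its proof stops at ``a four-link loop, i.e., a generalized four-bar in our taxonomy.'' Limitation (L3) in Appendix~\ref{appendix:A} concedes that the two joints on $P$ may be any of $\{R,P,G\}$ and places everything except $G$--$R$/$R$--$G$ outside the scope.

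The repair is the move you already make for $P$, applied uniformly. State the $G$--$R$ composition of the indirect-lock loop as an explicit scope hypothesis covering $R$--$R$ and $G$--$G$ as well, or read Type D loosely as any 1-DoF four-link loop, as the paper does. Then drop the Algorithm~\ref{alg:generalized-joints} and (iv) arguments. With that change your proof coincides with the paper's.
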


\begin{proof}
Lock all actuator groups except the current one $\mathcal{G}$; by Assumption~\ref{asp:1} the remaining DoF is 1. There are two cases.
If no local loop is created during locking, the contracted graph remains acyclic. Since the active mechanism has only 1 DoF, the influence of the current actuator must reduce to a unique 1-DoF transmission between its endpoint parents $(L_{p_t},L_{p_r})$. In the contracted representation, such a transmission can only be prismatic, revolute, or generalized, i.e., $P$, $R$, or $G$, which yields Type A/B/C.
Otherwise, by Lemma~\ref{lem:lock-loop} a local loop is created, and by Lemma~\ref{lem:path-length-3} it must be a four-link loop, i.e., a generalized four-bar in our taxonomy, which yields Type D.
\end{proof}

\subsection{Limitations and Out-of-Scope Cases}

The completeness result above holds only within the target scope and has the following~limitations.

\paragraph{\hl{(L1)} 
 Spatial or multi-DoF closed chains}
MineRobot assumes planar single-DoF four-bars as the dominant loop primitive. Spatial closed chains and multi-DoF loops are out of scope since contraction into a 1-DoF generalized joint is no longer valid and the corresponding ITEP may not reduce to a 1D~subproblem.

\paragraph{\hl{(L2)} Topologies that remain multi-cyclic after contraction}
If the actuator-free topology remains multi-cyclic after contraction, the locked actuator-induced structure may not be a simple path or a single 1-DoF loop, and the A--D taxonomy may fail or become non-unique.

\paragraph{\hl{(L3)} Theoretical variants of the indirect-lock loop}
Lemma~\ref{lem:path-length-3} implies $|P|=3$ for the indirect-lock loop, but the two joints along $P$ can in principle be any 1-DoF types in $\{R,P,G\}$. In our benchmark mining robots and common industrial designs, we only observed the $G$--$R$ and $R$--$G$ variants, corresponding to hydraulic-support base mechanisms that shape end-effector trajectories and improve stability and load capacity. 
Other combinations are theoretically possible but were not observed in our target domain and are therefore outside the present scope.

\paragraph{\hl{(L4)} Non-length-driven transmissions and additional constraints}
The framework is actuator length driven and kinematics centric. Mechanisms dominated by other transmissions (e.g., gear trains, cams, or cable systems), or scenarios requiring dynamics, contact, collision, or environment constraints, are outside the present~scope.

MineRobot does not claim universality beyond the defined scope. Within the major underground mining-robot classes considered in this paper, all actuators in our benchmarks were successfully classified into Types A--D, with no unclassified cases observed.

\section{Additional Details of the Kinematics Algorithms}
\label{appendix:B}

This appendix provides additional details of the kinematics algorithms described in Section~\ref{sec:solver}.
We first examine the numerical robustness of the per-ITEP FK root-finding procedure by comparing representative one-dimensional solvers on Type C and Type D cases, which involve four-bar and generalized-four-bar closure constraints.
We then present the FK post-processing step used to maintain endpoint-aligned actuator-body orientations.
Finally, we provide supplementary details for the IK solver, including additional robustness statistics, the detailed GSS procedure, and the ablation of different one-dimensional solvers within the same decomposed IK framework.

\subsection{Per-ITEP FK Root-Finding Robustness}
\label{appendix:fk-rootfinding}

Type C and Type D ITEPs are the most relevant cases for assessing the numerical behavior of the per-ITEP FK root finder because they involve planar four-bar or generalized-four-bar closure constraints.
Type A and Type B ITEPs are simpler scalar cases and converge rapidly in our tests; therefore, we focus the supplementary analysis on Types C and~D.

For each type, we generated 100 randomized trials by sampling initial configurations and target actuator lengths within the MRDF-defined admissible bounds.
For each trial, we recorded the residual \(|\varphi|\), where \(\varphi\) denotes the scalar actuator-length constraint residual of the corresponding per-ITEP FK subproblem.
All methods used the same residual tolerance of \(10^{-6}\), and the implementation limit for the per-ITEP FK solve was 50 iterations.
Figure~\ref{fig:fk-rootfinding} shows the first 20 iterations for several representative one-dimensional solvers.

For Type C, Newton--Raphson reduced the residual below \(10^{-6}\) within approximately 3 iterations on average, while Secant required about 5 iterations, Brent about 11 iterations, and Bisection about 16 iterations.
Evolution-strategy-based methods required substantially more iterations and are less suitable as default per-ITEP FK solvers.
Type D showed the same trend, with slightly higher iteration counts due to the more complex generalized-four-bar closure.
These results indicate that several representative one-dimensional solvers can be used for the Type C and Type D FK subproblems, while Newton--Raphson provides a favorable implementation simplicity and speed trade-off in our tested mechanisms.
The experiment is intended as empirical robustness evidence within the MRDF-defined operating ranges, rather than a global convergence proof for arbitrary closed-chain mechanisms.

\begin{figure}[htbp]
  \centering
    \includegraphics[width=1.0\linewidth]{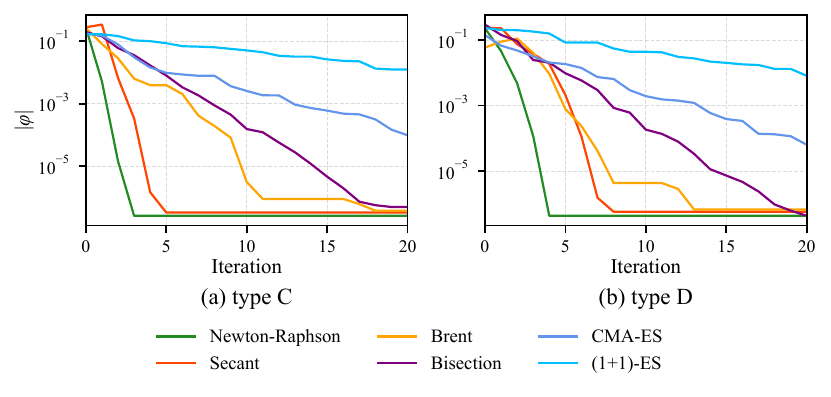}
    \caption{Residual convergence of one-dimensional FK root finding for Type C and Type D ITEPs. Each setting uses 100 randomized trials within the MRDF-defined actuator bounds. The residual \(|\varphi|\) denotes the actuator-length constraint error of the corresponding per-ITEP FK subproblem.}
    \label{fig:fk-rootfinding}
\end{figure}

\subsection{FK Post-Processing: Endpoint-Alignment Reorientation}
\label{appendix:B1}

After solving the forward kinematics, all actuators satisfy their target lengths; however, the tube and rod body frames may not yet be aligned with the line connecting the two mounting points, as illustrated in Figure~\ref{fig:lookat}.
In the final stage of the implementation, we revisit each actuator and align its tube and rod body frames with the endpoint-to-endpoint direction.
This endpoint-alignment step maintains the geometric consistency of actuator-body orientations and is consistent with endpoint-based hydraulic-actuator representations in multibody models~\cite{khadim2023experimental}.
It is applied after all actuator-length constraints have been satisfied, and it does not affect subsequent FK/IK solver iterations.

\begin{figure}[htbp]
  \centering
    \includegraphics[width=0.9\linewidth]{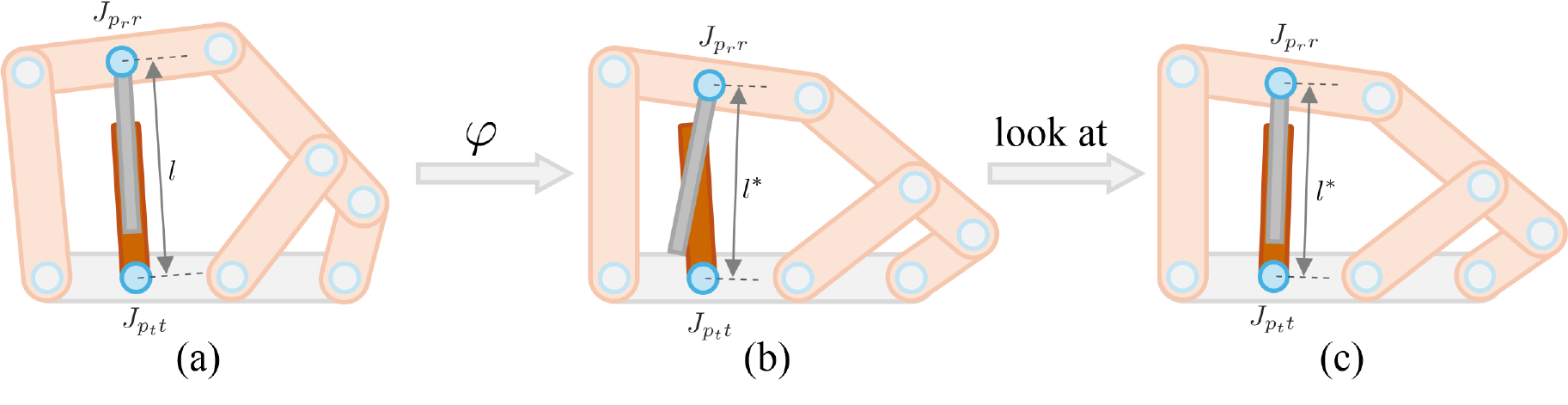}
    \caption{Endpoint-alignment reorientation. (\textbf{a},\textbf{b}) After length updates, tube/rod body frames may be inconsistent with the actuator endpoint direction; (\textbf{c}) we realign $J_{p_t t}$ and $J_{p_r r}$ with the line connecting the two mounting points while preserving length.}
    \label{fig:lookat}
\end{figure}

By convention, let the actuator tube and rod be denoted by \(L_t\) and \(L_r\), and their parent links by \(L_{p_t}\) and \(L_{p_r}\), respectively.
Let \( \mathbf{T}_{0 t} \) and \( \mathbf{T}_{0 r} \) denote the transformations from the world frame \(L_0\) to \(L_t\) and \(L_r\), and let \( \mathbf{R}_{0 t} \) and \( \mathbf{R}_{0 r} \) be their rotational components.
The rotation angles of joints \( J_{p_t t} \) and \( J_{p_r r} \) are then
\begin{equation}\label{eq:lookat}
\begin{aligned}
\theta_t &= \arccos\!\left( (\mathbf{R}_{0 t}^{-1}\hat{\mathbf d}_{tr})^\top \hat{\mathbf d}_0 \right), \quad
\theta_r &= \arccos\!\left( (\mathbf{R}_{0 r}^{-1}\hat{\mathbf d}_{rt})^\top \hat{\mathbf d}_0 \right).
\end{aligned}
\end{equation}
where
\[
\hat{\mathbf{d}}_{t r} =
\frac{ \mathbf{t}(\mathbf{T}_{0 r}) - \mathbf{t}(\mathbf{T}_{0 t}) }
{ \left\| \mathbf{t}(\mathbf{T}_{0 r}) - \mathbf{t}(\mathbf{T}_{0 t}) \right\| },
\quad
\hat{\mathbf{d}}_{r t} = -\hat{\mathbf{d}}_{t r},
\]
and \( \hat{\mathbf{d}}_0 \) is the initial orientation, by default aligned with the x-axis,
\[
\hat{\mathbf{d}}_0 = [1, 0, 0]^\top .
\]

\hl{The corresponding} transformations \( \mathbf{T}_{p_t t}(\theta_t) \) and \( \mathbf{T}_{p_r r}(\theta_r) \) are then updated accordingly.

\subsection{Additional Details of the IK Solver}
\label{appendix:B2}

This subsection complements the IK description in Section~\ref{sec:solver}.
We first provide additional robustness statistics under randomized targets and initialization, and then give the detailed GSS procedure together with an ablation of different one-dimensional solvers within the same decomposed IK framework.

\subsubsection{IK Robustness Statistics}
\label{appendix:B21}

This subsection provides additional statistics for the inverse kinematics experiments in Section~\ref{sec:ik-experiments}.
Our goal is to complement Table~\ref{tab:simulation_results} and Figure~\ref{fig:ik_results} with robustness evidence under randomized targets and randomized feasible initial states.
As in Section~\ref{sec:ik-experiments}, each initial configuration is generated by sampling actuator lengths within their bounds and applying FK, ensuring a feasible closed-chain state; the target pose is sampled from a bounding region around the robot workspace and may be unreachable.
Success denotes convergence according to the objective-change criterion of Algorithm~\ref{alg:ik} within the iteration budget, with all actuator lengths remaining within their bounds.
In particular, these statistics help assess whether stronger actuator coupling mainly slows convergence or also leads to degraded final solutions.

Table~\ref{tab:ik_robustness_compact} shows that Ours, Ours-NL, BFGS, and CG all converge reliably under the same tolerance across the tested robots.
For the more strongly coupled hydraulic supports (TCCHS, STHS, and SSHS), the median iteration counts and runtimes are indeed higher than for more weakly coupled robots such as RH, LHD, VD, DJ, and SH.
However, the final objective statistics remain stable: both the median and the P95 values of $\Psi^*$ stay at the same order of magnitude as those of BFGS and CG, without indicating systematic degradation on the coupled systems.
These results support the interpretation in Section~\ref{sec:ik-experiments}: in our benchmark, stronger coupling primarily affects convergence speed rather than increasing the observed risk of poor local minima.
The Ours-NL rows remain very close to the standard Ours rows in success rate, final residuals, iteration counts, and runtime.
This indicates that feasible near-boundary configurations within the MRDF-defined actuator ranges do not noticeably degrade IK robustness in the tested mechanisms.

\begin{table}[htbp]
  \caption{\textbf{IK robustness under randomized targets and initialization} (100 trials per robot). In the standard rows, the initial configuration is generated from randomized actuator lengths through FK, and the target $\mathbf{T}^*$ is randomized around the workspace. The \textbf{Ours-NL} row denotes a near-limit stress test using the same solver as Ours: each actuator length is sampled within the nearest \(5\%\) of either its lower or upper MRDF-defined stroke bound, and FK is used to obtain feasible closed-chain configurations. All methods use the same tolerance and stopping criterion as Section~\ref{sec:solver}. We report success rate, final objective statistics ($\Psi^*$, in units of \(10^{-6}\)), median iterations, and median runtime; Succ.\ denotes convergence within the iteration budget while respecting actuator bounds, and P95 denotes the 95th percentile.}
  \label{tab:ik_robustness_compact}
  \centering
  \resizebox{0.6\textwidth}{!}{
    \begin{tabular}{llrrrrr}
      \toprule
      \textbf{Robot} & \textbf{Method} & \textbf{Succ.\ (\%)} & \textbf{Med.}\ \boldmath{$\Psi^*$} \textbf{(}\boldmath{$10^{-6}$\textbf{)}} & \textbf{P95} \boldmath{$\Psi^*$} \textbf{(}\boldmath{$10^{-6}$}\textbf{)} & \textbf{Med.\ Iters} & \textbf{Med.\ Time (ms)} \\
      \midrule

      \multirow{4}{*}{\textbf{TCCHS}}
        & \textbf{Ours} & 100 & 0.25 & 0.60 & 10 & 71.48 \\
        & \textbf{Ours-NL} & 100 & 0.26 & 0.64 & 10 & 75.53 \\
        & BFGS          & 100 & 0.22 & 0.55 & 35 & 230.16 \\
        & CG            & 100 & 0.23 & 0.57 & 29 & 153.58 \\
      \midrule

      \multirow{4}{*}{\textbf{STHS}}
        & \textbf{Ours} & 100 & 0.28 & 0.65 & 12 & 82.52 \\
        & \textbf{Ours-NL} & 100 & 0.28 & 0.67 & 13 & 85.24 \\
        & BFGS          & 100 & 0.24 & 0.58 & 16 & 115.65 \\
        & CG            & 100 & 0.25 & 0.60 & 25 & 105.20 \\
      \midrule

      \multirow{4}{*}{\textbf{SSHS}}
        & \textbf{Ours} & 100 & 0.24 & 0.58 & 9  & 66.52 \\
        & \textbf{Ours-NL} & 100 & 0.23 & 0.62 & 9  & 68.67 \\
        & BFGS          & 100 & 0.23 & 0.56 & 15 & 102.65 \\
        & CG            & 100 & 0.24 & 0.58 & 24 & 123.58 \\
      \midrule

      \multirow{4}{*}{\textbf{RH}}
        & \textbf{Ours} & 100 & 0.05 & 0.12 & 1  & 0.20 \\
        & \textbf{Ours-NL} & 100 & 0.06 & 0.14 & 1  & 0.18 \\
        & BFGS          & 100 & 0.06 & 0.15 & 5  & 0.40 \\
        & CG            & 100 & 0.06 & 0.14 & 5  & 0.28 \\
      \midrule

      \multirow{4}{*}{\textbf{LHD}}
        & \textbf{Ours} & 100 & 0.08 & 0.18 & 1  & 4.12 \\
        & \textbf{Ours-NL} & 100 & 0.10 & 0.29 & 1  & 5.06 \\
        & BFGS          & 100 & 0.10 & 0.25 & 6  & 22.30 \\
        & CG            & 100 & 0.11 & 0.27 & 7  & 18.66 \\
      \midrule

      \multirow{4}{*}{\textbf{VD}}
        & \textbf{Ours} & 100 & 0.06 & 0.15 & 1  & 3.95 \\
        & \textbf{Ours-NL} & 100 & 0.07 & 0.21 & 1  & 4.33 \\
        & BFGS          & 100 & 0.09 & 0.22 & 8  & 31.39 \\
        & CG            & 100 & 0.09 & 0.24 & 6  & 16.46 \\
      \midrule

      \multirow{4}{*}{\textbf{DJ}}
        & \textbf{Ours} & 100 & 0.07 & 0.16 & 1  & 4.38 \\
        & \textbf{Ours-NL} & 100 & 0.11 & 0.18 & 1  & 4.74 \\
        & BFGS          & 100 & 0.08 & 0.20 & 5  & 23.48 \\
        & CG            & 100 & 0.08 & 0.22 & 6  & 17.50 \\
      \midrule

      \multirow{4}{*}{\textbf{SH}}
        & \textbf{Ours} & 100 & 0.04 & 0.10 & 1  & 0.14 \\
        & \textbf{Ours-NL} & 100 & 0.03 & 0.13 & 1  & 0.16 \\
        & BFGS          & 100 & 0.05 & 0.14 & 6  & 0.54 \\
        & CG            & 100 & 0.05 & 0.15 & 5  & 0.39 \\
      \bottomrule
    \end{tabular}
  }
\end{table}

\subsubsection{GSS Details and 1D-Solver Ablation}
\label{appendix:B22}

For a fixed actuator index \(i\), GSS searches over the admissible interval \([\ell_i,u_i]\) while keeping the other actuator lengths fixed at their current values.
Each candidate length is evaluated by running FK and computing the IK residual \(\Psi\).
The method maintains a bracketing interval \([a,b]\) and two interior points \(c\) and \(d\) separated according to the golden ratio.
At each step, the side with the larger residual is discarded, so the interval shrinks monotonically without requiring derivatives.
Algorithm~\ref{alg:gss} gives the detailed procedure used for each bounded one-dimensional IK subproblem in Section~\ref{sec:solver}.

Because only function values are used, the same procedure can be applied to all actuator types and to both reachable and unreachable target poses; however, it is still a local scalar search within the current Gauss--Seidel-style sweep.

To clarify whether the IK speed-up mainly comes from the ITEP-based decomposition or from the specific choice of the one-dimensional solver, we compare four 1D solvers within the same IK framework: GSS, Brent~\cite{brent1973algorithms}, Newton--Raphson, and Secant.
All four solvers are used under the same ITEP-based Gauss--Seidel-style outer iterations, the same bounds, and the same stopping criterion as in Section~\ref{sec:solver}.
All four solvers converged reliably to the same tolerance and produced comparable final objectives in our trials; therefore, Table~\ref{tab:ik_1d_ablation_compact} reports only iteration and runtime statistics.

\setcounter{algocf}{0}

\renewcommand\thealgocf{A\arabic{algocf}}

\setlength{\algotitleheightrule}{0.5pt}
\begin{algorithm}[htbp]
\small
\caption{Golden section search.}
\label{alg:gss}
\KwIn{Actuator index $i$; current actuator set $\mathcal{L}$; target transformation $\mathbf{T}^*$}
\KwOut{Updated actuator length $l_i^*$}

$a \gets \ell_i$; \quad $b \gets u_i$; \quad $\phi \gets \frac{\sqrt{5} - 1}{2}$ \\
$c \gets b - \phi (b - a)$; \quad $d \gets a + \phi (b - a)$ \\
$\Psi_c \gets \Psi(\mathcal{L} \mid l_i = c, \mathbf{T}^*)$ \\
$\Psi_d \gets \Psi(\mathcal{L} \mid l_i = d, \mathbf{T}^*)$ \\

\While{$|b - a| > \text{tol}$}{
    \If{$\Psi_c < \Psi_d$}{
        $b \gets d$; \quad $d \gets c$; \quad $\Psi_d \gets \Psi_c$ \\
        $c \gets b - \phi (b - a)$ \\
        $\Psi_c \gets \Psi(\mathcal{L} \mid l_i = c, \mathbf{T}^*)$
    }
    \Else{
        $a \gets c$; \quad $c \gets d$; \quad $\Psi_c \gets \Psi_d$ \\
        $d \gets a + \phi (b - a)$ \\
        $\Psi_d \gets \Psi(\mathcal{L} \mid l_i = d, \mathbf{T}^*)$
    }
}
\Return $l_i^* \gets \frac{a + b}{2}$
\end{algorithm}

\begin{table}[htbp]
  \caption{\textbf{Ablation of 1D solvers within the same IK decomposition framework} (100 trials per robot). All methods use the same ITEP-based Gauss--Seidel-style outer iterations, bounds, and stopping criterion; only the 1D subproblem solver is changed. All four solvers converged reliably to the same tolerance and produced comparable final objectives, so we report average and median outer iterations, and average and median runtime.}
  \label{tab:ik_1d_ablation_compact}
  \centering
  \resizebox{0.6\textwidth}{!}{
    \begin{tabular}{llrrrr}
      \toprule
      \textbf{Robot} & \textbf{1D Solver} & \textbf{Avg.\ Iter.} & \textbf{Med.\ Iter.} & \textbf{Avg.\ Time (ms)} & \textbf{Med.\ Time (ms)} \\
      \midrule

      \multirow{4}{*}{\textbf{TCCHS}}
        & \textbf{GSS}   & 11.43 & 11 & 78.63 & 71.48 \\
        & Brent          & 11.21 & 11 & 74.25 & 68.94 \\
        & Newton         & 11.35 & 11 & 84.90 & 77.35 \\
        & Secant         & 11.52 & 11 & 82.10 & 75.62 \\
      \midrule

      \multirow{4}{*}{\textbf{STHS}}
        & \textbf{GSS}   & 14.28 & 14 & 96.27 & 82.52 \\
        & Brent          & 14.05 & 14 & 91.40 & 78.36 \\
        & Newton         & 14.17 & 14 & 103.15 & 88.70 \\
        & Secant         & 14.33 & 14 & 100.82 & 86.94 \\
      \midrule

      \multirow{4}{*}{\textbf{SSHS}}
        & \textbf{GSS}   & 10.31 & 10 & 73.91 & 67.26 \\
        & Brent          & 10.14 & 10 & 69.47 & 63.22 \\
        & Newton         & 10.22 & 10 & 79.82 & 72.64 \\
        & Secant         & 10.45 & 10 & 77.61 & 70.63 \\
      \midrule

      \multirow{4}{*}{\textbf{RH}}
        & \textbf{GSS}   & 1.02 & 1 & 0.21 & 0.20 \\
        & Brent          & 1.01 & 1 & 0.18 & 0.17 \\
        & Newton         & 1.03 & 1 & 0.24 & 0.22 \\
        & Secant         & 1.04 & 1 & 0.23 & 0.21 \\
      \midrule

      \multirow{4}{*}{\textbf{LHD}}
        & \textbf{GSS}   & 1.08 & 1 & 4.21 & 4.12 \\
        & Brent          & 1.05 & 1 & 3.98 & 3.91 \\
        & Newton         & 1.12 & 1 & 4.86 & 4.73 \\
        & Secant         & 1.14 & 1 & 4.62 & 4.49 \\
      \midrule

      \multirow{4}{*}{\textbf{VD}}
        & \textbf{GSS}   & 1.06 & 1 & 4.06 & 3.95 \\
        & Brent          & 1.03 & 1 & 3.82 & 3.73 \\
        & Newton         & 1.09 & 1 & 4.40 & 4.27 \\
        & Secant         & 1.11 & 1 & 4.26 & 4.13 \\
      \midrule

      \multirow{4}{*}{\textbf{DJ}}
        & \textbf{GSS}   & 1.07 & 1 & 4.45 & 4.30 \\
        & Brent          & 1.04 & 1 & 4.18 & 4.05 \\
        & Newton         & 1.11 & 1 & 4.81 & 4.67 \\
        & Secant         & 1.13 & 1 & 4.67 & 4.53 \\
      \midrule

      \multirow{4}{*}{\textbf{SH}}
        & \textbf{GSS}   & 1.02 & 1 & 0.15 & 0.14 \\
        & Brent          & 1.01 & 1 & 0.13 & 0.12 \\
        & Newton         & 1.04 & 1 & 0.17 & 0.16 \\
        & Secant         & 1.03 & 1 & 0.16 & 0.15 \\
      \bottomrule
    \end{tabular}
  }
\end{table}

Table~\ref{tab:ik_1d_ablation_compact} shows that the average and median outer iteration counts are nearly identical across the four one-dimensional solvers.
This indicates that the dominant convergence behavior is governed by the ITEP-based decomposition and the Gauss--Seidel-style outer update framework, rather than by the specific one-dimensional solver.
The runtime differences are comparatively modest: Brent is slightly faster than GSS, whereas Newton--Raphson and Secant are slightly slower.
In this setting, GSS remains a practical and robust choice because it is derivative-free, simple to implement, and operates directly on the actuator bounds, thereby avoiding derivative computation and extra boundary handling.
Other classical one-dimensional solvers are also effective within the same decomposed IK framework.

\section{Application Details}
\label{appendix:C}

This appendix provides additional details for the applications briefly summarized in Section~\ref{sec:apps}.
These examples illustrate how the proposed MineRobot framework supports a spectrum of mining-robot application workflows.
Workspace visualization is implemented using the FK solver, while trajectory generation is driven by the IK solver.
We also developed virtual reality (VR) interfaces for robot operation and a digital-twin workflow prototype that leverages the kinematics solvers for bidirectional synchronization between physical and virtual systems.

\subsection{Workspace Visualization}
\label{appendix:C1}

Workspace visualization is essential for assessing reachability, task feasibility, and safe operation envelopes of mining robots.
Leveraging our FK solver, we realize this capability with a simple sampling pipeline.
A representative result is shown in Figure~\ref{fig:apps}a.

For a designated end-effector, let $\mathcal{A}_{\mathrm{rel}}$ denote the set of relevant actuators, and for each $A_i\in\mathcal{A}_{\mathrm{rel}}$ let the length interval be $b_i=[\ell_i,u_i]$.
We discretize each interval into $N_i$ samples and form the Cartesian grid
\[
\mathcal{B}=\prod_{A_i\in\mathcal{A}_{\mathrm{rel}}}\Bigl\{\ell_i+k\,(u_i-\ell_i)/(N_i-1)\;\big|\;k=0,\dots,N_i-1\Bigr\}.
\]

\hl{For every} sampled vector $\mathbf{l}_{\mathcal{A}_{\mathrm{rel}}}\in\mathcal{B}$, we assemble a complete target-length set $\mathcal{L}^*$ by assigning $\mathbf{l}_{\mathcal{A}_{\mathrm{rel}}}$ to the corresponding entries in $\mathcal{A}$, and evaluate FK as $\mathcal{T}=\Phi(\mathcal{L}^*,\mathcal{T}')$.
From $\mathcal{T}$, we then extract the end-effector transformation $\mathbf{T}_{0e}$.
Rendering the collection of these configurations in the virtual environment produces the workspace.
In our roadheader example in Figure~\ref{fig:apps}a, this process is used to visualize both sampled end-effector center positions and the corresponding end-effector poses within the roadway environment.

\subsection{Trajectory Generation}
\label{appendix:C2}

Trajectory generation is fundamental for motion planning and task execution in mining robotics.
In our framework, the inverse-kinematics (IK) solver serves as the computational backbone for trajectory synthesis.
A representative example is shown in Figure~\ref{fig:apps}b.

In a roadheader example, a set of end-effector via points is specified in the workspace and interpolated into a continuous trajectory curve $\mathbf{C}(t)$, $t\in[0,1]$, with collision constraints omitted for clarity.
The curve is discretized over $[0,1]$.
For each sample $t_k$, the corresponding target transformation is set to $\mathbf{T}^*_k=\mathbf{C}(t_k)$, and IK is solved to obtain the optimized actuator length set $\mathcal{L}^*_k$ and the resulting configuration.
By evaluating IK sequentially along the sampled path, the framework produces a feasible actuator-space trajectory together with the corresponding robot motion sequence.
This procedure also supports animation playback and interpolation in the virtual environment as illustrated by the trajectory-following sequence in Figure~\ref{fig:apps}b.

\subsection{Virtual Reality}
\label{appendix:C3}

Virtual reality (VR) can provide a safe and repeatable environment for operator-training and evaluation workflows in mining scenarios.
We developed an immersive VR system for interactive robot training as shown in Figure~\ref{fig:apps}c.

In our prototype, a user operates a virtual roadheader using a head-mounted display with hand tracking.
To maintain actuator-level correspondence with real hardware, we implement a virtual control panel for hydraulic support groups and a virtual console with levers for the roadheader.
All 3D user interfaces are procedurally generated from actuator parameters encoded in MRDF~\cite{hou2021vrmenudesigner}.
The virtual button states and the lever angles are mapped to actuator target lengths $\mathcal{L}^*$, and the FK solver computes the resulting configuration $\mathcal{T}$ at interactive rates.
The updated robot pose is then immediately rendered in the virtual scene.
This design enables intuitive operator interaction while preserving a direct correspondence between virtual controls and actuator-level robot motion as shown by the VR interface examples in Figure~\ref{fig:apps}c.

\subsection{Digital Twins}
\label{appendix:C4}

A digital twin is a bidirectionally coupled representation that synchronizes a physical system and its virtual model.
Such coupling is useful for safe monitoring, operator training, and decision-support workflows in mining robotics.
Our kinematics solvers provide the kinematics layer for this bidirectional synchronization.

We designed a digital-twin workflow prototype that maps kinematic states in both directions.
Physical robots, such as those used in longwall production and roadheading, are modeled in the virtual world via MRDF.
In the physical-to-virtual direction, sensor streams provide actuator lengths $\mathcal{L}^*$, which are fed to the FK solver to compute the configuration $\mathcal{T}$ and update the virtual robot.
In the virtual-to-physical direction, an interactively specified or automatically planned end-effector target $\mathbf{T}^*$ is passed to the IK solver to obtain target actuator lengths $\mathcal{L}^*$ for downstream command-planning modules.
The bidirectional synchronization workflow is illustrated in Figure~\ref{fig:apps}d.

Under a calibrated kinematic model, the actuator-centered formulation allows actuator-length measurements to estimate the robot configuration through FK, suggesting a practical kinematics-layer role in constrained mining environments.
Full digital-twin accuracy, however, depends on calibration quality, sensing accuracy, dynamics, contact interactions, hydraulic control, and system-level integration, which are outside the scope of the present kinematics-focused study.

\section{MRDF Schema and Conventions}
\label{appendix:mrdf-schema}

MRDF is a JSON-based research prototype format used as the input representation of MineRobot.
The version used in this paper is MRDF~v0.1.
An MRDF file contains three required top-level arrays: \texttt{links}, \texttt{joints}, and \texttt{actuators}.
MRDF~v0.1 is intended to support the actuator-centered kinematics pipeline studied in this paper, rather than to serve as a general-purpose robotics interchange standard.
Table~\ref{tab:mrdf_schema} summarizes the key fields used by the parser and the kinematics algorithms.

\begin{table}[htbp]
  \caption{Key MRDF fields for link, joint, and actuator entities.}
  \label{tab:mrdf_schema}
  \centering 
  \resizebox{\textwidth}{!}{
    \begin{tabular}{lllll}
      \toprule
      \textbf{Entity} & \textbf{Field} & \textbf{Type/Unit} & \textbf{Required} & \textbf{Description} \\
      \midrule

      \multirow{4}{*}{Link}
      & \texttt{name} & string & Yes & Unique link identifier. \\
      & \texttt{origin\_translation} & array[3]/m & Yes & Link-frame translation. \\
      & \texttt{origin\_orientation} & array[3]/rad & Yes & Link-frame RPY orientation. \\
      & \texttt{visual} & object & Optional & Primitive geometry and material, or an external model link. \\

      \midrule

      \multirow{7}{*}{Joint}
      & \texttt{name} & string & Yes & Unique joint identifier. \\
      & \texttt{parent} & string & Yes & Parent link name. \\
      & \texttt{child} & string & Yes & Child link name. \\
      & \texttt{type} & enum & Yes & Joint type: \texttt{Revolute}, \texttt{Prismatic}, or \texttt{Fixed}. \\
      & \texttt{origin\_translation} & array[3]/m & Yes & Joint-frame translation. \\
      & \texttt{origin\_orientation} & array[3]/rad & Yes & Joint-frame RPY orientation. \\
      & \texttt{axis} & array[3] & For R/P & Revolute or prismatic motion axis. \\

      \midrule

      \multirow{8}{*}{Actuator}
      & \texttt{name} & string & Yes & Unique actuator identifier. \\
      & \texttt{tube\_parent} & string & Yes & Tube-side mounting parent link. \\
      & \texttt{rod\_parent} & string & Yes & Rod-side mounting parent link. \\
      & \texttt{tube\_offset} & array[3]/m & Yes & Tube-side mounting offset. \\
      & \texttt{rod\_offset} & array[3]/m & Yes & Rod-side mounting offset. \\
      & \texttt{limit} & object/m & Yes & Stroke bounds \texttt{lower} and \texttt{upper}. \\
      & \texttt{redundants} & array[string] & Optional & Redundant actuator names. \\
      & \texttt{tube\_visual}, \texttt{rod\_visual} & object & Optional & Tube/rod visual geometry and material. \\
      \bottomrule
    \end{tabular}
  }
\end{table}

All names referenced by \texttt{parent}, \texttt{child}, \texttt{tube\_parent}, \texttt{rod\_parent}, and \texttt{redundants} must refer to entities defined in the same MRDF file.
Lengths are expressed in meters and angles in radians.
Frames are right-handed, and origin translations/orientations are expressed relative to the corresponding parent frame.
RPY angles follow the URDF convention, i.e., fixed-axis/extrinsic XYZ rotations.
Visual fields are optional and do not affect kinematic solving.
They may define primitive geometry and material parameters, or provide a link to an external mesh model.
Actuator redundancy is encoded by the \texttt{redundants} array.
Actuators listed as redundant are treated as belonging to the same actuator redundancy group during topology processing and solver scheduling.
If the array is empty or omitted, the actuator is treated as a single-actuator group.

\section{Experiment Details}
\label{appendix:experiment-details}

This appendix summarizes the reproducibility settings used in the randomized experiments.
The full industrial MRDF benchmark files used for the quantitative experiments contain proprietary geometric parameters, actuator stroke limits, linkage dimensions, and commercial robot data; therefore, these files and the exact industrial actuator bounds cannot be publicly released.
To support the reproducibility of the proposed framework within these restrictions, we provide a lightweight open-source implementation of the core MineRobot pipeline, an online interactive demo, and representative simplified MRDF examples, as described in the Data Availability Statement.
The public MRDF examples include non-proprietary actuator bounds and preserve the main kinematic structures used in this work, including linear actuators, closed-chain topology, planar four-bar substructures, redundant actuator groups, and coupled actuator mechanisms.
They are intended to support reproduction of MRDF parsing, topology processing, FK/IK solving, and bound-constrained actuator interaction, although they do not reproduce the proprietary numerical benchmark tables exactly.
Table~\ref{tab:reproducibility_general} summarizes the common settings used in the randomized experiments unless otherwise stated.

\begin{table}[htbp]
  \caption{General settings used in the randomized experiments.}
  \label{tab:reproducibility_general}
  \centering
  \resizebox{\textwidth}{!}{
    \begin{tabular}{ll}
      \toprule
      \textbf{Item} & \textbf{Setting} \\
      \midrule
      Random seed & 42 for all randomized experiments \\
      Trial count & 100 trials per robot or per target set unless otherwise stated \\
      Actuator sampling & Uniform sampling within MRDF-defined bounds \([\ell_i,u_i]\) \\
      FK tolerance & \(10^{-6}\) actuator-length and loop-closure residual tolerance \\
      FK max iterations & 50 per ITEP solve \\
      IK tolerance & \(10^{-6}\) objective-change tolerance \\
      IK max outer iterations & 50 \\
      IK time budget & \(1\,\mathrm{s}\) per trial \\
      \bottomrule
    \end{tabular}
  }
\end{table}
For the IK optimizer comparison in Section~\ref{sec:ik-experiments}, all methods use the same objective function \(\Psi(\mathcal{L},\mathbf{T}^*)\), actuator bounds, initial actuator lengths, target poses, random seed, tolerance, maximum iteration budget, and time budget.
For BFGS, conjugate gradient (CG), and gradient descent, gradients are computed by finite differences, and actuator bounds are enforced by projection, i.e., candidate actuator lengths are clamped to \([\ell_i,u_i]\) after each update.
For CMA-ES, the initial mean is set to the current actuator-length vector, the initial step size is \(\sigma=0.5\), the population size is \(\lambda=6\), and the best \(\mu=3\) candidates are used to update the mean, covariance matrix, and global step size; actuator bounds are enforced by clamping, and the maximum number of iterations is 200.
For (1+1)-ES, the initial state is the current actuator-length vector and the initial mutation scale is \(\sigma=0.5\).
A successful mutation expands the mutation scale by \(1.2\), whereas an unsuccessful mutation shrinks it by \(0.8\), with \(\sigma\in[10^{-8},10^4]\).
Candidate actuator lengths are clamped to their bounds, and the maximum number of iterations is 200.

For the Drake and MuJoCo reference workflows in Section~\ref{comparison-runtime}, the robot geometries and actuator bounds are kept consistent with the MRDF models.
Drake models are encoded in SDF, and MuJoCo models are encoded in MJCF.
The closed-loop constraints and actuator-length tasks required by the benchmark mechanisms are explicitly specified in the reference workflows.
Both reference workflows use the current actuator state as the initial guess, a constraint/solver tolerance of \(10^{-6}\), a maximum iteration count of 50, and a time budget of \(1\,\mathrm{s}\) per trial.
These comparisons are workflow-level references under matched settings, rather than claims about the intrinsic performance limits of Drake or MuJoCo.
For the MuJoCo IK reference entries marked with \(^{*}\) in Table~\ref{tab:engine_runtime_comparison}, we use an external reference workflow because MuJoCo does not provide a native closed-chain IK pipeline.
The workflow optimizes the full robot configuration, evaluates forward kinematics in MuJoCo at each iteration, and minimizes a weighted residual combining end-effector pose error, closed-loop constraint error, and regularization using finite-difference Levenberg--Marquardt updates with line search and bound projection.



\begin{thebibliography}{999}

\bibitem[Peng(2019)]{peng2019longwall}
Peng, S.
\newblock {\em Longwall Mining}; CRC Press: \hl{Boca Raton, FL, USA}, 2019.

\bibitem[Ralston et~al.(2017)Ralston, Hargrave, and Dunn]{ralston2017longwall}
Ralston, J.C.; Hargrave, C.O.; Dunn, M.T.
\newblock Longwall automation: Trends, challenges and opportunities.
\newblock {\em Int. J. Min. Sci. Technol.} {\bf
  2017}, {\em 27},~733--739.

\bibitem[Deshmukh et~al.(2020)Deshmukh, Raina, Murthy, Trivedi, and
  Vajre]{deshmukh2020roadheader}
Deshmukh, S.; Raina, A.; Murthy, V.; Trivedi, R.; Vajre, R.
\newblock Roadheader--A comprehensive review.
\newblock {\em Tunn. Undergr. Space Technol.} {\bf 2020}, {\em
  95},~103148.

\bibitem[Yan et~al.(2025)Yan, Zhao, Hou, and Lu]{Yan2025roadheader}
Yan, C.; Zhao, G.; Hou, S.; Lu, X.
\newblock Advancing unmanned roadheader-based tunneling: A framework for
  challenging underground environments.
\newblock {\em Geo-Spat. Inf. Sci.} {\bf 2025}, \hl{1--18}
.

\bibitem[P{\'e}rez et~al.(2019)P{\'e}rez, Diez, Usamentiaga, and
  Garc{\'\i}a]{perez2019industrial}
P{\'e}rez, L.; Diez, E.; Usamentiaga, R.; Garc{\'\i}a, D.F.
\newblock Industrial robot control and operator training using virtual reality
  interfaces.
\newblock {\em Comput. Ind.} {\bf 2019}, {\em 109},~114--120.

\bibitem[Guan et~al.(2019)Guan, Miao, Li, Liu, and Zhao]{guan2019dynamic}
Guan, E.; Miao, H.; Li, P.; Liu, J.; Zhao, Y.
\newblock Dynamic model analysis of hydraulic support.
\newblock {\em Adv. Mech. Eng.} {\bf 2019}, {\em
  11},~1687814018820143.

\bibitem[Guo et~al.(2024)Guo, Wang, and Liu]{guo2024adaptive}
Guo, X.; Wang, H.; Liu, H.
\newblock Adaptive sliding mode control with disturbance estimation for
  hydraulic actuator systems and application to rock drilling jumbo.
\newblock {\em Appl. Math. Model.} {\bf 2024}, {\em 136},~115637.

\bibitem[Prebil et~al.(2002)Prebil, Kra{\v{s}}na, and
  Ciglari{\v{c}}]{prebil2002synthesis}
Prebil, I.; Kra{\v{s}}na, S.; Ciglari{\v{c}}, I.
\newblock Synthesis of four-bar mechanism in a hydraulic support using a global
  optimization algorithm.
\newblock {\em Struct. Multidiscip. Optim.} {\bf 2002}, {\em
  24},~246--251.

\bibitem[Merlet(2006)]{merlet2006parallel}
Merlet, J.P.
\newblock {\em Parallel Robots}; Springer Science \& Business Media: \hl{Berlin/Heidelberg, Germany}, 2006; Volume 128.

\bibitem[Taghirad(2013)]{taghirad2013parallel}
Taghirad, H.D.
\newblock {\em Parallel Robots: Mechanics and Control}; CRC Press: \hl{Boca Raton, FL, USA},  2013.

\bibitem[Mei et~al.(2025)Mei, Wang, Xie, Li, and Liu]{mei2025sensing}
Mei, Z.; Wang, X.; Xie, J.; Li, S.; Liu, J.
\newblock A sensing system and solving method for dynamic detection of relative
  pose of hydraulic support group.
\newblock {\em Measurement} {\bf 2025}, {\em 243},~116145.

\bibitem[Ge et~al.(2020)Ge, Xie, Wang, Liu, and Shi]{ge2020virtual}
Ge, X.; Xie, J.; Wang, X.; Liu, Y.; Shi, H.
\newblock A virtual adjustment method and experimental study of the support
  attitude of hydraulic support groups in propulsion state.
\newblock {\em Measurement} {\bf 2020}, {\em 158},~107743.

\bibitem[Xie et~al.(2022)Xie, Liu, and Wang]{xie2022framework}
Xie, J.; Liu, S.; Wang, X.
\newblock Framework for a closed-loop cooperative human Cyber-Physical System
  for the mining industry driven by VR and AR: MHCPS.
\newblock {\em Comput. Ind. Eng.} {\bf 2022}, {\em
  168},~108050.

\bibitem[Hou et~al.(2023)Hou, Lu, Gao, Jiang, and Zhang]{hou2023interactive}
Hou, S.; Lu, X.; Gao, W.; Jiang, S.; Zhang, X.
\newblock Interactive physically based simulation of roadheader robot.
\newblock {\em Arab. J. Sci. Eng.} {\bf 2023}, {\em
  48},~2441--2454.

\bibitem[Xie et~al.(2022)Xie, Ge, Cui, and Wang]{xie2022virtual}
Xie, J.; Ge, F.; Cui, T.; Wang, X.
\newblock A virtual test and evaluation method for fully mechanized mining
  production system with different smart levels.
\newblock {\em Int. J. Coal Sci. Technol.} {\bf 2022},
  {\em 9},~41.

\bibitem[Quigley et~al.(2009)Quigley, Conley, Gerkey, Faust, Foote, Leibs,
  Wheeler, Ng, et~al.]{quigley2009ros}
Quigley, M.; Conley, K.; Gerkey, B.; Faust, J.; Foote, T.; Leibs, J.; Wheeler,
  R.; Ng, A.Y.;  Berger, E.
\newblock ROS: An open-source Robot Operating System.
\newblock In Proceedings of the ICRA Workshop on Open Source Software, Kobe, \hl{Japan, 12--17 May} 2009; Volume~3, p.~5.

\bibitem[{Open Source Robotics Foundation}(2019)]{sdformat}
{Open Source Robotics Foundation}. {SDFormat: Simulation Description Format}. 2019.   Available online: \url{http://sdformat.org}.

\bibitem[Todorov et~al.(2012)Todorov, Erez, and Tassa]{todorov2012mujoco}
Todorov, E.; Erez, T.; Tassa, Y.
\newblock Mujoco: A physics engine for model-based control.
\newblock In \emph{Proceedings of the 2012 IEEE/RSJ International Conference on
  Intelligent Robots and Systems}; IEEE: \hl{New York, NY, USA}, 2012; pp. 5026--5033.

\bibitem[Tedrake and the Drake Development~Team(2019)]{drake}
Tedrake, R.; the Drake Development~Team.
\newblock \hl{Drake: Model-based design and verification for robotics. Available online: https://drake.mit.edu.}  



\bibitem[Tian et~al.(2018)Tian, Wang, and Wu]{tian2018kinematic}
Tian, J.; Wang, S.; Wu, M.
\newblock Kinematic models and simulations for trajectory planning in the
  cutting of Spatially-Arbitrary crosssections by a robotic roadheader.
\newblock {\em Tunn. Undergr. Space Technol.} {\bf 2018}, {\em
  78},~115--123.

\bibitem[Gosselin and Schreiber(2018)]{gosselin2018redundancy}
Gosselin, C.; Schreiber, L.T.
\newblock Redundancy in parallel mechanisms: A review.
\newblock {\em Appl. Mech. Rev.} {\bf 2018}, {\em 70},~010802.

\bibitem[Liu et~al.(2024)Liu, Zhou, Qiao, and Zhu]{liu2024relative}
Liu, P.; Zhou, H.; Qiao, X.; Zhu, Y.
\newblock On the Relative Kinematics and Control of Dual-Arm Cutting Robots for
  a Coal Mine.
\newblock {\em Actuators} {\bf 2024}, {\em 13},~157.
\newblock {\url{https://doi.org/10.3390/act13050157}}.

\bibitem[Wang et~al.(2024)Wang, Xie, Zheng, Wang, and Wang]{wang2024method}
Wang, Y.; Xie, J.; Zheng, Z.; Wang, Y.; Wang, X.
\newblock Method for Collision Relationship of Hydraulic Supports Considering
  Multilayer Sensing Errors.
\newblock {\em IEEE Sensors J.} {\bf 2024}, \hl{\emph{24}, 30895--30908}.

\bibitem[Nordmann et~al.(2014)Nordmann, Hochgeschwender, and
  Wrede]{nordmann2014survey}
Nordmann, A.; Hochgeschwender, N.; Wrede, S.
\newblock A survey on domain-specific languages in robotics.
\newblock In \emph{Proceedings of the International Conference on Simulation,
  Modeling, and Programming for Autonomous Robots}; Springer: \hl{Berlin/Heidelberg, Germany},  2014; pp. 195--206.

\bibitem[Qiu et~al.(2024)Qiu, Song, and Wan]{qiu2024describing}
Qiu, N.; Song, C.; Wan, F.
\newblock Describing Robots from Design to Learning: Towards an Interactive
  Lifecycle Representation of Robots.
\newblock In \emph{Proceedings of the 2024 International Conference on Advanced
  Robotics and Mechatronics (ICARM)}; IEEE: \hl{New York, NY, USA},  2024; pp. 1081--1086.

\bibitem[Ivanou et~al.(2021)Ivanou, Mikhel, and Savin]{ivanou2021robot}
Ivanou, M.; Mikhel, S.; Savin, S.
\newblock Robot description formats and approaches.
\newblock In \emph{Proceedings of the 2021 International Conference "Nonlinearity,
  Information and Robotics" (NIR)}; IEEE: \hl{New York, NY, USA},  2021; pp. 1--5.

\bibitem[Tola and Corke(2023)]{tola2023understanding}
Tola, D.; Corke, P.
\newblock Understanding urdf: A survey based on user experience.
\newblock In \emph{Proceedings of the 2023 IEEE 19th International Conference on
  Automation Science and Engineering (CASE)}; IEEE: \hl{New York, NY, USA},  2023; pp. 1--7.

\bibitem[Macenski et~al.(2022)Macenski, Foote, Gerkey, Lalancette, and
  Woodall]{macenski2022robot}
Macenski, S.; Foote, T.; Gerkey, B.; Lalancette, C.; Woodall, W.
\newblock Robot operating system 2: Design, architecture, and uses in the wild.
\newblock {\em Sci. Robot.} {\bf 2022}, {\em 7},~eabm6074.

\bibitem[Chignoli et~al.(2024)Chignoli, Slotine, Wensing, and
  Kim]{chignoli2024urdf+}
Chignoli, M.; Slotine, J.J.; Wensing, P.M.; Kim, S.
\newblock URDF+: An Enhanced URDF for Robots with Kinematic Loops.
\newblock In \emph{Proceedings of the 2024 IEEE-RAS 23rd International Conference on
  Humanoid Robots (Humanoids)}; IEEE: \hl{New York, NY, USA},  2024; pp. 197--204.

\bibitem[Batto et~al.(2025)Batto, De~Matte{\"\i}s, and
  Mansard]{batto2025extended}
Batto, V.; De~Matte{\"\i}s, L.; Mansard, N.
\newblock Extended URDF: Accounting for parallel mechanism in robot
  description.
\newblock In \emph{Proceedings of the International Conference on Robotics in
  Alpe-Adria Danube Region}; Springer: \hl{Berlin/Heidelberg, Germany}, 2025; pp. 332--340.

\bibitem[Siciliano and Khatib(2016)]{siciliano2016robotics}
Siciliano, B.; Khatib, O.
\newblock Robotics and the handbook. In {\em Springer Handbook of Robotics};
  Springer: \hl{Berlin/Heidelberg, Germany},  2016; pp. 1--6.

\bibitem[Aristidou et~al.(2018)Aristidou, Lasenby, Chrysanthou, and
  Shamir]{aristidou2018inverse}
Aristidou, A.; Lasenby, J.; Chrysanthou, Y.; Shamir, A.
\newblock Inverse kinematics techniques in computer graphics: A survey.
\newblock In \emph{Proceedings of the Computer Graphics Forum}; Wiley Online Library: \hl{Hoboken, NJ, USA}, 2018; Volume~37, pp. 35--58.

\bibitem[Le~Naour et~al.(2019)Le~Naour, Courty, and Gibet]{le2019kinematics}
Le~Naour, T.; Courty, N.; Gibet, S.
\newblock Kinematics in the metric space.
\newblock {\em Comput. Graph.} {\bf 2019}, {\em 84},~13--23.

\bibitem[Staicu(2019)]{staicu2019dynamics}
Staicu, S.
\newblock {\em Dynamics of Parallel Robots}; Springer: \hl{Berlin/Heidelberg, Germany},  2019.

\bibitem[Choi et~al.(2021)Choi, Lee, Kim, Lim, and Kwon]{choi2021kinematic}
Choi, K.B.; Lee, J.; Kim, G.; Lim, H.; Kwon, S.
\newblock Kinematic Analysis of a Parallel Manipulator Driven by Perpendicular
  Linear Actuators.
\newblock {\em Actuators} {\bf 2021}, {\em 10},~262.
\newblock {\url{https://doi.org/10.3390/act10100262}}.

\bibitem[Zhao and Badler(1994)]{zhao1994inverse}
Zhao, J.; Badler, N.I.
\newblock Inverse kinematics positioning using nonlinear programming for highly
  articulated figures.
\newblock {\em ACM Trans. Graph. (TOG)} {\bf 1994}, {\em
  13},~313--336.

\bibitem[Erleben and Andrews(2019)]{erleben2019solving}
Erleben, K.; Andrews, S.
\newblock Solving inverse kinematics using exact Hessian matrices.
\newblock {\em Comput. Graph.} {\bf 2019}, {\em 78},~1--11.

\bibitem[Koenig and Howard(2004)]{koenig2004gazebo}
Koenig, N.; Howard, A.
\newblock Design and use paradigms for gazebo, an open-source multi-robot
  simulator.
\newblock In \emph{Proceedings of the 2004 IEEE/RSJ International Conference on
  Intelligent Robots and Systems (IROS) (IEEE Cat. No. 04CH37566)}; IEEE: \hl{New York, NY, USA},  2004;   Volume~3, pp. 2149--2154.

\bibitem[Coumans and Bai(2021)]{coumans2021pybullet}
Coumans, E.; Bai, Y.
\newblock PyBullet, a Python Module for Physics Simulation for Games, Robotics
  and Machine Learning.  2021. Available online: \url{http://pybullet.org}.


\bibitem[{NVIDIA}(2021)]{nvidia_isaac_sim}
{NVIDIA}.
\newblock {NVIDIA Isaac Sim}: Robotics Simulation and Synthetic Data
  Generation.  2021. Available online: \url{https://developer.nvidia.com/isaac/sim}.

\bibitem[Lee et~al.(2018)Lee, Grey, Ha, Kunz, Jain, Ye, Srinivasa, Stilman, and
  Liu]{lee2018dart}
Lee, J.; Grey, M.X.; Ha, S.; Kunz, T.; Jain, S.; Ye, Y.; Srinivasa, S.S.;
  Stilman, M.; Liu, C.K.
\newblock {DART}: Dynamic Animation and Robotics Toolkit.
\newblock {\em J. Open Source Softw.} {\bf 2018}, {\em 3},~500.
\newblock {\url{https://doi.org/10.21105/joss.00500}}.

\bibitem[Maloisel et~al.(2023)Maloisel, Schumacher, Knoop, Grandia, and
  B{\"a}cher]{maloisel2023optimal}
Maloisel, G.; Schumacher, C.; Knoop, E.; Grandia, R.; B{\"a}cher, M.
\newblock Optimal design of robotic character kinematics.
\newblock {\em ACM Trans. Graph. (TOG)} {\bf 2023}, {\em 42},~1--15.

\bibitem[Sciavicco and Siciliano(2012)]{sciavicco2012modelling}
Sciavicco, L.; Siciliano, B.
\newblock {\em Modelling and Control of Robot Manipulators}; Springer Science
  \& Business Media: \hl{Berlin/Heidelberg, Germany},  2012.

\bibitem[Norton and Han(2007)]{norton2007design}
Norton, R.L.; Han, J.
\newblock {\em Design of Machinery}; Higher Education Press: \hl{Beijing, China},  2007.

\bibitem[McCarthy and Soh(2010)]{mccarthy2010geometric}
McCarthy, J.M.; Soh, G.S.
\newblock {\em Geometric Design of Linkages}; Springer Science \&
  Business Media: \hl{Berlin/Heidelberg, Germany},  2010; Volume~11.

\bibitem[Ebrahimi and Payvandy(2015)]{ebrahimi2015efficient}
Ebrahimi, S.; Payvandy, P.
\newblock Efficient constrained synthesis of path generating four-bar
  mechanisms based on the heuristic optimization algorithms.
\newblock {\em Mech. Mach. Theory} {\bf 2015}, {\em 85},~189--204.

\bibitem[Tarjan(1972)]{tarjan1972depth}
Tarjan, R.
\newblock Depth-first search and linear graph algorithms.
\newblock {\em SIAM J. Comput.} {\bf 1972}, {\em 1},~146--160.

\bibitem[Pang and Shi(2025)]{pang2025intelligent}
Pang, Y.; Shi, Y.
\newblock Intelligent control algorithms for posture and height control of
  four-leg hydraulic supports.
\newblock {\em Sci. Rep.} {\bf 2025}, {\em 15},~3010.

\bibitem[Mu et~al.(2024)Mu, Xie, and Yang]{mu2024research}
Mu, M.; Xie, B.; Yang, Y.
\newblock Research on attitude analysis of hydraulic support based on column
  length.
\newblock {\em Stroj.-Vestn.-J. Mech. Eng.} {\bf
  2024}, {\em 70},~293--310.

\bibitem[Khadim et~al.(2023)Khadim, Hagh, Jiang, Pyrh{\"o}nen, Jaiswal,
  Zhidchenko, Yu, Kurvinen, Handroos, and Mikkola]{khadim2023experimental}
Khadim, Q.; Hagh, Y.S.; Jiang, D.; Pyrh{\"o}nen, L.; Jaiswal, S.; Zhidchenko,
  V.; Yu, X.; Kurvinen, E.; Handroos, H.; Mikkola, A.
\newblock Experimental investigation into the state estimation of a forestry
  crane using the unscented Kalman filter and a multiphysics model.
\newblock {\em Mech. Mach. Theory} {\bf 2023}, {\em 189},~105405.
\newblock {\url{https://doi.org/10.1016/j.mechmachtheory.2023.105405}}.

\bibitem[Lynch and Park(2017)]{lynch2017modern}
Lynch, K.M.; Park, F.C.
\newblock {\em Modern Robotics}; Cambridge University Press: \hl{Cambridge, UK},  2017.

\bibitem[Kiefer(1953)]{kiefer1953sequential}
Kiefer, J.
\newblock Sequential Minimax Search for a Maximum.
\newblock {\em Proc. Am. Math. Soc.} {\bf 1953},
  {\em 4},~502--506.
\newblock {\url{https://doi.org/10.1090/S0002-9939-1953-0055639-3}}.

\bibitem[Conn et~al.(2009)Conn, Scheinberg, and Vicente]{conn2009introduction}
Conn, A.R.; Scheinberg, K.; Vicente, L.N.
\newblock {\em Introduction to Derivative-Free Optimization};  {MPS-SIAM Series on Optimization}; Society for Industrial and Applied
  Mathematics: Philadelphia, PA, \hl{USA}, 2009; Volume~8.
\newblock {\url{https://doi.org/10.1137/1.9780898718768}}.


\bibitem[Guennebaud et~al.(2010)Guennebaud, Jacob, et~al.]{eigenweb}
Guennebaud, G.; Jacob, B.
\newblock Eigen, 2010, Version 3.4.0. Available online: https://eigen.tuxfamily.org. 

\bibitem[Li et~al.(2023)Li, Xu, Ye, Ren, and Liu]{li2023difffr}
Li, Z.; Xu, Q.; Ye, X.; Ren, B.; Liu, L.
\newblock Difffr: Differentiable sph-based fluid-rigid coupling for rigid body
  control.
\newblock {\em ACM Trans. Graph. (TOG)} {\bf 2023}, {\em 42},~1--17.

\bibitem[Geilinger et~al.(2020)Geilinger, Hahn, Zehnder, B{\"a}cher,
  Thomaszewski, and Coros]{geilinger2020add}
Geilinger, M.; Hahn, D.; Zehnder, J.; B{\"a}cher, M.; Thomaszewski, B.; Coros,
  S.
\newblock Add: Analytically differentiable dynamics for multi-body systems with
  frictional contact.
\newblock {\em ACM Trans. Graph. (TOG)} {\bf 2020}, {\em 39},~1--15.

\bibitem[{Pixar Animation Studios}(2016)]{usd_pixar}
{Pixar Animation Studios}. {Universal Scene Description (OpenUSD)}. 2016.  Available online:  \url{https://openusd.org/} .

\bibitem[Brent(1973)]{brent1973algorithms}
Brent, R.P.
\newblock {\em Algorithms for Minimization without Derivatives}; Prentice-Hall:
  Englewood Cliffs, NJ, \hl{USA}, 1973.

\bibitem[Hou et~al.(2021)Hou, Thomas, and Lu]{hou2021vrmenudesigner}
Hou, S.; Thomas, B.H.; Lu, X.
\newblock VRMenuDesigner: A toolkit for automatically generating and modifying
  VR menus.
\newblock In \emph{Proceedings of the 2021 IEEE International Conference on
  Artificial Intelligence and Virtual Reality (AIVR)}; IEEE: \hl{New York, NY, USA},  2021; pp.~154--159.

\end{thebibliography}
\end{document}